%% file: nightlightsVsGDP_arxiv.tex
\documentclass[a4paper,12pt]{article}

\PassOptionsToPackage{disable}{todonotes} 
\usepackage{latexstyle}
\usepackage{natbib}
\usepackage{hyperref}
\hypersetup{hidelinks}
\usepackage{subcaption}
\usepackage{lscape}
\usepackage[margin=2cm]{geometry}
\usepackage{eurosym}
\usepackage{bbding}
\usepackage{multirow}
\usepackage{booktabs}
\usepackage{longtable}
\usepackage{caption}
\usepackage{xcolor}
\usepackage{array}
\usepackage{pdflscape}
\usepackage{comment}
\usepackage{amsthm}

\DeclareMathOperator*{\plim}{p-lim}

\usepackage{setspace}
\makeatletter
\g@addto@macro\normalsize{%
  \setlength{\abovedisplayskip}{6pt plus 2pt minus 2pt}%
  \setlength{\belowdisplayskip}{6pt plus 2pt minus 2pt}%
  \setlength{\abovedisplayshortskip}{0pt plus 2pt}%
  \setlength{\belowdisplayshortskip}{4pt plus 2pt minus 2pt}%
}
\makeatother

\graphicspath{{figures/}}

\title{Aggregation Bias in Proxy Measurement:\\ Nighttime Lights and Local Economic Activity}

\author{Davide Fiaschi\thanks{Davide Fiaschi (corresponding author), University of Pisa, Via Ridolfi, 10, 56124 Pisa, Italy (e-mail: davide.fiaschi@unipi.it).}
\and Angela Parenti\thanks{Angela Parenti, University of Pisa, Via Ridolfi, 10, 56124 Pisa, Italy (e-mail: angela.parenti@unipi.it).}
\and Cristiano Ricci\thanks{Cristiano Ricci, University of Pisa, Via Ridolfi, 10, 56124 Pisa, Italy (e-mail: cristiano.ricci@unipi.it).}
}

\date{\today}

\begin{document}

\maketitle

\begin{abstract}
This paper studies when high-resolution signals aggregated to administrative units can recover unobserved local economic activity. We develop a reverse-regression framework for signals generated by activity but used to predict it at coarser spatial supports. The main theorem decomposes predictive elasticity into elementary elasticity, reverse-regression attenuation, and a spatial aggregation term driven by unit size and within-unit dispersion, showing aggregation pulls elasticities toward one. Monte Carlo evidence confirms the decomposition and clarifies transferability conditions. Applications to VIIRS nighttime lights and local GDP or income in Brazil, Italy, the United States, Indonesia, and Kenya support local calibration mainly in richer contexts.
\end{abstract}

\noindent \textit{JEL Classification Numbers}: C23, C52, R12, R15

\noindent \textit{Keywords}: aggregation bias, attenuation bias, spatial aggregation, proxy measurement, out-of-sample validation, Black Marble VIIRS

\bigskip

\noindent {\scriptsize \textit{Acknowledgements}: We thank Piotr Wójcik and the participants at presentations given at ERSA 2023, SEA 2024, SIE 2024, AISRE 2025, IBEO 2026, SEW 2026 for useful comments and suggestions. The usual disclaimer applies. The authors have been supported by the Italian Ministry of University and Research (MIUR), in the framework of PRIN project 2017FKHBA8 001, by the project MAPPE (GRINS, PNRR), and by the University of Pisa, in the framework of the PRA Project PRA\_2022\_86.}


\clearpage

\section{Introduction \label{sec:introduction}}

Many empirical questions require local measures of economic activity at a spatial resolution at which official statistics are unavailable, incomplete, or not comparable across countries. A common response is to use a high-resolution signal, satellite imagery, remotely sensed data, or other spatially detailed measurements, as a proxy for the missing economic variable. This strategy creates a distinctive econometric problem. The signal is typically generated by the underlying economic outcome, but the empirical objective is to use the signal to predict that outcome. Moreover, the signal is observed on a fine spatial grid and then aggregated to administrative units whose size, shape, and internal heterogeneity differ across places. Proxy measurement is therefore not only a question of predictive fit; it is also a question of reverse regression, measurement error, and spatial aggregation.

Nighttime lights (NTL) provide a leading example of this problem. Since the pioneering contribution of \citet{nordhaus2006geography}, an expanding literature has used satellite-recorded luminosity to augment official income measures, study growth at sub- and supranational scales \citep{henderson2012measuring}, audit GDP statistics where institutions are weak \citep{martinez2022much}, measure poverty and wealth where household data are scarce \citep{jean2016combining,abbes2024deepwealth}, and construct high-resolution economic maps. The appeal is clear: NTL are global, spatially detailed, comparable across borders, and repeatedly observed. They appear to provide exactly the local and frequent information that official statistics often lack.

However, ``using lights as a proxy'' is not a single empirical operation. Lights have been used to predict levels of GDP or income, growth rates, rankings, poverty, wealth, and local well-being \citep{henderson2012measuring,galimberti2020forecasting,jean2016combining,abbes2024deepwealth,huber2024economic}, and the light--activity relationship varies with the spatial support (country to grid cell) and with development, sectoral composition, electrification, informality, geography, and time \citep{galimberti2020forecasting,bluhm2022can,gibson2024luminosity,lehnert2023proxying,huber2024economic}. Empirically, estimated GDP--NTL elasticities vary widely across studies, differ across countries, become unstable at large geographies, and change across aggregation levels and NTL sources \citep{galimberti2020forecasting,bluhm2022can,gibson2024luminosity}; in developing settings, where large shares of territory may be unlit, NTL may also carry non-classical measurement error \citep{huber2024economic}. The relevant question is therefore not whether lights are a universally valid proxy, but what they proxy well, at which scale, in which development context, and with how much uncertainty.

This paper argues that part of this instability is the outcome of two biases that are usually discussed separately. First, although economic activity generates lights, empirical proxy regressions typically place lights on the right-hand side. The resulting coefficient is therefore a \textit{predictive reverse-regression elasticity}, not the structural elasticity of light with respect to activity.\footnote{We use ``reverse regression'' descriptively, to signal that the direction of the fitted regression (activity on lights) is the reverse of the data-generating direction (lights emitted by activity). This is distinct from the classical reverse-regression device of \citet{wald1940} and \citet{durbin1954}, in which the regression of the regressor on the outcome is used to \emph{bound} an errors-in-variables coefficient. Here the reverse direction is the object of interest, and the attenuation it induces is a feature to be characterised, not a bound to be exploited.} When lights are an imperfect signal of activity, this predictive elasticity is attenuated by the noise component of the luminosity signal. Second, observed units are spatial aggregates of more elementary locations. If the elementary light--activity relationship is nonlinear, the elasticity estimated on aggregates differs systematically from the elementary elasticity. We characterise these two forces in a single probability-limit decomposition. The decomposition shows that reverse-regression attenuation is strongest at fine scales, whereas spatial aggregation becomes more important at coarser scales and contracts elasticities toward one. Unit elasticity is therefore the only aggregation-invariant benchmark.

This result links the NTL literature to classic problems in spatial statistics and econometrics: the change-of-support and modifiable areal unit problems, under which estimates depend on the spatial support at which variables are observed or aggregated \citep{cressie1996change}; and the long-recognised distinction between measurement error, grouping, and aggregation: suitably constructed grouping estimators can address errors-in-variables bias, while grouping need not itself introduce additional bias \citep{wald1940,durbin1954}; by contrast, aggregation over heterogeneous units can cause aggregate coefficients to differ from the underlying micro-level parameters \citep{theil1954,stoker1993}. In the NTL setting these mechanisms coexist, so the theoretical framework is not specific to lights: any finely observed signal used to measure an economic variable on heterogeneous administrative units inherits the same tension between signal noise, regression direction, and spatial support.

The empirical analysis combines high-resolution Black Marble VIIRS data with official local GDP or income data for Brazil, Italy, the United States, Indonesia, and Kenya over 2012--2019, the common period over which consistently processed nightlight data and comparable local GDP or income series are available. Throughout the paper, ``local economic activity'' denotes the real outcome available at the relevant spatial support. This is local GDP for Brazil, the United States, Kenya, and Indonesia, and taxable personal income for Italian municipalities. We therefore interpret the Italian municipal estimates as income-density estimates, while using NUTS2 and NUTS3 GDP data to assess whether the income-based elasticity is informative about GDP-based activity at coarser spatial supports. The sample includes high-income settings where local accounts are relatively strong, an upper-middle-income economy with large territorial disparities, a lower-middle-income archipelagic economy observed at two nested administrative scales, and a lower-income agricultural economy where lights are most likely to miss part of production.

The results support a conditional, rather than universal, use of NTL. In Italy and the United States, and approximately in Brazil, the density elasticity of economic activity with respect to NTL is close to one at the finest scales, so proportional changes in lights track proportional changes in local GDP or income density and level prediction is feasible once the light-to-activity conversion is locally anchored. Indonesia and Kenya differ: their elasticities are substantially below one and their level conversions much higher. Indonesia, observed at two nested scales (kabupaten/kota and provinces), is especially informative---the estimated elasticity rises toward one under aggregation, as the mechanism predicts when the finest-scale elasticity is below one. A near-unit elasticity is therefore an empirical property of some contexts, not a mechanical feature of the data.

The out-of-sample exercises show both the promise and the limits of the proxy. Cross-sectional, cross-scale, and temporal exercises show that NTL improve prediction over models with only area and time effects, but errors widen in the lowest luminosity ranges and transferring coefficients across scales requires re-anchoring the intercept. We translate the benchmark-country evidence into a pooled finest-level proxying rule for Brazil, Italy, and the United States: a transparent first approximation for developed and upper-middle-income economies with near-unit elasticity, not to be applied mechanically to lower-income, agricultural, informal, or weakly illuminated settings without further calibration.

In summary, the paper makes three contributions. First, it provides an analytical framework for proxy regressions based on aggregated high-resolution signals. The main theorem decomposes the probability limit of the predictive elasticity into the elementary elasticity, a reverse-regression attenuation component, and a spatial aggregation component driven by unit size and within-unit dispersion. Second, it uses Monte Carlo experiments, in which the elementary relationship is known and aggregation is controlled, to assess the finite-sample implications of the theorem. The simulations show when attenuation or aggregation dominates, why density specifications with area controls reduce but do not eliminate aggregation bias, and why slopes and intercepts have different transferability properties. Third, it evaluates these implications empirically using a common NTL source, a common econometric specification, and official local economic accounts for five heterogeneous countries.

The paper is organised as follows. Section~\ref{sec:montecarlo} develops the methodological framework and reports the Monte Carlo evidence. Section~\ref{sec:empirical} implements the framework empirically: it describes the data, estimates the predictive elasticity and level conversion across countries and scales, evaluates out-of-sample transferability, and reports the pooled finest-level calibration. Section~\ref{sec:concludingRemarks} concludes. The Appendix contains the proof, while the Online Appendix contains Monte Carlo tables, robustness checks, additional validation results, and supplementary descriptive material.

\section{Methodological framework: reverse regression and spatial aggregation}
\label{sec:montecarlo}

This section develops the econometric framework that underpins the empirical analysis. The problem is not only that nighttime lights are an imperfect signal of economic activity. It is that the signal is generated at a fine spatial support, observed with noise, aggregated to heterogeneous administrative units, and then used in the reverse direction to predict the economic variable that generated it. The object of interest is therefore a predictive elasticity: the slope of the best linear projection of local economic activity on observed luminosity at a given spatial support. This coefficient is useful for proxy construction and out-of-sample validation, but it should not be interpreted as a structural light-generation parameter or as a causal effect of lights on economic activity.

The framework proceeds in three steps. First, we define an elementary-level light--activity relationship. Second, we derive the aggregate relationship that arises when elementary locations are grouped into larger spatial units. This yields a theorem on the probability-limit decomposition of the predictive elasticity into reverse-regression attenuation and spatial aggregation bias. Third, we use Monte Carlo experiments to assess the finite-sample relevance of the theorem and to motivate the density-with-area specification estimated on the real data in Section~\ref{sec:personalIncome}.

\subsection{Analytical framework}
\label{subsec:mcAnalyticalBenchmark}

The framework separates two possible sources of bias that are otherwise confounded in regressions of economic activity on nighttime lights. The first is a reverse-regression bias: lights are generated by activity with an idiosyncratic component, but the empirical specification uses lights to predict activity. The second is an aggregation bias: sums of nonlinear elementary relationships need not preserve the elementary elasticity once locations are combined into heterogeneous administrative units.

\paragraph{Elementary relationship and aggregation}

Consider $K$ elementary locations grouped into $P$ spatial units $m_p$, with $\abs{m_p}$ denoting the number of elementary locations contained in unit $p$. Nighttime light emissions follow a log-linear function of output with an independent shock:
\begin{equation}
	ntl_i = y_i^{1/\mu} \cdot \exp(-\phi/\mu) \cdot \exp(-\eps_i/\mu),
	\label{eq:generatorNTL}
\end{equation}
where $\mu$ is the true elasticity, $\phi$ is the scale parameter, and $\eps_i$ an idiosyncratic shock.
At the elementary level, write the inverse constant-elasticity relationship as
\begin{equation}
	y_i = e^{\phi} ntl_i^{\mu} e^{\eps_i},
	\label{eq:mcMicroAnalytical}
\end{equation}
where $y_i$ and $ntl_i$ denote output and nighttime lights in elementary location $i$, $\mu>0$ is the elementary elasticity of output with respect to lights, $\phi$ is a common scale parameter, and $\eps_i$ is an idiosyncratic output-to-light shock. We assume that the shocks $\eps_i$ are i.i.d. with $\EE{\eps_i}=0$ and $\mathrm{Var}(\eps_i)=\sigma_\eps^2$, and that $\eps_i$ and $y_i$ are independent. Equivalently, with $\nu_i\equiv\log y_i$, the physical direction is
\begin{equation}
	\log ntl_i=\frac{1}{\mu}(\nu_i-\phi-\eps_i).
	\label{eq:mcLightGeneration}
\end{equation}
This ordering matters. The maintained exogeneity condition is imposed in the light-generation equation, not in the reverse predictive regression. Hence, even when lights are generated from activity with an exogenous idiosyncratic component, regressing activity on observed lights produces a reverse-regression attenuation term whenever lights contain such a component.

Let $Y_p=\sum_{i\in m_p}y_i$, $NTL_p=\sum_{i\in m_p}ntl_i$, and $s_{ip}=ntl_i/NTL_p$ be the within-unit light share. Since $ntl_i=s_{ip}NTL_p$,
\begin{align}
Y_p
&= e^\phi\sum_{i\in m_p}ntl_i^\mu e^{\eps_i}
= e^\phi NTL_p^\mu\sum_{i\in m_p}s_{ip}^\mu e^{\eps_i}, \nonumber\\
\log Y_p
&= \phi + \mu\log NTL_p + \Lambda_p,
\qquad
\Lambda_p\equiv
\log\!\left(\sum_{i\in m_p}s_{ip}^{\mu}e^{\eps_i}\right).
\label{eq:mcExactAggregation}
\end{align}
Aggregation therefore affects the slope of a linear regression only through $\Lambda_p$, a term that depends on within-unit light shares and on the output-to-light shocks. For a cross section with $P$ aggregate units, let $\hat\beta_P$ denote the OLS slope from regressing $\log Y_p$ on $\log NTL_p$ across units $p=1,\ldots,P$ at the chosen aggregation scale. Under the usual large-cross-section regularity conditions, as $P\to\infty$,
\begin{equation}
\plim_{P\to\infty}\,\hat\beta_P = \mu + \frac{\mathrm{Cov}(\Lambda_p,\log NTL_p)}{\mathrm{Var}(\log NTL_p)}.
\label{eq:mcPlim}
\end{equation}
Throughout this subsection, $\mathrm{Cov}(\cdot,\cdot)$ and $\mathrm{Var}(\cdot)$ refer to the corresponding cross-unit population moments at the chosen aggregation scale.
Eq.~\eqref{eq:mcPlim} is the central object: the overall bias is the projection of the aggregate residual $\Lambda_p$ on aggregate lights.

\paragraph{Decomposing attenuation and aggregation}
To compute and interpret this projection, we decompose both objects entering the covariance: the residual $\Lambda_p$ and the regressor $\log NTL_p$. This separates the two sources of bias: the aggregation channel, which comes from summing nonlinear elementary relationships, and the reverse-regression channel, which comes from the fact that lights are generated by activity but used as a predictor of activity.

First, decompose the aggregate residual. Define
\begin{equation}
\Lambda_p^{\mathrm{agg}} \equiv \log\!\left(\sum_{i\in m_p}s_{ip}^{\mu}\right),\qquad
w_{ip} \equiv \frac{s_{ip}^{\mu}}{\sum_{j\in m_p}s_{jp}^{\mu}},\qquad
\Lambda_p^{\mathrm{noise}} \equiv \log\!\left(\sum_{i\in m_p}w_{ip}e^{\eps_i}\right).
\label{eq:mcResidualComponents}
\end{equation}
Since
\begin{equation*}
\sum_{i\in m_p}s_{ip}^{\mu}e^{\eps_i} = \left(\sum_{i\in m_p}s_{ip}^{\mu}\right)\left(\sum_{i\in m_p}w_{ip}e^{\eps_i}\right),
\end{equation*}
we have the exact algebraic split
\begin{equation}
\Lambda_p = \Lambda_p^{\mathrm{agg}} + \Lambda_p^{\mathrm{noise}}.
\label{eq:mcResidualSplit}
\end{equation}
The aggregation component depends only on within-unit light shares and captures the mechanical effect of aggregation. The noise component captures the contribution of output-to-light shocks to the aggregate residual.

Second, the regressor splits, to first order, into a noise-free activity index and an aggregate shock,
\begin{equation}
\log NTL_p = A_p-\frac{1}{\mu}\tilde\eps_p+O_p(\sigma_\eps^2),
\label{eq:mcLogNTLExpansion}
\end{equation}
where $A_p\equiv\log\!\left(\sum_{i\in m_p}\exp\{(\nu_i-\phi)/\mu\}\right)$ is the aggregate light that would be observed in the absence of shocks, $\tilde\eps_p\equiv\sum_{i\in m_p}s_{ip}^{(0)}\eps_i$ is the first-order aggregate shock entering the regressor, and $s_{ip}^{(0)}=\exp\{(\nu_i-\phi)/\mu\}\big/\sum_{j\in m_p}\exp\{(\nu_j-\phi)/\mu\}$ is the noise-free within-unit light share; the expansion is derived in the Appendix. Under the maintained independence between activity and shocks, $A_p$ is orthogonal to $\tilde\eps_p$ to first order.

Combining the exact residual split with the local regressor split, the numerator in~\eqref{eq:mcPlim} can be read, to first order, as four terms:
\begin{align}
\mathrm{Cov}(\Lambda_p,\log NTL_p)
&\approx
\mathrm{Cov}\!\left(
\Lambda_p^{\mathrm{agg}}+
\Lambda_p^{\mathrm{noise}},
A_p-\frac{1}{\mu}\tilde\eps_p
\right) \nonumber \\
&=
\underbrace{\mathrm{Cov}(\Lambda_p^{\mathrm{agg}},A_p)}_{\text{aggregation--activity}}
-
\frac{1}{\mu}
\underbrace{\mathrm{Cov}(\Lambda_p^{\mathrm{agg}},\tilde\eps_p)}_{\text{aggregation--noise}}
\nonumber \\
&\quad+
\underbrace{\mathrm{Cov}(\Lambda_p^{\mathrm{noise}},A_p)}_{\text{noise--activity}}
-
\frac{1}{\mu}
\underbrace{\mathrm{Cov}(\Lambda_p^{\mathrm{noise}},\tilde\eps_p)}_{\text{noise--noise}}.
\label{eq:mcDoubleDecomposition}
\end{align}
The first and fourth terms are the two leading components. To see their approximate magnitudes, define the squared coefficient of variation of elementary-location lights within unit $p$ as
\begin{equation}
\mathrm{CV}_p^2 \equiv \frac{1}{\abs{m_p}}\sum_{i\in m_p}\frac{\left(ntl_i-\overline{ntl}_p\right)^2}{\overline{ntl}_p^{\,2}},\qquad \overline{ntl}_p=\frac{NTL_p}{\abs{m_p}}.
\label{eq:mcCVDef}
\end{equation}
A second-order expansion of $\Lambda_p^{\mathrm{agg}}$ around equal within-unit shares gives
\begin{equation}
\Lambda_p^{\mathrm{agg}} \approx (1-\mu)\log\abs{m_p} + \frac{\mu(\mu-1)}{2}\mathrm{CV}_p^2,
\label{eq:mcAggExpansion}
\end{equation}
so that, normalised by $\mathrm{Var}(\log NTL_p)$, the aggregation--activity term is approximately $(1-\mu)\left[\delta_P-\tfrac{\mu}{2}\rho_P\right]$, with the finite-$P$ projection coefficients
\begin{equation}
\delta_P \equiv \frac{\mathrm{Cov}(\log\abs{m_p},\log NTL_p)}{\mathrm{Var}(\log NTL_p)},\qquad
\rho_P \equiv \frac{\mathrm{Cov}(\mathrm{CV}_p^2,\log NTL_p)}{\mathrm{Var}(\log NTL_p)}.
\label{eq:mcDeltaRhoDef}
\end{equation}
Here $\delta_P$ measures the unit-size channel: larger aggregates mechanically contain more elementary locations and therefore more total lights. The coefficient $\rho_P$ measures how within-unit light dispersion covaries with aggregate lights.

The fourth term is the reverse-regression channel. With the weighted aggregate shock entering the residual, $\bar\eps_{p,w}\equiv\sum_{i\in m_p}w_{ip}\eps_i$, define the cross-sectional attenuation coefficient at the chosen aggregation scale as
\begin{equation}
\kappa_P \equiv \frac{\mathrm{Cov}(\bar\eps_{p,w},\tilde\eps_p)}{\mathrm{Var}(\mu A_p)+\mathrm{Var}(\tilde\eps_p)}.
\label{eq:mcReliability}
\end{equation}
The numerator pairs the shock aggregator entering the residual, $\bar\eps_{p,w}$, with that entering aggregate lights, $\tilde\eps_p$; the denominator is, to first order, $\mathrm{Var}(\log NTL_p)\approx\mu^{-2}\left[\mathrm{Var}(\mu A_p)+\mathrm{Var}(\tilde\eps_p)\right]$. At the elementary level $\bar\eps_{p,w}=\tilde\eps_p=\eps_i$ and $\kappa_P=\sigma_\eps^2/(\sigma_\nu^2+\sigma_\eps^2)$ (with $\sigma_\nu^2\equiv\mathrm{Var}(\nu_i)$), the classical noise-to-total-variance ratio. Since shocks enter the residual through $\bar\eps_{p,w}$ but the regressor through $-\tilde\eps_p/\mu$, the normalised noise--noise term in~\eqref{eq:mcDoubleDecomposition} equals $-\mu\kappa_P$ to first order (see the Appendix): the two powers of $\mu$ from normalising $\mathrm{Var}(\log NTL_p)$ cancel the $1/\mu$ of the regressor expansion, so attenuation is proportional to $\mu$, exactly the errors-in-variables slope $\mu(1-\kappa_P)$.

The remaining two terms in~\eqref{eq:mcDoubleDecomposition}, aggregation--noise and noise--activity, are cross terms. Under the maintained independence between activity and output-to-light shocks, they vanish to first order, up to higher-order interactions between within-unit dispersion and shocks. Hence:
\begin{equation}
	\plim_{P \to \infty}\,\hat\beta_P
	\approx
	\mu \;\underbrace{-\mu\kappa_P}_{\text{attenuation bias}}
	+ \underbrace{(1-\mu)\left[\delta_P-\tfrac{\mu}{2}\rho_P\right]}_{\text{aggregation bias}}.
	\label{eq:intuitionBias}
\end{equation}
The reasoning is based on a local approximation: it combines an exact decomposition of $\Lambda_p$ with a first-order expansion of $\log NTL_p$ and a second-order expansion of $\Lambda_p^{\mathrm{agg}}$.

The following result formalises this approximation.
\begin{teo}[Aggregation and attenuation bias]
	\label{teo:aggregation}
Assume the following conditions.
\begin{enumerate}
\item The light-generation Eq.~\eqref{eq:mcLightGeneration} holds with $\mu>0$, $\EE{\eps_i}=0$, finite second moments, and $\nu_i\perp\eps_j$ for all $i,j$.
\item Along the sequence of cross sections indexed by $P$, each observed at the chosen aggregation scale, the vector $\left(\log NTL_p,\ A_p,\ \tilde\eps_p,\ \bar\eps_{p,w},\ \log\abs{m_p},\ \mathrm{CV}_p^2\right)$ has finite second moments, with $\mathrm{Var}(\log NTL_p)>0$ and $\mathrm{Var}(\mu A_p)+\mathrm{Var}(\tilde\eps_p)>0$. The finite-$P$ projection coefficients satisfy
\begin{equation*}
\kappa_P\stackrel{P \to \infty}{\longrightarrow}\kappa_\infty,\qquad \delta_P\stackrel{P\to\infty}{\longrightarrow}\delta_\infty,\qquad \rho_P\stackrel{P\to\infty}{\longrightarrow}\rho_\infty
\end{equation*}
for some $\kappa_\infty,\delta_\infty,\rho_\infty\in\RR$.
\item Expressing the within-unit light shares as a local perturbation of the equal-share benchmark:
\begin{equation}
s_{ip}(\chi)=\abs{m_p}^{-1}\left(1+\chi d_{ip}\right),\qquad \sum_{i\in m_p}d_{ip}=0,
\label{eq:mcLocalShares}
\end{equation}
where $\chi\geq0$ is the scalar non-uniformity parameter, the directions $d_{ip}$ are uniformly bounded along the cross-section sequence:
\begin{equation*}
\sup_P\max_{1\le p\le P}\max_{i\in m_p}\abs{d_{ip}}<\infty.
\end{equation*}
\end{enumerate}
Then
\begin{equation}
\plim_{P \to \infty}\,\hat\beta_P = \mu  \underbrace{-\mu\kappa_\infty}_{\text{attenuation bias}} + \underbrace{(1-\mu)\left[\delta_\infty-\tfrac{\mu}{2}\rho_\infty\right]}_{\text{aggregation bias}} + R(\chi,\sigma_\eps),
\label{eq:mcBiasApprox}
\end{equation}
where
\begin{equation*}
R(\chi,\sigma_\eps) = O(\chi^3) + O(\sigma_\eps^2) + O(\sigma_\eps\chi).
\end{equation*}
\end{teo}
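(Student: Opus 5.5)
The proof starts from the exact probability-limit representation \eqref{eq:mcPlim}, which holds by standard OLS arguments under Assumption 2. We then insert the exact residual split \eqref{eq:mcResidualSplit}, $\Lambda_p=\Lambda_p^{\mathrm{agg}}+\Lambda_p^{\mathrm{noise}}$. The task reduces to evaluating the two normalised covariances $\mathrm{Cov}(\Lambda_p^{\mathrm{agg}},\log NTL_p)/\mathrm{Var}(\log NTL_p)$ and $\mathrm{Cov}(\Lambda_p^{\mathrm{noise}},\log NTL_p)/\mathrm{Var}(\log NTL_p)$ up to the stated remainder. The finite-$P$ limits $\kappa_P,\delta_P,\rho_P$ can then be replaced by $\kappa_\infty,\delta_\infty,\rho_\infty$ using Assumption 2.

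\textbf{Aggregation term.} Using the share parametrisation \eqref{eq:mcLocalShares}, we Taylor-expand $\sum_i s_{ip}^\mu=\abs{m_p}^{1-\mu}\abs{m_p}^{-1}\sum_i(1+\chi d_{ip})^\mu$ in $\chi$. The linear term vanishes because $\sum_i d_{ip}=0$. The quadratic term is $\tfrac{\mu(\mu-1)}{2}\chi^2\abs{m_p}^{-1}\sum_i d_{ip}^2$, and this equals $\tfrac{\mu(\mu-1)}{2}\mathrm{CV}_p^2$, since $ntl_i/\overline{ntl}_p-1=\chi d_{ip}$. Taking logs gives \eqref{eq:mcAggExpansion} with a remainder $O(\chi^3)$. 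This remainder is uniform in $p$ and $P$ because the $d_{ip}$ are uniformly bounded, so $1+\chi d_{ip}$ stays bounded away from zero for small $\chi$. Dividing by $\mathrm{Var}(\log NTL_p)$ and using Cauchy–Schwarz to bound the remainder's covariance yields $(1-\mu)[\delta_P-\tfrac{\mu}{2}\rho_P]+O(\chi^3)$. This is computed directly against $\log NTL_p$, which is exactly how $\delta_P,\rho_P$ are defined, so the regressor need not be split here. The aggregation--noise cross term is therefore already absorbed.

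\textbf{Noise term.} Write $\Lambda_p^{\mathrm{noise}}=\log\sum_i w_{ip}e^{\eps_i}=\bar\eps_{p,w}+O_p(\sigma_\eps^2)$ by expanding $e^{\eps}$ and $\log(1+x)$. In parallel, we derive \eqref{eq:mcLogNTLExpansion}: from \eqref{eq:mcLightGeneration}, $\log NTL_p=\log\sum_i e^{(\nu_i-\phi)/\mu}e^{-\eps_i/\mu}$, and expanding in $\eps$ gives $A_p-\tilde\eps_p/\mu+O_p(\sigma_\eps^2)$. Then:
\begin{itemize}
\item The covariance $\mathrm{Cov}(\bar\eps_{p,w},A_p)$ vanishes to first order because $\nu\perp\eps$ and $\EE{\eps_i}=0$. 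The weights $w_{ip}$ depend on the realised shares, hence on $\eps$, but at $\chi=0$ they equal $1/\abs{m_p}$. Their $\eps$-dependence therefore enters only through $\chi\eps$ or $\eps^2$ products, giving an $O(\sigma_\eps\chi)+O(\sigma_\eps^2)$ contribution.
\item We have $\mathrm{Cov}(\bar\eps_{p,w},-\tilde\eps_p/\mu)=-\mu^{-1}\mathrm{Cov}(\bar\eps_{p,w},\tilde\eps_p)$.
\item Since $\mathrm{Var}(\log NTL_p)=\mu^{-2}[\mathrm{Var}(\mu A_p)+\mathrm{Var}(\tilde\eps_p)]+O(\sigma_\eps^2)$, where the cross covariance vanishes by independence, the ratio is $-\mu\kappa_P$ up to the stated remainder.
\end{itemize}
Summing the two pieces and passing to the limit in $P$ gives \eqref{eq:mcBiasApprox}.

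\textbf{Main obstacle.} The main difficulty is controlling the remainders uniformly so that they survive the division by $\mathrm{Var}(\log NTL_p)$ and the limit $P\to\infty$. Specifically, we must show that the $O_p(\sigma_\eps^2)$ terms in the expansions of $\log NTL_p$ and $\Lambda_p^{\mathrm{noise}}$ have second moments bounded by $C\sigma_\eps^2$ uniformly in $p$. This needs bounds on exponential moments of $\eps_i$, or a truncation argument, beyond bare finite second moments. It also requires treating the mutual dependence between the shares $s_{ip}$, which contain $\eps$, and the shocks. I would first handle it by expanding jointly in $(\chi,\eps)$ around $(0,0)$, with an integral-form Taylor remainder. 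I would then bound the cross terms by Cauchy–Schwarz, with the denominator bounded below by Assumption 2. That bookkeeping yields precisely the $O(\sigma_\eps\chi)$ interaction term in $R$.
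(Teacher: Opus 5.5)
Your proposal is correct and follows essentially the same route as the paper's proof. Both arguments use the exact split $\Lambda_p=\Lambda_p^{\mathrm{agg}}+\Lambda_p^{\mathrm{noise}}$, a Taylor expansion of $\Lambda_p^{\mathrm{agg}}$ along the share path with a uniform $O(\chi^3)$ remainder projected directly on $\log NTL_p$, first-order expansions of $\Lambda_p^{\mathrm{noise}}$ and $\log NTL_p$ in the shocks, independence to remove the first-order cross terms, and passage to the limits of $\kappa_P,\delta_P,\rho_P$. Your two refinements are slightly cleaner than the paper's treatment: you note that the aggregation--noise term is already absorbed in $\delta_P,\rho_P$, whereas the paper bounds it separately while treating $\mathrm{CV}_p^2$ as shock-free; and you route the shock-dependence of $w_{ip}$ into the $O(\sigma_\eps\chi)$ term instead of asserting $\mathrm{Cov}(\bar\eps_{p,w},A_p)=0$ exactly. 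The uniform moment control you flag as the main obstacle is also left implicit in the paper.
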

\begin{proof}
See the Appendix.
\end{proof}

Before proceeding, we make some remarks on Theorem~\ref{teo:aggregation}.
\begin{oss}
The finite-second-moment condition ensures that the projection in~\eqref{eq:mcPlim} and the finite-$P$ coefficients $\kappa_P$, $\delta_P$, and $\rho_P$ are well defined. The limits $\kappa_\infty$, $\delta_\infty$, and $\rho_\infty$ in which the theorem is stated are not elementary-level quantities but limits of projection coefficients at the chosen aggregation scale, and therefore depend on how elementary locations are grouped along the cross-section sequence. In this respect, what convergence in Assumption~2 requires is minimal: because $\kappa_P$, $\delta_P$, and $\rho_P$ are continuous functions of the cross-unit second moments of $\left(\log NTL_p,\ A_p,\ \tilde\eps_p,\ \bar\eps_{p,w},\ \log\abs{m_p},\ \mathrm{CV}_p^2\right)$, their limits exist as soon as the empirical second-moment matrix of this vector converges to a finite limit that is non-degenerate on the denominators $\mathrm{Var}(\log NTL_p)$ and $\mathrm{Var}(\mu A_p)+\mathrm{Var}(\tilde\eps_p)$. This condition is met, for instance, by an independent and exchangeable partition scheme, in which the unit sizes $\abs{m_p}$, the within-unit elementary log-activities $\nu_i$, and the shocks $\eps_i$ are drawn i.i.d.\ from fixed distributions with $\eps_i$ independent of activity: each summary vector is then a function of its own unit's draws alone, hence i.i.d.\ across units, and $\kappa_P$, $\delta_P$, $\rho_P$ converge to deterministic constants by the law of large numbers. This is exactly the data-generating scheme of the Monte Carlo experiments of Section~\ref{sec:MonteCarloExperiments}, which therefore lie within the scope of the theorem. However, these are sufficient, not necessary, conditions: convergence may also hold under weaker, spatially dependent designs.
\end{oss}
\begin{oss}
The parameter $\chi$ is the local non-uniformity parameter: all $O(\chi^k)$ terms are understood along the path $s_{ip}(\chi)=\abs{m_p}^{-1}(1+\chi d_{ip})$ as $\chi\to0$.
The equal-share benchmark is $\chi=0$; larger values of $\chi$ correspond to more unequal within-unit light shares. For a fixed aggregation scale and a fixed collection of directions $\{d_{ip}\}$, $\chi\to0$ is the local Taylor-expansion parameter, whereas the probability limit in~\eqref{eq:mcPlim} is taken over the cross-section of aggregate units at that scale. For $\chi$ small enough, $s_{ip}(\chi)>0$ for all $i,p$. To see how this parameter maps into the within-unit coefficient of variation, note that $ntl_i=s_{ip}(\chi)NTL_p$ and $\overline{ntl}_p=NTL_p/\abs{m_p}$. Hence
\begin{equation*}
\frac{ntl_i-\overline{ntl}_p}{\overline{ntl}_p} = \frac{\abs{m_p}ntl_i}{NTL_p}-1 = \abs{m_p}s_{ip}(\chi)-1 = \chi d_{ip},
\end{equation*}
and therefore
\begin{equation*}
\mathrm{CV}_p^2(\chi) = \frac{1}{\abs{m_p}}\sum_{i\in m_p}\left(\frac{ntl_i-\overline{ntl}_p}{\overline{ntl}_p}\right)^2 = \chi^2\frac{1}{\abs{m_p}}\sum_{i\in m_p}d_{ip}^2.
\end{equation*}

The boundedness of $d_{ip}$ is a restriction on relative, not absolute, light intensity. Since $\abs{m_p}s_{ip}-1=ntl_i/\overline{ntl}_p-1$, it rules out elementary locations whose lights become arbitrarily large relative to the mean of their aggregate unit. In particular, the condition is automatically satisfied if elementary lights are uniformly bounded away from zero and infinity along the aggregation sequence.
\end{oss}
\begin{oss} The decomposition rests on Assumption~1, which treats the output-to-light shock $\eps_i$ as classical: mean zero and independent of activity $\nu_i$. This is what makes the two cross terms in~\eqref{eq:mcDoubleDecomposition} (aggregation--noise and noise--activity) vanish to first order and delivers the clean separation between attenuation and aggregation. If instead $\eps_i$ is correlated with activity the noise--activity covariance $\mathrm{Cov}(\Lambda_p^{\mathrm{noise}},A_p)$ no longer vanishes, and an additional term of order $\mathrm{Cov}(\eps_i,\nu_i)$ enters~\eqref{eq:mcBiasApprox}.\footnote{This is the non-classical case emphasised by \citet{huber2024economic}, and the natural description of settings where lights systematically miss non-electrified, agricultural, or informal output, or are top-coded and bloomed in bright cores.} The attenuation and aggregation channels then cease to be additively separable, and the elementary elasticity is no longer point-identified from the predictive slope and the observable projection coefficients alone. We view this as the main limitation of the framework and return to it empirically: it is precisely in the low-luminosity, low-income settings (Kenya and, to a lesser extent, Indonesia) where the classical benchmark is least tenable, and where the diagnostic inversion of Section~\ref{sec:betaElasticity} breaks down in the direction predicted by a systematic, activity-correlated undercounting of output by lights. Modelling this non-classical component structurally is left to future work.
\end{oss}

\paragraph{Implications for the empirical analysis}

Equation~\eqref{eq:mcBiasApprox} has three implications for the empirical analysis. First, in the noiseless environment (i.e.\ no attenuation bias), $\mu=1$ is the unique elasticity that is invariant to arbitrary aggregation patterns: when $\mu=1$, both the unit-size term and the within-unit-dispersion term disappear. For $\mu\neq1$, aggregation generally changes the estimated elasticity, with the sign governed by the bracket $\delta-\mu\rho/2$. In the leading case in which the unit-size channel is positive and dominates within-unit dispersion, aggregation pushes estimates upward when $\mu<1$ and downward when $\mu>1$, contracting them toward unity.

Second, idiosyncratic output-to-light shocks attenuate the reverse predictive slope, but this attenuation weakens as aggregation averages out the shock component. Expressing variables as densities, or including area controls in levels, targets the first-order unit-size channel $\delta$; what remains is the reverse-regression attenuation and the smaller within-unit-dispersion component. The Monte Carlo exercises below verify these mechanisms in controlled environments.

Third, under the maintained assumption that reverse-regression attenuation is negligible at the relevant scale ($\kappa_\infty\approx0$), Eq.~\eqref{eq:mcBiasApprox} can be inverted to recover an aggregation-corrected estimate of the elementary elasticity $\mu$ from the observed predictive slope $\hat\beta_P$ and empirical estimates of $\delta_P$ and $\rho_P$ computed from the observed NTL distribution. We use this inversion only as a diagnostic, because it relies on the strong assumption that the unobserved output-to-light shock is small.

\subsection{Monte Carlo experiments}
\label{sec:MonteCarloExperiments}

In each simulation run, we generate data for $K = 100{,}000$ elementary locations, indexed by $i$ as in the analytical benchmark. Elementary output is drawn from a log-normal distribution, $y_i=\exp(\nu_i)$ with $\nu_i\sim\mathcal{N}(1,1)$. Nighttime light emissions follow Eq.~\eqref{eq:generatorNTL}, with an independent shock $\eps_i\sim\mathcal{N}(0,\sigma_\eps^2)$ and scale parameter $\phi=11.5$. Elementary locations are then aggregated into $P\in\{100,1{,}000,10{,}000,20{,}000,50{,}000,100{,}000\}$ spatial units with heterogeneous elementary-location counts. For each aggregation level we estimate the regression covered by Theorem~\ref{teo:aggregation}:
\begin{equation}
	\log Y _{p} = \alpha + \beta \log  NTL_{p} + \epsilon_{p}
	\label{eq:lightOnIncomeMC}
\end{equation}
where $Y_{p}$ and $NTL_{p}$ are total output and total lights of spatial unit $p$. Sections~\ref{sec:mcSmallSample}--\ref{sec:mcAttenuation} study this specification, whose bias is fully characterised by Equation~\eqref{eq:mcBiasApprox}; Section~\ref{sec:mcToEmpirics} then explains why the correction implied by the theorem is not directly operational, introduces unit areas, and evaluates the empirical specifications used in Section~\ref{sec:empirical}.

\subsubsection{Finite-sample verification of the decomposition}
\label{sec:mcSmallSample}

The first Monte Carlo exercise verifies the finite-sample accuracy of Eq.~\eqref{eq:mcBiasApprox}. For each combination of $\mu\in\{0.7,1,1.3\}$ and $\sigma_\eps\in\{0.1,0.3,0.5,1\}$, and for each aggregation ratio $K/P$ with $P\in\{100,1{,}000,10{,}000,20{,}000,50{,}000,100{,}000\}$, we regress $\log Y_p$ on $\log NTL_p$ and compare the simulated bias $\hat\beta_P-\mu$ with the approximation $-\mu\kappa_P+(1-\mu)\left[\delta_P-\tfrac{\mu}{2}\rho_P\right]$. The projection coefficients $\delta_P$ and $\rho_P$ are computed from the simulated lights using Eq.~\eqref{eq:mcDeltaRhoDef}, while $\kappa_P$ is computed from Eq.~\eqref{eq:mcReliability} using the true shocks $\eps_i$ and the true $\mu$, so that the only discrepancy left between the two sides is the remainder $R(\chi,\sigma_\eps)$ of Theorem~\ref{teo:aggregation}.

Figure~\ref{fig:mcBiasApproxVerification} summarises the comparison; the full set of estimates is reported in Section~\ref{sec:tabelleMonteCarlo} of the Online Appendix (Table~\ref{tab:mcBiasApproxVerification}). Three findings stand out. First, at the elementary scale ($K/P=1$) the approximation is essentially exact for every $(\mu,\sigma_\eps)$: with no aggregation, $\delta_P=\rho_P=0$ and the bias reduces to pure attenuation, $-\mu\kappa_P$. Second, for moderate noise ($\sigma_\eps\le0.5$) and moderate aggregation ($K/P\le10$), the remainder is uniformly small (below $0.04$ in absolute value, and typically below $0.02$), so Eq.~\eqref{eq:mcBiasApprox} tracks both the attenuation and the aggregation components of the bias remarkably well. Third, the approximation deteriorates exactly where the theorem says it should: for $\sigma_\eps=1$ the remainder grows with the $O(\sigma_\eps^2)$ term, and for $\mu=0.7$ at coarse scales ($K/P\ge100$) the within-unit light shares become highly unequal --- the variance of elementary log lights is $\sigma_\nu^2/\mu^2$, so low $\mu$ inflates within-unit dispersion --- and the local expansion in $\chi$ is no longer accurate, with $\rho_P$ taking large values. In this region the approximation still gets the sign and the qualitative contraction toward one right, but no longer the magnitude. The working range of Eq.~\eqref{eq:mcBiasApprox} therefore covers the empirically relevant configurations in which the calibrated light signal is not dominated by measurement noise and administrative aggregation is not at the extreme coarse limit of the simulation design.
\begin{figure}[htbp]
\centering
\includegraphics[width=0.7\textwidth]{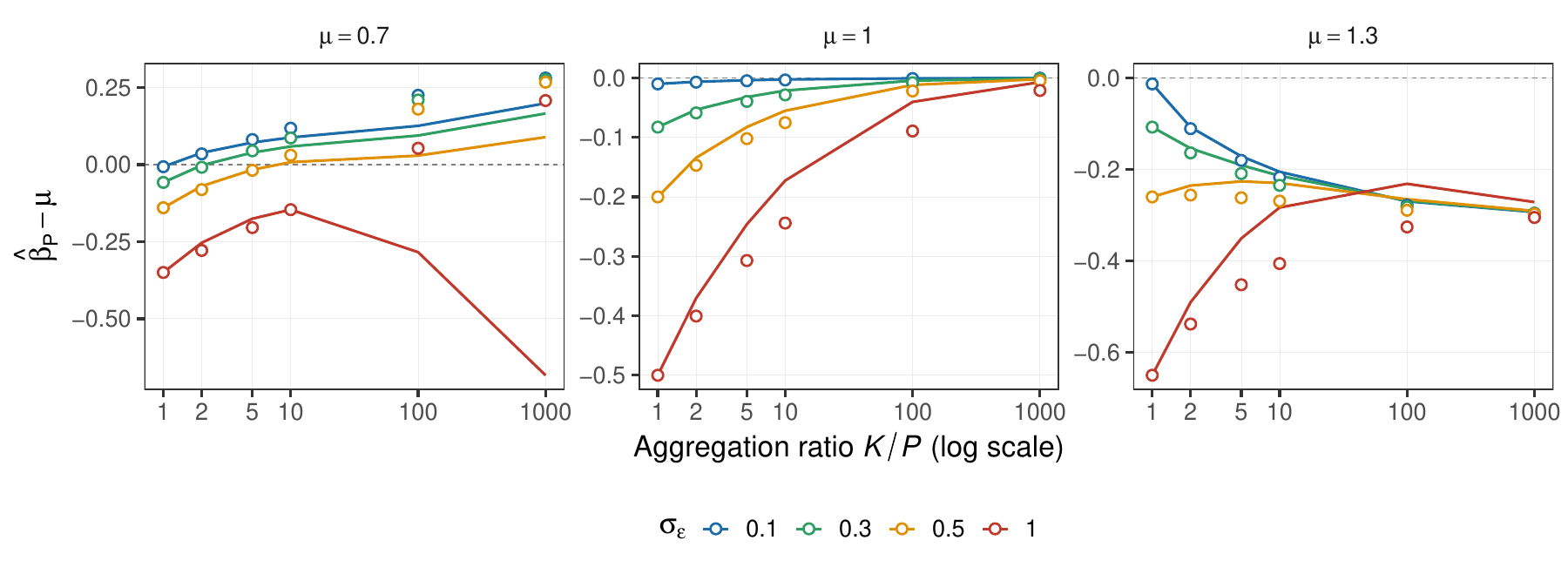}
\caption{Monte Carlo verification of Eq.~\eqref{eq:mcBiasApprox}. Points: simulated bias $\hat\beta_P-\mu$ from regressing $\log Y_p$ on $\log NTL_p$ (averages over replications, $K=100{,}000$); lines: the approximation $-\mu\kappa_P+(1-\mu)\left[\delta_P-\tfrac{\mu}{2}\rho_P\right]$ (with $\kappa_P$ from Eq.~\eqref{eq:mcReliability} using true shocks, $\delta_P,\rho_P$ from Eq.~\eqref{eq:mcDeltaRhoDef}). 
}
\label{fig:mcBiasApproxVerification}
\end{figure}

Figures~\ref{fig:mcBiasAttenuation} and~\ref{fig:mcBiasAggregation} decompose the approximation into the two components of Theorem~\ref{teo:aggregation}, computed separately in the same experiments. The decomposition shows that the two channels operate at opposite ends of the aggregation range. The attenuation component $-\mu\kappa_P$ is maximal at the elementary scale and vanishes monotonically as $K/P$ grows: within-unit averaging of the output-to-light shocks raises the signal content of aggregate lights, with a profile that is common across $\mu$ up to the scale factor. The aggregation component $(1-\mu)\left[\delta_P-\tfrac{\mu}{2}\rho_P\right]$ behaves in the opposite way: it is zero at the elementary scale, grows in magnitude with $K/P$, and is identically zero at every scale when $\mu=1$ --- the knife-edge property that makes the unit elasticity invariant to aggregation. Its sign is positive for $\mu<1$ and negative for $\mu>1$, generating the contraction toward one.
\begin{figure}[htbp]
	\centering
	\includegraphics[width=0.8\textwidth]{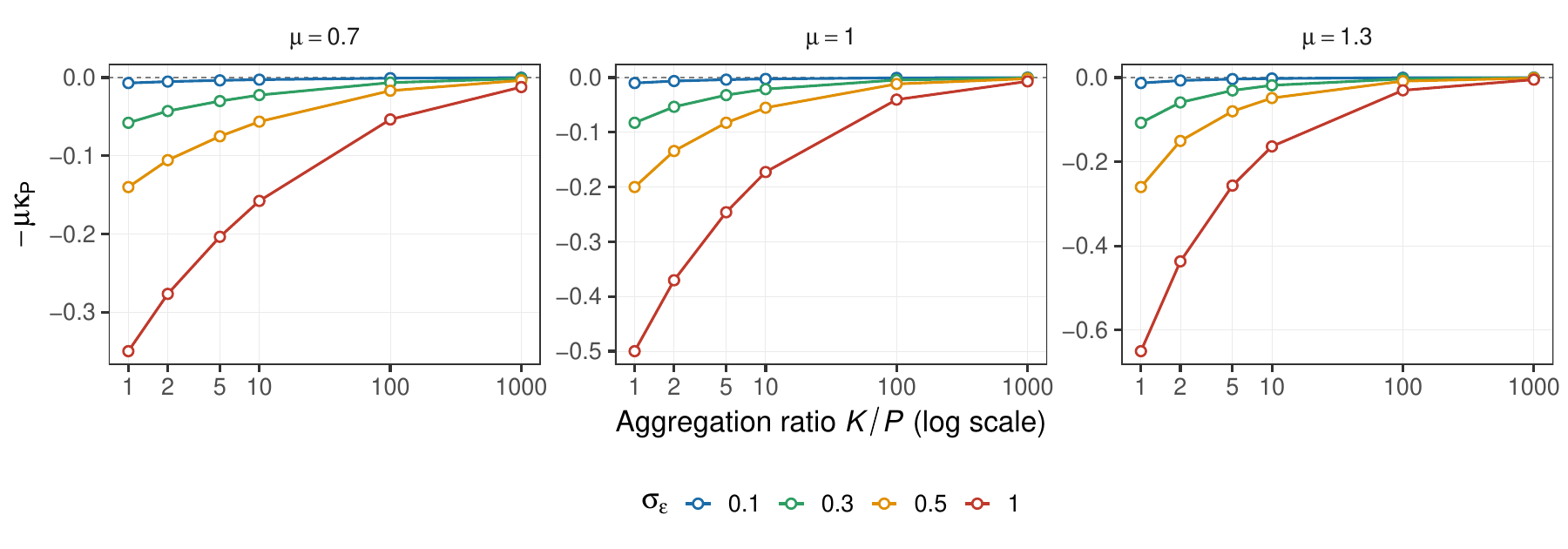}
	\caption{The attenuation component of Eq.~\eqref{eq:mcBiasApprox}, $-\mu\kappa_P$, with $\kappa_P$ from Eq.~\eqref{eq:mcReliability} (true shocks $\eps_i$; averages over Monte Carlo replications, $K=100{,}000$). Panels: true elasticity $\mu$; colours: shock s.d.\ $\sigma_\eps$; horizontal axis: aggregation ratio $K/P$ (log scale). 
	}
	\label{fig:mcBiasAttenuation}
\end{figure}
The crossing of the two profiles also explains the non-monotone total bias visible in Figure~\ref{fig:mcBiasApproxVerification} for high noise levels: for $\mu=0.7$ and $\sigma_\eps=1$ the bias moves from $-0.35$ at the elementary scale, where it is pure attenuation, to $+0.21$ at the coarsest scale, where the aggregation channel dominates.

\begin{figure}[htbp]
\centering
\includegraphics[width=0.8\textwidth]{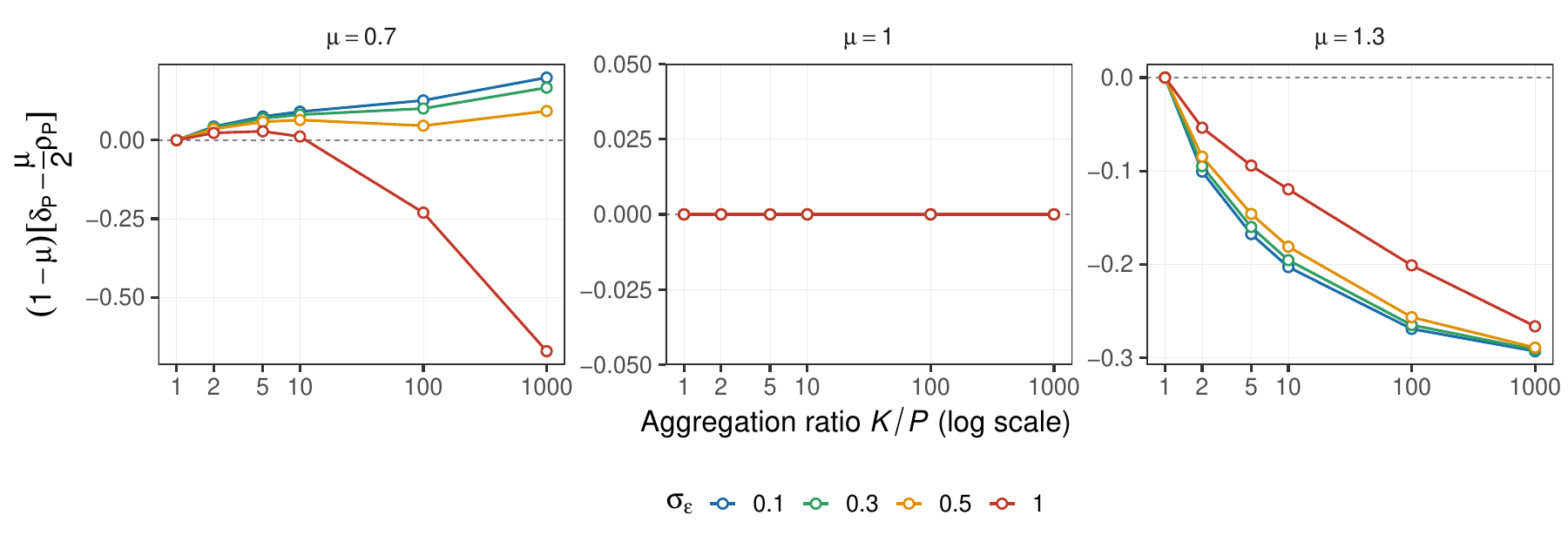}
\caption{The aggregation component of Eq.~\eqref{eq:mcBiasApprox}, $(1-\mu)\left[\delta_P-\tfrac{\mu}{2}\rho_P\right]$, with $\delta_P,\rho_P$ from Eq.~\eqref{eq:mcDeltaRhoDef} (averages over Monte Carlo replications, $K=100{,}000$). Panels: true elasticity $\mu$; colours: shock s.d.\ $\sigma_\eps$; horizontal axis: $K/P$ (log scale). 
}
\label{fig:mcBiasAggregation}
\end{figure}

\subsubsection{Aggregation bias in totals regressions}
\label{sec:mcAggregation}

We set $\sigma_\eps=0.1$ (i.e.\ $\sigma^2_\eps=0.01$), so that attenuation is small relative to aggregation. At the elementary scale, $\kappa_P=\sigma^2_\eps/(\sigma^2_\nu+\sigma^2_\eps)\approx0.01$, implying an attenuation bias of about one percent of $\mu$ in Eq.~\eqref{eq:mcBiasApprox}.

Figure~\ref{fig:mcBetaAlphaTotal} in the Online Appendix reports the Monte Carlo estimates of Model~\eqref{eq:lightOnIncomeMC} across aggregation levels. It confirms the analytical benchmark: aggregation leaves the elasticity unchanged at $\mu=1$, pushes estimates upward when $\mu<1$, and pushes them downward when $\mu>1$. Thus aggregation contracts the estimated elasticity toward one rather than merely adding noise. The bottom row shows the implication for level recovery: the scale parameter $\alpha$ is well recovered at the unit-elastic benchmark at every scale, but when $\mu\neq1$ the size-driven distortion of $\hat\beta$ is absorbed by the intercept, which drifts away from $\phi$ as aggregation coarsens. Level prediction from lights therefore requires an anchored light-to-output conversion whenever the elasticity is away from one. Full estimates and confidence bands are reported in Section~\ref{sec:tabelleMonteCarlo} of the Online Appendix (Table~\ref{tab:mcBetaAlphaTotalEstimates}).


\subsubsection{Reverse-regression attenuation}
\label{sec:mcAttenuation}

Figure~\ref{fig:mcNoiseNaive} in the Online Appendix varies the idiosyncratic-shock standard deviation $\sigma_\eps$ to put attenuation and aggregation at work jointly in Model~\eqref{eq:lightOnIncomeMC}. At the finest scale, the slope is $\hat\beta \approx \lambda\mu$, with $\lambda \equiv \sigma_\nu^2/(\sigma_\nu^2+\sigma_\eps^2)$; noisier light therefore implies stronger reverse-regression attenuation. As aggregation increases, the slope is pulled toward one from whatever attenuated starting point: the attenuation component dies out while the aggregation component takes over, exactly the crossing of the two profiles documented in Figures~\ref{fig:mcBiasAttenuation} and~\ref{fig:mcBiasAggregation}. Section~\ref{sec:tabelleMonteCarlo} in the Online Appendix reports the underlying numbers (Table~\ref{tab:mcNoise}).


\subsubsection{From the theorem to the empirical specifications}
\label{sec:mcToEmpirics}

Theorem~\ref{teo:aggregation} characterises the bias of Model~\eqref{eq:lightOnIncomeMC}, but the correction it implies is not directly operational. Undoing the bias requires the attenuation coefficient $\kappa_P$, which depends on the unobserved output-to-light shocks $\eps_i$, and the projection coefficients $\delta_P$ and $\rho_P$. Without auxiliary information, the theorem describes the bias but cannot remove it. Section~\ref{sec:betaElasticity} shows what can be achieved when partial auxiliary information is available (grid cells as elementary units and $\kappa_\infty\approx0$ as a maintained assumption), together with the failures of that route at coarse scales; here we take the complementary, fully operational route: specifications that mitigate the observable aggregation channels by construction. In particular, the empirical specification is written in densities and includes an area control. Under the log-log specification a totals regression with an area control and a density regression with an area control are equivalent; we adopt the density form for comparability across aggregation scales.

We therefore introduce unit land areas $S_p$ into the simulations. Area is the observable counterpart of the unit-size channel: if elementary locations had equal area, $S_p$ would be proportional to $\abs{m_p}$, and an area control would target the $\delta_P$ term of Theorem~\ref{teo:aggregation}. To allow for imperfect proportionality, elementary areas are drawn separately. In the baseline, $S_p$ is positively correlated with the elementary-location count, as in the natural case where larger administrative units contain more land and more light-emitting locations. We also consider independent and negatively correlated areas. The negative-correlation case is a useful stress test for settings with compact, dense urban units and large, sparsely populated rural units. Two alternatives to Model~\eqref{eq:lightOnIncomeMC} are considered:
\begin{equation}
	\log (Y_{p}/S_p) = \alpha + \beta \log  (NTL_{p}/S_p) + \gamma \log S_p + \epsilon_{p},
	\label{eq:mcModelDensityArea}
\end{equation}
which expresses variables as spatial densities and adds an explicit area control, and
\begin{equation}
	\log (Y_{p}/S_p) = \alpha + \beta \log  (NTL_{p}/S_p + \ell_0) + \gamma \log S_p + \epsilon_{p},
	\label{eq:mcModelNLM}
\end{equation}
which further allows a luminosity shift $\ell_0$ estimated from the data. These are the Monte Carlo counterparts of the Baseline Model and of the Nonlinear Model of Section~\ref{sec:personalIncome}.

\paragraph{Bias in the estimates}
\begin{figure}[!htbp]
	\centering
	\includegraphics[width=0.7\textwidth]{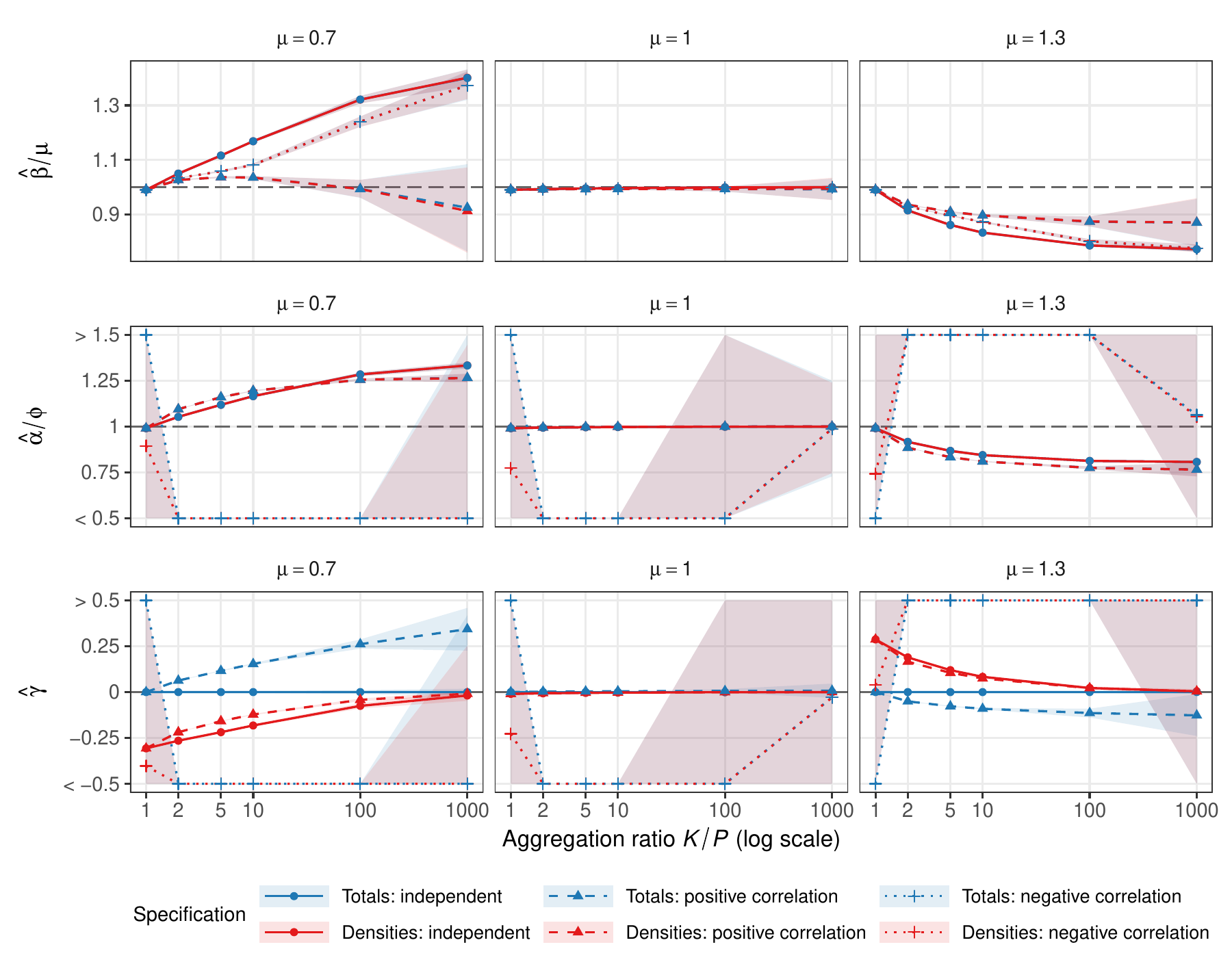}
	\caption{Monte Carlo estimates of $\hat\beta$, $\hat\alpha$, and $\hat\gamma$ under the area-control specifications, $\sigma_\eps=0.1$ (Eq.~\eqref{eq:mcModelDensityArea} for densities and its totals counterpart; median across 1000 replications). Rows: $\hat\beta/\mu$, $\hat\alpha/\phi$, and $\hat\gamma$ (reference $0$); columns: $\mu=0.7,1,1.3$. Lines distinguish totals vs densities and whether unit size is independent of, positively, or negatively correlated with the elementary-location count. For readability the $\hat\alpha/\phi$ and $\hat\gamma$ rows are clipped to $[0.5,1.5]$ and $[-0.5,0.5]$ ($<$/$>$ mark clipped values); full estimates in Section~\ref{sec:tabelleMonteCarlo} of the Online Appendix.}
	\label{fig:mcBetaGammaSixPanel}
\end{figure}
Figure~\ref{fig:mcBetaGammaSixPanel} reports the in-sample evidence for the area-control specification, with variables in totals or densities and under the three area--size correlation regimes. The first row shows that the density specification with an area control keeps $\hat\beta$ close to the true elasticity at every aggregation level, removing the contraction toward one that afflicts the totals regression. The second row reports recovery of the scale parameter, and the third row reports the residual role of unit size after density normalisation. The area control is therefore substantive: when area tracks the aggregation support, it prevents the NTL coefficient from absorbing variation due to observational-unit size.
Adding noise to the picture, under aggregation the naive $\hat\beta/\mu$ is pulled toward the fixed point $1/\mu$, while the density-and-area ratio stays close to the reliability ratio $\lambda$, with residual drift only at the highest noise levels. Densities and the area control therefore largely neutralise the first-order aggregation channel, but not the attenuation channel; attenuation fades only as aggregation averages the shocks out
(Figure~\ref{fig:mcNoise} in the Online Appendix).

\paragraph{Out-of-sample prediction}
Finally, the specifications are compared on the criterion that matters for using lights as a proxy: out-of-sample prediction. Within each replication, units are split into five folds. Each specification is estimated on four folds and used to predict the fifth. Performance is measured by the root mean squared out-of-fold error, which is comparable across specifications because predicting $\log Y_p$ or $\log(Y_p/S_p)$ is equivalent once area is observed. To isolate the pure aggregation channel, the independent and negative area--size regimes are evaluated at low noise ($\sigma_\eps=0.1$).
\input{tables/table_montecarlo_oos.tex}
Table~\ref{tab:mcOOS} reports the results. Three facts stand out. First, the density-with-area specification weakly dominates the totals regression in every regime. It coincides with the totals regression wherever area carries no size information, and predicts systematically better wherever it does. The gains are largest at coarse scales and high noise under positive area--size correlation: for example, at $\mu=0.7$, $\sigma_\eps=0.5$, and $K/P=1000$, the RMSE falls from $0.135$ to $0.082$. The same logic applies under negative correlation: at $\mu=0.7$ and $K/P=10$, the RMSE falls from $0.183$ to $0.150$. The area control exploits the area--size relation in either direction.

Second, density normalisation alone is not sufficient. The density specification without the area control is dominated at fine scales whenever normalisation injects area noise into the regression. Removing the unit-size channel requires the control, not just the transformation. Conversely, a totals specification with the area control is not reported because it is observationally equivalent to the density specification with the area control: the two share the same regressor span and their out-of-fold predictions differ exactly by the observed $\log S_p$, so the errors coincide. What matters econometrically is controlling for area; densities remain preferable for interpretation and comparability.

Third, the Nonlinear Model performs exactly like the Baseline Model in this DGP\@. Since there is no luminosity floor, the estimated shift $\ell_0$ correctly collapses toward zero, so allowing for it costs nothing out of sample. Its value emerges on real data at low luminosity, as documented in Section~\ref{sec:nonparametric}. At $\mu=1$ all specifications are equivalent, the knife-edge case in which aggregation is harmless. These results provide the controlled-environment justification for adopting the density-with-area specification as the Baseline Model of Section~\ref{sec:empirical}, with the Nonlinear Model as a flexible extension.

\section{Empirical application}\label{sec:empirical}

This section takes the aggregation framework to the data. We first describe the harmonised light and economic panel data and construct empirical counterparts of the aggregation terms in Theorem~\ref{teo:aggregation} (Section~\ref{sec:dataset}). We then estimate the predictive light--activity relationship at each available spatial support (Section~\ref{sec:personalIncome}). The final two subsections assess transferability: out-of-sample validation tests whether estimated relationships predict held-out units, scales, and years (Section~\ref{sec:validation}), and a pooled finest-level calibration evaluates how far a common proxying rule can be pushed in benchmark economies (Section~\ref{sec:pooledCalibration}).

\subsection{Data and empirical aggregation diagnostics}
\label{sec:dataset}

We combine a common high-resolution signal, nighttime lights (NTL), with official local measures of GDP or personal income for Brazil, Italy, the United States, Indonesia, and Kenya. The sample period is 2012--2019, the common window over which consistently processed VIIRS data and the required local economic series are available, and it stops before the COVID-19 disruption. The countries are deliberately heterogeneous: the United States and Italy are high-income benchmarks, Brazil is an upper-middle-income economy with large territorial disparities, Indonesia is a lower-middle-income archipelagic economy, and Kenya is a lower-income agricultural economy where lights are more likely to miss non-electrified, agricultural, or informal production.

NTL are sourced from NASA's \textit{Black Marble} project,\footnote{\url{https://blackmarble.gsfc.nasa.gov/}.} based on VIIRS observations from the Suomi NPP satellite at approximately 500-meter resolution. Black Marble corrects for atmospheric effects, terrain, vegetation, snow cover, lunar illumination, stray light, saturation, and calibration issues \citep{roman2018nasa}. These corrections are important for the proxy problem because measurement noise contributes directly to reverse-regression attenuation. Raw NTL maps are reported in Section~\ref{app:literatureTable} of the Online Appendix.

For every country, year, and spatial support, we construct total NTL by summing grid-cell radiance within administrative boundaries. Brazil is observed at municipality, microregion, mesoregion, and federal-state levels; Italy at municipality, local labour area, NUTS3, and NUTS2 levels; the United States at county, commuting-zone, and state levels; Kenya at county level; and Indonesia at kabupaten/kota and province levels. For compact notation, we use MUN for municipality, MICRO for microregion, MESO for mesoregion, STATE for federal state or U.S. state, LLA for local labour area, CZ for commuting zone, COUNTY for county, KAB/KOTA for kabupaten/kota, and PROV for province. NUTS2 and NUTS3 denote European Union statistical regions at levels 2 and 3.  Figure~\ref{fig:dataLowerAdmLevel} in the Online Appendix reports NTL density and economic density at the lowest available administrative level for each country.

Local economic activity is measured from official sources. Brazilian municipal GDP comes from IBGE and is aggregated to microregions, mesoregions, and federal states.\footnote{\url{https://www.ibge.gov.br/estatisticas/economicas/contas-nacionais/9088-produto-interno-bruto-dos-municipios.html}.} U.S. county- and state-level GDP comes from the BEA regional accounts.\footnote{\url{https://apps.bea.gov/iTable/}.} Italian municipal personal income is derived from Agenzia delle Entrate tax declarations, while NUTS2 and NUTS3 GDP comes from ARDECO.\footnote{\url{https://www1.finanze.gov.it/finanze/pagina_dichiarazioni/public/dichiarazioni.php}. For ARDECO, see \url{https://knowledge4policy.ec.europa.eu/territorial/ardeco-database_en\#database}.} Kenya's Gross County Product comes from the Kenya National Bureau of Statistics,\footnote{\url{https://www.knbs.or.ke/all-reports/}.} and Indonesian real GDP at province and kabupaten/kota levels comes from INDO-DAPOER.\footnote{Indonesia Database for Policy and Economic Research, World Bank DataBank: \url{https://databank.worldbank.org/source/indonesia-database-for-policy-and-economic-research}.} The effective time coverage and unit counts reported below refer to the merged analytical panel after sample restrictions.\footnote{Alaska, Hawaii, and U.S. territories are excluded from county-level maps and regressions because their extreme geography and low population density would distort county-level estimates. State-level regressions retain aggregate state observations, including the District of Columbia where available, as coarse aggregation benchmarks.}

Monetary variables are first expressed in 2015 prices using CPI or GDP deflators and are then converted into international dollars using World Bank PPP conversion factors. This harmonisation supports cross-country comparison of levels, while the main elasticity estimates rely on within-country variation. Where available, we also apply within-country cost-of-living adjustments: NUTS2 PPP data for Italy from \cite{cannari2009differenze}, and state-level PPP data for the United States from the BEA\@. No comparable intra-country PPP adjustment is available for Brazil, Kenya, or Indonesia. The baseline regressions use density measures: NTL, GDP, income, and area are all harmonised at the same administrative support, with economic activity and NTL divided by square kilometres.

As defined in Section~\ref{sec:introduction}, ``local economic activity'' is the real outcome available at the relevant spatial support: GDP for Brazil, the United States, Kenya, and Indonesia, and taxable personal income for Italian municipalities. Only for Italy does the finest-level variable differ from GDP; we therefore read the Italian municipal estimates as income-density estimates and use the NUTS2 and NUTS3 GDP data to check whether they carry over to GDP-based activity at coarser supports (Section~\ref{sec:betaElasticity}).

Table~\ref{tab:statDescr} reports descriptive statistics for the analytical samples. The table also provides a first comparison of economic density across countries after PPP conversion.\footnote{As a consistency check, we aggregate local monetary values to the country-year level and compare them with World Bank GDP in current international dollars (NY.GDP.MKTP.PP.CD). The ratios are close to one for Brazil and Indonesia, about 0.94 for the contiguous United States, consistent with the exclusion of Alaska, Hawaii, and U.S. territories, and about 0.91 for Kenya, where Gross County Product does not exactly match national GDP\@. For Italy, the municipal variable is personal income rather than GDP; aggregating municipalities gives about 42--45 percent of GDP in current international dollars, i.e.\ roughly one half.} Within-country dispersion is large, especially at fine scales, and cross-country comparisons should be interpreted with caution where subnational cost-of-living adjustments are unavailable. Administrative labels are also not directly comparable across countries: municipalities in Brazil and Italy, counties in the United States and Kenya, and kabupaten/kota in Indonesia are different spatial and institutional objects.
\input{tables/table_descriptiveStats.tex}

We next construct empirical diagnostics for the aggregation terms in Theorem~\ref{teo:aggregation}. In particular, Table~\ref{tab:deltaRhoEmpirics} reports three empirical counterparts of the theory. For this diagnostic only, the elementary units are lit Black Marble grid cells. The variables below continue to aggregate total radiance within administrative boundaries; the lit-cell restriction is used only to compute within-unit light shares and coefficients of variation in a way that corresponds to the positive-light elementary locations of Section~\ref{subsec:mcAnalyticalBenchmark}. Including fully dark cells would make $\mathrm{CV}_p^2$ dominated by large, sparsely lit units and would turn the diagnostic into a measure of darkness rather than of dispersion among light-emitting locations. The cell-level variance of log radiance, $\sigma_n^2$, is a proxy for the variance of elementary (log) local economic activity under the light-generation equation when measurement error is negligible. The projection coefficient $\delta_P$ captures the unit-size channel: luminous aggregate units tend to contain more lit elementary locations. The projection coefficient $\rho_P$ captures the within-unit dispersion channel: conditional on luminosity, units with more unequal light shares can have different aggregate elasticities.

Two regularities are important for the estimates below. First, $\delta_P$ is positive and sizeable in all five countries, already at the finest available administrative levels. The unit-size channel is therefore empirically relevant even before moving to coarse regions. Second, $\rho_P$ is heterogeneous. It is moderate in Brazil, Italy, the United States, and Kenya, but large in Indonesia, where archipelagic units combine bright urban cores with extensive dim areas; it can also turn negative at coarse scales, as in Brazilian mesoregions, where large Amazonian units combine low total light with high dispersion. Because both coefficients enter the aggregation component multiplied by $(1-\mu)$, they generate little bias when the elementary elasticity is close to one, but they contract estimated elasticities toward unity when $\mu$ differs from one. Section~\ref{sec:betaElasticity} uses these diagnostics to interpret the country and scale patterns of $\hat\beta$.

\input{tables/table_deltaRhoEmpirics.tex}

\subsection{Estimates}
\label{sec:personalIncome}

We now estimate the predictive NTL--activity relationship across all available countries and administrative levels. The baseline specification is
\begin{equation}
\log\text{(EconomicActivity\_km2)}_{it} = \alpha_{t} + \beta \log\text{(NTL\_km2)}_{it} + \gamma \log\text{(Km2)}_i + \epsilon_{it}
\label{eq:lightOnIncome}
\end{equation}
where $\text{EconomicActivity\_km2}$ is GDP density per square kilometre for Brazil, the United States, Indonesia, and Kenya, and personal-income density per square kilometre for Italy. $\text{NTL\_km2}$ is nighttime lights density per square kilometre; $\alpha_t$ is a time-varying intercept, $\beta$ is the NTL elasticity, and $\gamma$ captures the residual role of observational-unit area.

The three parameters of this \textit{Baseline Model} (BM) map directly into the proxy problem. The intercept $\alpha_t$ anchors levels. The elasticity $\beta$ governs proportional variation: when it is close to one, NTL density approximates relative differences and growth rates in economic density. The area coefficient $\gamma$ implements the aggregation correction motivated by Section~\ref{subsec:mcAnalyticalBenchmark}: if unit size is correlated with luminosity and economic density, omitting area can contaminate the NTL elasticity. The nonlinear specification introduced in Section \ref{sec:nonparametric} relaxes the constant-elasticity restriction and lets the NTL elasticity vary with luminosity.

\subsubsection{Predictive elasticity}\label{sec:betaElasticity}
Figure~\ref{fig:estimatedElasticity_countries} reports $\hat\beta$ for all countries and administrative levels. Confidence intervals use standard errors clustered by spatial unit; Section~\ref{app:completeResultLinear} of the Online Appendix reports the corresponding BM estimates and higher-geography clustering checks where available.

The estimates separate the benchmark economies from the lower-income cases. Italy and the United States are close to unit elasticity at the finest available scales; Brazil is slightly below one at the municipal level and moves above one at coarser levels. In these more luminous settings, proportional variation in NTL density is therefore a useful approximation to proportional variation in GDP or income density. This is a statement about economic magnitudes, not about failing to reject $\beta=1$: with large panels, small deviations can be statistically precise.

Kenya and Indonesia instead have elasticities below one, especially Kenyan counties and Indonesian kabupaten/kota. This is consistent with low luminosity, agriculture, informality, and weaker electrification, although data quality may also contribute: subnational accounts in these settings are noisier and sometimes partly imputed. Indonesia is particularly informative because it is observed at two scales. Moving from kabupaten/kota to provinces, the elasticity rises toward one, matching the aggregation benchmark for $\mu<1$; Section~\ref{app:indonesiaAggregation} in the Online Appendix finds the same pattern under random aggregation of kabupaten/kota. A near-unit estimate at a coarse scale can therefore mask a lower fine-scale elasticity.

That the near-unit estimates in Brazil, Italy, and the United States are already present at the finest levels, where the density specification removes the first-order aggregation channel (Eq.~\eqref{eq:mcBiasApprox}), indicates they are not merely an aggregation artefact, which would appear mainly at coarse scales, but a property of the local NTL--activity relationship, with residual deviations measuring the approximation error of a one-for-one proxy.
The inference is robust to clustering by unit, higher-geography clustering, and a wild cluster bootstrap (Section~\ref{app:completeResultLinear} in the Online Appendix). The equivalence tests in Table~\ref{tab:beta_tost} sharpen the magnitude claim: under conservative clustering, Italian municipalities and U.S. counties are equivalent to one within $\pm 10\%$, Brazilian municipalities within $\pm 20\%$, while Kenya and Indonesia reject equivalence at either tolerance.
For Italy, where the finest-level variable is income rather than GDP, we re-estimate the model at NUTS3 and NUTS2 using GDP\@. The elasticities are very similar, especially at NUTS3, supporting the use of personal income at the municipal level while keeping the GDP interpretation for coarser Italian scales.

Finally, high in-sample fit should not be overinterpreted: both luminosity and economic activity density are strongly related to population density, so fit can be high even when the proxy is imperfect. Section~\ref{sec:validation} therefore assesses proxy quality out of sample.

\begin{figure}[!htbp]
\centering
\includegraphics[width=1\textwidth]{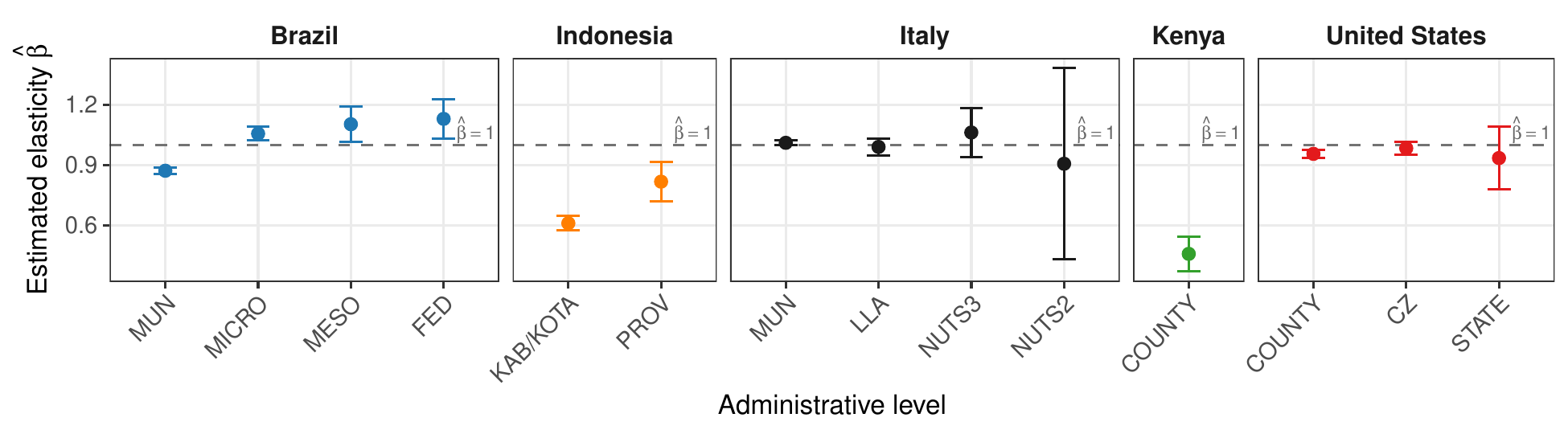}
\caption{Estimated elasticity ($\beta$) of Eq.~\eqref{eq:lightOnIncome} by country and administrative aggregation. Error bars: 95\% confidence intervals, standard errors clustered by spatial unit. Administrative levels are shown as acronyms on the horizontal axis, and panel width is proportional to the number of administrative levels of each country.
}
\label{fig:estimatedElasticity_countries}
\end{figure}

The near-unit fine-scale estimates also receive indirect support from the aggregation formula. Under negligible measurement error ($\kappa_\infty\approx0$), Theorem~\ref{teo:aggregation} can be inverted to recover an implied elementary, cell-level elasticity $\mu$ from the naive totals slope and the empirical projection coefficients in Table~\ref{tab:deltaRhoEmpirics}. This is a diagnostic exercise rather than an estimator we rely on for calibration, because it imposes the strong assumption that reverse-regression attenuation is negligible. The inversion (Table~\ref{tab:muCorrected} in Section~\ref{app:muCorrected} of the Online Appendix) delivers $\hat\mu\approx1.00$ for Italian municipalities, $0.91$ for U.S. counties, $0.76$ for Brazilian municipalities, and $0.70$ for Indonesian kabupaten/kota. The ordering matches the BM estimates. Kenya is informative through failure: its naive slope lies below $\delta_P$, a configuration no non-negative $\mu$ can generate under $\kappa_\infty\approx0$, signalling that measurement error is non-negligible where lights miss much agricultural and informal activity.

\subsubsection{Level conversion}\label{sec:timeincomelights}
The time-varying intercept $\alpha_t$ is the level conversion between NTL density and economic density. It absorbs common changes in the baseline light--activity relationship over time and is the key parameter when NTL are used to predict levels rather than proportional differences.

Figure~\ref{fig:EstimatedAlpha_countries} reports the estimated intercepts. At the finest available levels, Brazil, Italy, and the United States have broadly comparable conversions, whereas Indonesia and especially Kenya require much higher intercepts. In logs, these are large multiplicative differences: conditional on NTL density and area, a unit of observed luminosity corresponds to substantially more measured GDP density in the lower-luminosity economies. This is consistent with a larger agricultural, informal, rural, or otherwise low-emission component of activity \citep{henderson2018global,jean2016combining,abbes2024deepwealth}.
One qualification: at the finest scale Italy is measured with personal income, not GDP (municipal income is about 42--45\% of GDP; Section~\ref{sec:dataset}). The like-for-like NUTS3 and NUTS2 checks suggest the mismatch shifts the Italian intercept by less than the raw income-to-GDP ratio would imply; still, the Brazil--Italy--United States level comparison should be read as approximate, and the pooled calibration (Section~\ref{sec:pooledCalibration}) handles it through country-year fixed effects.

Within countries, $\alpha_t$ is fairly stable over time. Brazil shows a mild decline, especially between 2012 and 2017, while Italy and the United States show small increases at finer levels. This stability indicates that annual recalibration is not the main difficulty; cross-country level conversion is.
\begin{figure}[hbtp]
\centering
\includegraphics[width=1\textwidth]{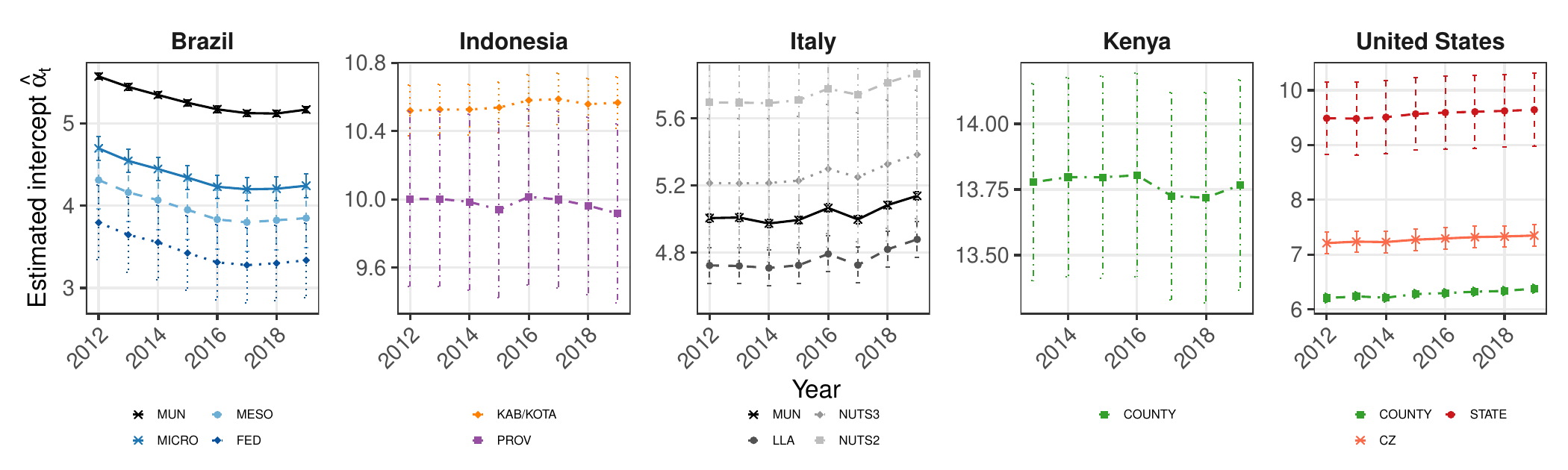}
\caption{Estimated time-varying intercept ($\alpha_t$) of Eq.~\eqref{eq:lightOnIncome} by country and administrative aggregation. GDP/income densities are measured in PPP-adjusted international dollars for comparability. Error bars represent 95\% confidence intervals. Each country panel has its own legend, placed below the panel, identifying the administrative levels (acronyms).
}
\label{fig:EstimatedAlpha_countries}
\end{figure}

\subsubsection{Area correction}\label{sec:gammaArea}
Figure~\ref{fig:estimatedGamma_countries} in the Online Appendix reports the area coefficient $\gamma$, which captures the residual association between observational-unit size and economic density after conditioning on NTL density and time effects. In the density specification, the first-order unit-size term is removed by construction, so $\gamma$ should be close to zero unless area remains correlated with residual aggregation effects.

This is what we find for Brazil, Italy, and the United States, where $\hat\gamma$ is generally small. Kenya and Indonesia instead display larger and statistically different area coefficients, suggesting that unit size and within-unit heterogeneity still matter after density normalisation. The area correction is therefore marginal in the benchmark economies but material in lower-luminosity and geographically uneven settings.


Taken together, $\beta$, $\alpha_t$, and $\gamma$ point to two calibration problems in poorer and lower-luminosity economies. Levels require a country-specific conversion factor, and growth rates require an elasticity that may be well below one. This complements \citet{gibson2024luminosity}: NTL remain informative in developing settings, but they need explicit calibration for both levels and proportional changes.

\subsubsection{Nonlinear elasticity at low luminosity\label{sec:nonparametric}}
We finally allow the NTL elasticity to vary with luminosity using the shifted-log Nonlinear Model (NLM):
\begin{equation}
\log\text{(EconomicActivity\_km2)}_{it} = \alpha_{t} + \beta \log\!\left(\text{NTL\_km2}_{it}+\ell_0\right) + \gamma \log\text{(Km2)}_i + \epsilon_{it},
\label{eq:lightOnIncomeQuadratic}
\end{equation}
where the luminosity shift $\ell_0$ governs curvature and satisfies $\text{NTL\_km2}_{it}+\ell_0>0$ throughout the sample. The implied local elasticity is $\beta\,\text{NTL\_km2}_{it}/(\text{NTL\_km2}_{it}+\ell_0)$. When $\ell_0>0$, elasticity is lower in the darkest units and rises toward $\beta$ as luminosity increases; when $\ell_0<0$, the opposite pattern can arise over the admissible range.

Figure~\ref{fig:elasticityByNTL} in the Online Appendix reports the implied local elasticity $\varepsilon(x)=\hat\beta\,x/(x+\hat\ell_0)$, with estimates in Section~\ref{app:kappaEstimates} of the Online Appendix. In Brazil, Italy, and the United States, elasticity is close to one around typical luminosity levels but falls in the darkest units. Kenya and Indonesian kabupaten/kota remain below one over most of their support. The nonlinear model therefore refines the linear evidence by locating the low-luminosity regime in which the constant-elasticity approximation becomes fragile.


\subsection{Out-of-sample validation and transferability}
\label{sec:validation}

For a proxy, the relevant test is not in-sample fit but prediction for units, scales, or years not used in estimation. We compare the Baseline Model (BM), the Nonlinear Model (NLM), and a deliberately \textit{No-Information Model} (NIM), which omits NTL density and keeps only the time-varying intercept and $\log \text{Km2}$. The gain over NIM measures the information in lights beyond unit size and time effects, while the comparison between BM and NLM tests whether low-luminosity curvature improves prediction.

The validation separates two uses of the proxy. For growth rates or relative comparisons, the key object is the slope: with $\beta$ close to one, light density moves nearly proportionally with economic density. For levels, the slope is not enough: the intercept must anchor the light-to-activity conversion. The exercises below therefore ask how well slopes transfer across units, scales, and years, and how much accuracy is lost when the level correction is unavailable or observed only at an aggregate scale. This design is feasible because the model uses common time-varying intercepts rather than unit fixed effects, so held-out units can still be predicted.


We consider three complementary exercises.

\textit{(A) Cross-sectional (leave-regions-out).} We hold out groups of spatial units, estimate the model on the remaining units, and predict all years of the held-out units. At the coarsest levels, where the number of units is small, this becomes leave-one-unit-out. This tests transferability across units.

\textit{(B) Cross-scale (downscaling).} We estimate the model at a coarse administrative level and use the coefficients, together with fine-level NTL, to predict fine-level income or GDP density. This tests whether aggregate estimates extrapolate to finer spatial resolutions.

\textit{(C) Temporal (leave-years-out).} We hold out one year and predict it using slopes estimated on the remaining years and the average estimated intercept $\bar{\hat\alpha}$. The error captures any drift in the light-to-activity conversion.

In every exercise, the forecast error is
\begin{equation}
\begin{split}
e_{it}^{(m)}
&= \log\!\left(\text{EconomicActivity\_km2}\right)_{it} \\
&\quad - \left[\, \hat\alpha^{(-\ell,m)}_t
+ \hat\beta^{(-\ell,m)} f_m\!\left(\text{NTL\_km2}\right)_{it}
+ \hat\gamma^{(-\ell,m)}\log\text{(Km2)}_i \,\right],
\end{split}
\label{eq:cvError}
\end{equation}
where the three forecasting functions are
\[
f_{\mathrm{BM}}(x)=\log x, \qquad
f_{\mathrm{NLM}}(x)=\log(x+\ell_0), \qquad
f_{\mathrm{NIM}}(x)=0,
\]
and $(-\ell,m)$ denotes coefficients of model $m$ estimated without held-out fold $\ell$. Predictive performance is summarised by the out-of-sample $R^2$,
\begin{equation}
\begin{split}
R^2_{\text{oos}}
= 1 -
\frac{\sum_t\sum_i e_{it}^2}
{\sum_t\sum_i\left[\log(\text{EconomicActivity\_km2})_{it}
- \overline{\log(\text{EconomicActivity\_km2})}\right]^2},
\end{split}
\label{eq:R2oos}
\end{equation}
by root mean square error (RMSE), and by the distribution of $e_{it}$. The increment over NIM, $\Delta R^2_{\text{oos}}$, is the information content of NTL beyond area and time effects; the gain from NLM is measured relative to BM. 

Table~\ref{tab:cvCrossSectional} reports the main cross-sectional validation exercise by country and administrative level.
\begin{table}[!htbp]
	\centering
	\small
	\caption{Cross-sectional (leave-regions-out) out-of-sample performance by country and administrative level: $R^2_{\text{oos}}$ for the Baseline (BM), Nonlinear (NLM), and No-Information (NIM) models, the NLM-vs-BM gain, $\Delta R^2_{\text{oos}}$ relative to NIM, RMSE, and $L_4=(N^{-1}\sum_i |e_i|^4)^{1/4}$.
	}
	\label{tab:cvCrossSectional}
	\resizebox{\textwidth}{!}{%
		\input{tables/table_cvCrossSectional.tex}
	}
\end{table}
 Figure~\ref{fig:cvErrorDistribution} shows the distribution of temporal forecast errors conditional on NTL density at the finest available level: municipalities for Brazil and Italy, kabupaten/kota for Indonesia, and counties for Kenya and the United States. The complete cross-scale and temporal validation tables are reported in Section~\ref{app:additionalValidation} of the Online Appendix.
The results answer the three validation questions. First, NTL add substantial predictive content over NIM. Across countries and scales, the NTL models explain much more out-of-sample variation than area and time effects alone, so predictive power is not merely a mechanical artefact of unit size. The nonlinear correction, however, changes average fit only modestly; its role is mainly in the tails of the error distribution.
\begin{figure}[htbp]
	\centering
	\includegraphics[width=0.7\textwidth]{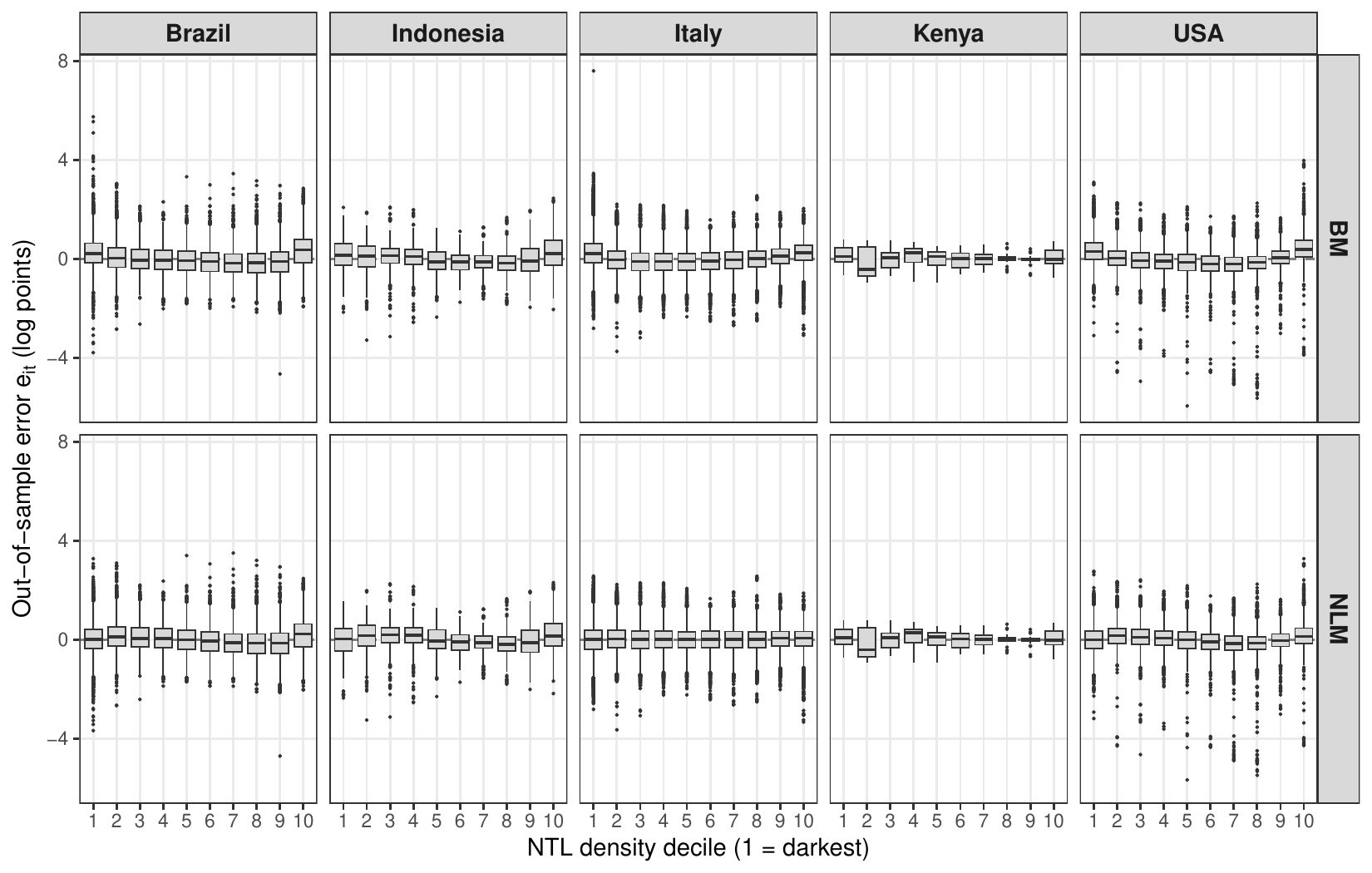}
	\caption{Distribution of temporal out-of-sample forecast errors $e_{it}$ at the finest level, by country and model, conditional on NTL density deciles (decile 1 = darkest). Finest level: MUN (Brazil, Italy), KAB/KOTA (Indonesia), COUNTY (Kenya, United States). Kenya is a single-scale low-income benchmark
		. Boxes show the interquartile range; the BM--NLM comparison highlights the low-luminosity regime where the nonlinear term absorbs part of the error dispersion.}
	\label{fig:cvErrorDistribution}
\end{figure}
Second, coarse-to-fine prediction supports downscaling, but only for the slope structure. Once the systematic level shift between scales is removed, coefficients estimated at coarse levels recover the cross-sectional pattern of finer-level density well. What does not transfer automatically is the intercept: the naive level prediction becomes biased as the distance between estimation and target scales grows. Aggregate estimates therefore extrapolate to finer resolutions in relative terms, but level prediction requires re-anchoring with some local information.

Third, the error distribution shows where the proxy degrades. Forecast errors widen in the lowest NTL deciles and are tighter elsewhere (Figure~\ref{fig:cvErrorDistribution}). This is why Table~\ref{tab:cvCrossSectional} reports $L_4$ in addition to RMSE: the nonlinear correction is most visible in tail errors rather than average fit. These patterns mirror the curvature documented in Section~\ref{sec:nonparametric}: the proxy is least reliable in the darkest, least dense units, where the constant-elasticity approximation begins to fail. Overall, NTL are informative out of sample, but accuracy degrades predictably at low luminosity and downscaling requires re-anchoring the level.

\subsection{Pooled finest-level calibration}
\label{sec:pooledCalibration}

The validation exercises show that NTL contain out-of-sample information and identify where the proxy becomes fragile. We now ask a more operational question: can the benchmark-country evidence be summarised by a pooled finest-level calibration usable when local economic data are unavailable? This is not a new identification design, but a calibration exercise that turns the preceding estimates into a proxying rule and measures the cost of different level anchors.

We restrict the benchmark pool to Brazil, Italy, and the United States, where finest-level elasticities are close enough to unity to justify a common slope (Table~\ref{tab:beta_tost}). The pooled sample uses municipalities for Brazil and Italy and counties for the United States. Italy contributes income density rather than GDP density. Including it in a GDP-based pool is therefore an empirical decision, not an assumption: it rests on the finding that the income- and GDP-based elasticities are near-identical where both are observed (Italian NUTS3 and NUTS2), so that the Italian slope is informative about GDP even though the finest-level variable is income. Country-year intercepts then absorb the residual level difference between income and GDP. Kenya and Indonesia are excluded because their elasticities and level conversions are not comparable to this benchmark group.

We compare three level anchors. The first uses country-year intercepts estimated on the pooled local sample. The second imposes a single common intercept. The third keeps the pooled slopes but recovers the intercept from aggregate country-level totals, either by country-year or by country. Predicted local income or GDP density is then evaluated in log errors and by deciles of observed economic density.
\begin{table}[!htbp]
\centering
\small
\caption{Benchmark pooled finest-level calibration on the three homogeneous-elasticity countries: municipalities for Brazil and Italy, counties for the United States. Kenya is excluded because its level conversion and elasticity differ sharply from the benchmark economies; Indonesia is left for a robustness/developing-country check. The country-year-FE and common-intercept rows estimate the intercept on the pooled local sample; the aggregate-intercept rows keep the country-year-FE slopes but recover the intercept from aggregate country-level totals (by country-year or by country). Error statistics from predicted local income or GDP density.}
\label{tab:pooledFinestPrediction}
\resizebox{\textwidth}{!}{%
\input{tables/table_pooledFinestPrediction.tex}
}
\end{table}

The main lesson is that slopes transfer better than levels. With country-year intercepts, the pooled specification has mean errors essentially equal to zero by construction, and the nonlinear term modestly improves the error distribution (Table~\ref{tab:pooledFinestPrediction}). A balanced-sample robustness check in Section~\ref{app:pooledRobustness} of the Online Appendix gives essentially unchanged results, so the calibration is not driven by the larger number of Brazilian and Italian municipalities.

A common intercept is less damaging than expected in this benchmark group, but intercepts recovered only from aggregate country-level totals perform substantially worse. Thus $\beta$, $\gamma$, and $\ell_0$ are reasonably transferable within the benchmark economies, whereas an $\alpha$ calibrated at the country scale does not automatically anchor local density. Aggregate information can provide a pilot estimate, but quantitative fine-scale prediction still requires some local-level anchoring. The pooled calibration should therefore be used primarily to recover relative economic density across local units. When the objective is level prediction in a new country, the intercept must be re-anchored using aggregate national or regional accounts and, where possible, some local observations. Without this re-anchoring, the systematic level shifts documented in the validation exercises are likely to translate into biased local predictions.

\section{Concluding remarks \label{sec:concludingRemarks}}

This paper has studied nighttime lights as an application of a more general econometric problem: when can a high-resolution signal, observed on a fine spatial grid and aggregated to administrative units, be used to predict an economic variable that is not observed locally? The answer depends on two distinct margins. The first is reverse-regression attenuation: lights are generated by economic activity, but the empirical objective is often to predict economic activity from observed lights. The second is spatial aggregation: the relationship estimated at an administrative support need not coincide with the elementary relationship operating at finer locations.

The analytical framework clarifies how these two margins interact. The probability limit of the predictive elasticity differs from the elementary elasticity because of a noise component in the light signal and an aggregation component driven by unit size and within-unit dispersion. A central implication is that aggregation tends to contract elasticities toward unity. Unit elasticity is therefore not merely a convenient empirical regularity; it is the only benchmark under which the elasticity is invariant to aggregation. The Monte Carlo exercises confirm this decomposition. They also show that density specifications with area controls reduce the first-order unit-size channel, but do not remove all aggregation effects when the underlying relationship is nonlinear or when the light signal is noisy.

The empirical results are consistent with this conditional view of proxy validity. Using Black Marble VIIRS nighttime lights and official local GDP or income data, we find that Brazil, Italy, and the United States form a relatively stable benchmark group: at the finest available spatial supports, predictive elasticities are economically close to one and remain broadly stable across scales and over time once densities and area are used. Out-of-sample validation shows that lights add substantial predictive content beyond area and time effects alone, although prediction errors remain larger among low-luminosity units. In this group, pooled calibration is informative about relative local economic density, provided that levels are appropriately anchored.

The results for Indonesia and Kenya show the limits of mechanical transfer. Their predictive elasticities are substantially below one at fine scales, implying that differences in luminosity translate less than proportionally into differences in measured economic activity. Aggregation may partly conceal this pattern by moving estimated elasticities toward unity at coarser scales. Level conversion is even more fragile: for a given amount of light and area, lower-income, more agricultural, more informal, or weakly illuminated contexts may contain more economic activity than a benchmark-country calibration would imply.

The practical implication is that nighttime lights should be treated as a calibrated proxy rather than as a universal measure of local output. Applications should estimate the elasticity at the relevant spatial support, work with densities and unit area when possible, test whether the predictive elasticity is close enough to one for the intended use, and validate predictions out of sample. When the slope is transferable but levels are not, lights are better suited to recovering relative economic density than to producing unanchored level estimates. When the elasticity itself differs across contexts, country- or context-specific calibration is necessary.

Future work should extend this framework to longer panels and a wider set of low- and middle-income countries, where official local economic data are scarce and the value of a reliable proxy is highest. The same logic can also be applied to other high-resolution spatial signals used in economics, such as remotely sensed land use, building footprints, mobile-phone activity, or transaction-based indicators. In all these cases, the key question is not only whether the signal correlates with the target variable, but whether its predictive relationship survives aggregation, noise, and transfer across contexts.

\bibliographystyle{chicago}
\bibliography{biblio}

\clearpage
\appendix

\textbf{\LARGE Appendix}

\section*{Proof of the Aggregation and Attenuation Decomposition}
\label{app:aggregationBias}

This appendix proves Theorem~\ref{teo:aggregation}. The proof follows the decomposition used in Section~\ref{subsec:mcAnalyticalBenchmark}: first isolate the component generated by nonlinear aggregation of elementary locations, then isolate the reverse-regression attenuation component generated by output-to-light shocks, and finally show that the cross terms are higher order under the maintained independence assumption.

\begin{proof}[Proof of Theorem~\ref{teo:aggregation}]
Use the residual split in~\eqref{eq:mcResidualSplit}, with components defined in~\eqref{eq:mcResidualComponents}.

Along the local path $s_{ip}(\chi)=\abs{m_p}^{-1}(1+\chi d_{ip})$, with $\sum_{i\in m_p}d_{ip}=0$, define
\begin{equation*}
g_p(\chi)\equiv\sum_{i\in m_p}s_{ip}(\chi)^\mu = \abs{m_p}^{-\mu}\sum_{i\in m_p}(1+\chi d_{ip})^\mu.
\end{equation*}
Applying Taylor's theorem to $x\mapsto(1+x)^\mu$ around $x=0$ gives
\begin{equation}
\label{eq:proofgP}
g_p(\chi)=\abs{m_p}^{-\mu}\sum_{i\in m_p}\left[1+\mu\chi d_{ip}+\frac{\mu(\mu-1)}{2}\chi^2 d_{ip}^2+O(\chi^3)\right].
\end{equation}
Equivalently,
\begin{equation*}
\sum_{i\in m_p}s_{ip}(\chi)^\mu = \abs{m_p}^{1-\mu}\left[1+\mu\chi\frac{1}{\abs{m_p}}\sum_{i\in m_p}d_{ip}+\frac{\mu(\mu-1)}{2}\chi^2\frac{1}{\abs{m_p}}\sum_{i\in m_p}d_{ip}^2+O(\chi^3)\right].
\end{equation*}
The first-order term is zero because $\sum_{i\in m_p}d_{ip}=0$. The $O(\chi^3)$ remainder is uniform along the cross-section sequence because $\sup_P\max_{1\le p\le P}\max_{i\in m_p}\abs{d_{ip}}<\infty$: for $\chi$ small enough, all arguments $\chi d_{ip}$ remain in a common neighbourhood of zero, where the third derivative of $x\mapsto(1+x)^\mu$ is bounded.

Since $ntl_i=s_{ip}(\chi)NTL_p$ and $\overline{ntl}_p=NTL_p/\abs{m_p}$,
\begin{equation*}
\mathrm{CV}_p^2(\chi) = \frac{1}{\abs{m_p}}\sum_{i\in m_p}\left(\frac{ntl_i-\overline{ntl}_p}{\overline{ntl}_p}\right)^2 = \chi^2\frac{1}{\abs{m_p}}\sum_{i\in m_p}d_{ip}^2.
\end{equation*}
Therefore
\begin{equation*}
\Lambda_p^{\mathrm{agg}} = (1-\mu)\log\abs{m_p}+\frac{\mu(\mu-1)}{2}\mathrm{CV}_p^2(\chi)+O(\chi^3).
\end{equation*}
Projecting this term on aggregate lights yields
\begin{equation*}
\frac{\mathrm{Cov}(\Lambda_p^{\mathrm{agg}},\log NTL_p)}{\mathrm{Var}(\log NTL_p)} = (1-\mu)\left[\delta_P-\frac{\mu}{2}\rho_P\right]+O(\chi^3),
\end{equation*}
where $\delta_P$ and $\rho_P$ are defined in~\eqref{eq:mcDeltaRhoDef}.
It remains to account for the shock component. Recall from Section~\ref{subsec:mcAnalyticalBenchmark} the noise-free aggregate light index $A_p$, the noise-free within-unit shares $s_{ip}^{(0)}$, and the aggregate shock $\tilde\eps_p=\sum_{i\in m_p}s_{ip}^{(0)}\eps_i$. To derive the first-order expansion $\log NTL_p=A_p-\tilde\eps_p/\mu+O_p(\sigma_\eps^2)$, write aggregate lights as a function of the shock vector:
\begin{equation*}
H_p(\eps)\equiv\log NTL_p(\eps)=\log\!\left(\sum_{i\in m_p}\exp\!\left\{\frac{\nu_i-\phi-\eps_i}{\mu}\right\}\right).
\end{equation*}
Then $H_p(0)=A_p$. Moreover, for each $j\in m_p$,
\begin{equation*}
\left.\frac{\partial H_p(\eps)}{\partial\eps_j}\right|_{\eps=0} = -\frac{1}{\mu}\frac{\exp\!\left\{\frac{\nu_j-\phi}{\mu}\right\}}{\sum_{i\in m_p}\exp\!\left\{\frac{\nu_i-\phi}{\mu}\right\}} = -\frac{1}{\mu}s_{jp}^{(0)}.
\end{equation*}
Therefore the first-order multivariate Taylor expansion around $\eps=0$ gives
\begin{equation*}
\log NTL_p = H_p(0)+\sum_{j\in m_p}\left.\frac{\partial H_p(\eps)}{\partial\eps_j}\right|_{\eps=0}\eps_j+O_p(\sigma_\eps^2),
\end{equation*}
and hence
\begin{equation*}
\log NTL_p = A_p-\frac{1}{\mu}\sum_{j\in m_p}s_{jp}^{(0)}\eps_j+O_p(\sigma_\eps^2)=A_p-\frac{1}{\mu}\tilde\eps_p+O_p(\sigma_\eps^2).
\end{equation*}

To expand the noise component, first expand the weights. Along the local path, by the same calculation as in Eq.~\eqref{eq:proofgP},
\begin{equation*}
s_{ip}(\chi)^\mu = \abs{m_p}^{-\mu}\left[1+\mu\chi d_{ip}+O(\chi^2)\right].
\end{equation*}
Moreover,
\begin{equation*}
\sum_{j\in m_p}s_{jp}(\chi)^\mu = \abs{m_p}^{1-\mu}\left[1+\mu\chi\frac{1}{\abs{m_p}}\sum_{j\in m_p}d_{jp}+O(\chi^2)\right].
\end{equation*}
Since $\sum_{j\in m_p}d_{jp}=0$, this becomes
\begin{equation*}
\sum_{j\in m_p}s_{jp}(\chi)^\mu = \abs{m_p}^{1-\mu}\left[1+O(\chi^2)\right].
\end{equation*}
Therefore
\begin{equation*}
w_{ip}=\frac{s_{ip}(\chi)^\mu}{\sum_{j\in m_p}s_{jp}(\chi)^\mu}=\abs{m_p}^{-1}\left[1+\mu\chi d_{ip}+O(\chi^2)\right]=\abs{m_p}^{-1}+O(\chi),
\end{equation*}
with remainders uniform along the cross-section sequence by the boundedness assumption on $d_{ip}$.\footnote{We used the trivial identity $1/(1+O(\chi^2)) = 1+O(\chi^2)$.}
Using $e^{\eps_i}=1+\eps_i+O(\eps_i^2)$ and $\sum_{i\in m_p}w_{ip}=1$,
\begin{equation*}
\sum_{i\in m_p}w_{ip}e^{\eps_i}=1+\sum_{i\in m_p}w_{ip}\eps_i+O_p(\sigma_\eps^2).
\end{equation*}
Using $\bar\eps_{p,w}\equiv\sum_{i\in m_p}w_{ip}\eps_i$, and since $\bar\eps_{p,w}=O_p(\sigma_\eps)$, applying $\log(1+x)=x+O(x^2)$ to $x=\bar\eps_{p,w}+O_p(\sigma_\eps^2)$ gives
\begin{equation*}
\Lambda_p^{\mathrm{noise}}=\log\!\left(1+\bar\eps_{p,w}+O_p(\sigma_\eps^2)\right)=\bar\eps_{p,w}+O_p(\sigma_\eps^2).
\end{equation*}
Thus the noise component is driven, to first order, by the weighted aggregate shock $\bar\eps_{p,w}$. The maintained independence between activity and output-to-light shocks implies that $A_p$ is orthogonal to this first-order shock term. Moreover,
\begin{equation*}
\mathrm{Var}(\log NTL_p)=\frac{1}{\mu^2}\left[\mathrm{Var}(\mu A_p)+\mathrm{Var}(\tilde\eps_p)\right]+O_{p}(\sigma_\eps^2).
\end{equation*}
Therefore the shock contribution to the projection is
\begin{equation*}
-\frac{1}{\mu}\frac{\mathrm{Cov}(\Lambda_p^{\mathrm{noise}},\tilde\eps_p)}{\mathrm{Var}(\log NTL_p)} = -\mu\kappa_P+O_p(\sigma_\eps^2),
\end{equation*}
where $\kappa_P$ is defined in~\eqref{eq:mcReliability}.

It remains to justify the two cross terms. Write the leading aggregation term as
\begin{equation*}
B_p\equiv (1-\mu)\log\abs{m_p}+\frac{\mu(\mu-1)}{2}\mathrm{CV}_p^2(\chi),
\end{equation*}
so that $\Lambda_p^{\mathrm{agg}}=B_p+O(\chi^3)$. The objects $B_p$ and $s_{ip}^{(0)}$ are functions of the elementary log outputs and of the grouping. Hence, using the independence of the elementary shocks,
\begin{equation*}
\begin{split}
\mathrm{Cov}(B_p,\tilde\eps_p)&=\EE{(B_p-\EE{B_p})\sum_{i\in m_p}s_{ip}^{(0)}\eps_i} \\
&=\sum_{i\in m_p}\EE{(B_p-\EE{B_p})s_{ip}^{(0)}}\EE{\eps_i}=0.
\end{split}
\end{equation*}
Therefore
\begin{equation*}
\mathrm{Cov}(\Lambda_p^{\mathrm{agg}},\tilde\eps_p)=O(\chi^3\sigma_\eps).
\end{equation*}
For the other cross term, use $\Lambda_p^{\mathrm{noise}}=\bar\eps_{p,w}+O_p(\sigma_\eps^2)$, with $\bar\eps_{p,w}\equiv\sum_{i\in m_p}w_{ip}\eps_i$. Along the local share path, $A_p$ and $w_{ip}$ are functions of the elementary log outputs and of the grouping. Hence, using the independence of the elementary shocks,
\begin{equation*}
\begin{split}
\mathrm{Cov}(\bar\eps_{p,w},A_p)&=\EE{(A_p-\EE{A_p})\sum_{i\in m_p}w_{ip}\eps_i} \\
&=\sum_{i\in m_p}\EE{(A_p-\EE{A_p})w_{ip}}\EE{\eps_i}=0.
\end{split}
\end{equation*}
Thus
\begin{equation*}
\mathrm{Cov}(\Lambda_p^{\mathrm{noise}},A_p)=O_p(\sigma_\eps^2).
\end{equation*}
The product term behind the interaction remainder can be seen directly from the two local expansions
\begin{equation*}
w_{ip}=\abs{m_p}^{-1}+\chi a_{ip}+O(\chi^2),\qquad e^{\eps_i}=1+\eps_i+O_p(\sigma_\eps^2),
\end{equation*}
where $a_{ip}$ is uniformly bounded along the cross-section sequence. Multiplying the two expansions inside the noise component gives
\begin{equation*}
\sum_{i\in m_p}w_{ip}e^{\eps_i}=1+\sum_{i\in m_p}\abs{m_p}^{-1}\eps_i+\chi\sum_{i\in m_p}a_{ip}\eps_i+O_p(\sigma_\eps^2)+O_p(\chi^2\sigma_\eps).
\end{equation*}
The third term is the interaction between the local share perturbation and the shock expansion; since $a_{ip}=O(1)$ and $\eps_i=O_p(\sigma_\eps)$, it is $O_p(\chi\sigma_\eps)$. This is the source of the $O_p(\sigma_\eps\chi)$ term included in $R(\chi,\sigma_\eps)$.

Combining the aggregation and shock projections and taking limits of the finite-$P$ projection coefficients, $\kappa_P\to\kappa_\infty$, $\delta_P\to\delta_\infty$, and $\rho_P\to\rho_\infty$, gives the local expansion of the population slope,
\begin{equation*}
\plim_{P\to\infty}\hat\beta_P-\mu = -\mu\kappa_\infty+(1-\mu)\left[\delta_\infty-\frac{\mu}{2}\rho_\infty\right]+R(\chi,\sigma_\eps),
\end{equation*}
where
\begin{equation*}
R(\chi,\sigma_\eps)=O(\chi^3)+O(\sigma_\eps^2)+O(\sigma_\eps\chi).
\end{equation*}
This is~\eqref{eq:mcBiasApprox}.
\end{proof}

\clearpage

\renewcommand{\thesection}{OA-\Alph{section}}
\setcounter{section}{0}
\begin{center}
{\LARGE\bfseries Online Appendix}
\end{center}

This Online Appendix contains Monte Carlo details, robustness exercises, and supplementary descriptive material supporting the results in the paper. The Monte Carlo tables are placed first because they provide the methodological backbone of the analysis; the subsequent sections report additional empirical diagnostics and complete robustness tables.

\section{Monte Carlo Implementation and Tables}
\label{sec:tabelleMonteCarlo}

This appendix reports the numerical details behind the Monte Carlo figures of Section~\ref{sec:MonteCarloExperiments}. The tables are not additional evidence separate from the figures; rather, they document the estimates, intervals, and approximation errors underlying the graphical summary in the main text.

Table~\ref{tab:mcBetaAlphaTotalEstimates} reports the full estimates of the totals specification behind Figure~\ref{fig:mcBetaAlphaTotal}; Table~\ref{tab:mcAreaBias} summarises how the area control changes the average bias in $\hat\beta$; Table~\ref{tab:mcBetaGammaEstimates} reports the full Monte Carlo estimates of $\hat\alpha$, $\hat\beta$, and $\hat\gamma$ under the area-control specifications behind Figure~\ref{fig:mcBetaGammaSixPanel}; Table~\ref{tab:mcNoise} isolates reverse-regression attenuation as idiosyncratic output-to-light noise varies; and Table~\ref{tab:mcBiasApproxVerification} reports the full numerical verification of the bias approximation of Theorem~\ref{teo:aggregation} behind Figure~\ref{fig:mcBiasApproxVerification}.

\input{tables/table_montecarlo_alpha_beta_total.tex}

\input{tables/table_montecarlo_area_bias.tex}

\input{tables/table_montecarlo_alpha_beta_gamma.tex}

\input{tables/table_montecarlo_noise.tex}

\input{tables/table_montecarlo_biasApprox.tex}

\clearpage

\section{Nonlinear Model Details}
\label{app:kappaEstimates}

Table~\ref{tab:kappaEstimates} reports the NLM parameters used to construct Figure~\ref{fig:elasticityByNTL}. Time-varying intercepts are included in estimation but omitted from the table, so the comparison focuses on $\hat{\beta}$, $\hat{\ell}_0$, and $\hat{\gamma}$ across countries and spatial scales.

\input{tables/table_kappa.tex}

\begin{figure}[!hbtp]
\centering
\includegraphics[width=\textwidth]{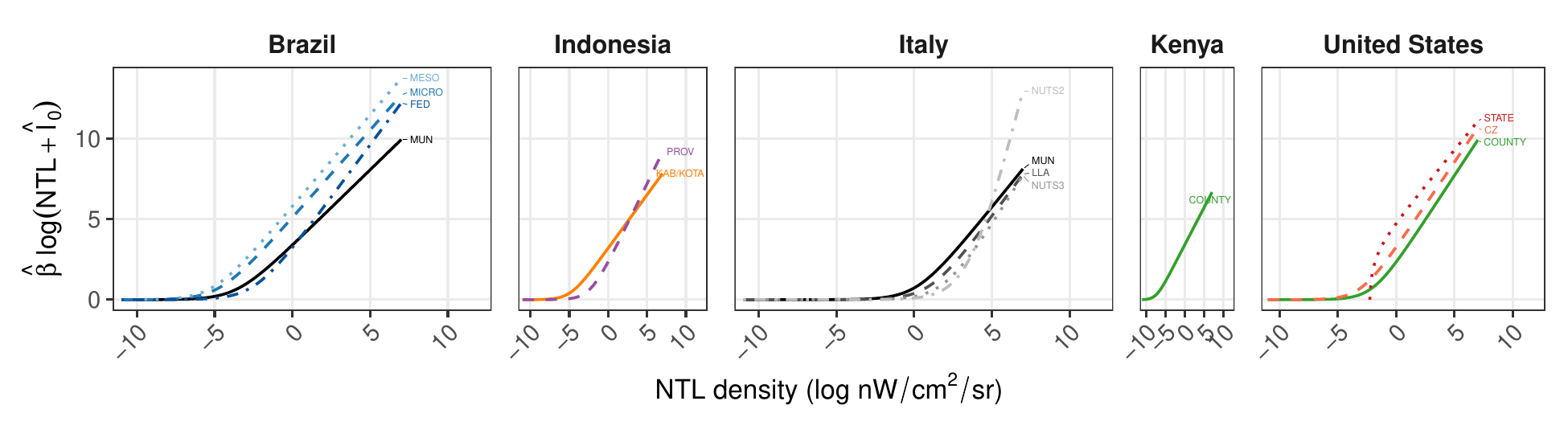}
\caption{Estimated nonlinear component of Eq.~\eqref{eq:lightOnIncomeQuadratic} by country and administrative level. Curves report $\hat{\beta}\log(NTL+\hat{\ell}_0)$ over the observed NTL-density range and are centred at the left edge of each panel for comparability. Administrative levels are labelled directly on each curve (acronyms), and panel width is proportional to the number of administrative levels of each country.}
\label{fig:EstimatedQuadraticEffect_countries}
\end{figure}

\clearpage

\section{Additional Validation Results}
\label{app:additionalValidation}

This appendix reports the cross-scale and temporal validation exercises summarised in Section~\ref{sec:validation}. The purpose is to separate two objects that are deliberately kept distinct in the paper: the transferability of slopes across spatial supports and years, and the transferability of levels without re-anchoring the intercept.

\begin{table}[!htbp]
\centering
\small
\caption{Cross-scale (downscaling) out-of-sample performance. Coarse-level coefficients predict finest-level density for a \textit{large} or \textit{small} scale jump. The demeaned $R^2_{\text{oos}}$ removes the systematic level shift between scales; level bias is the mean log forecast error.}
\label{tab:cvCrossScale}
\resizebox{\textwidth}{!}{%
\input{tables/table_cvCrossScale.tex}
}
\end{table}

\begin{table}[!htbp]
\centering
\small
\caption{Temporal (leave-years-out) out-of-sample performance at the finest level, predicting each held-out year from the average estimated intercept.}
\label{tab:cvTemporal}
\resizebox{\textwidth}{!}{%
\input{tables/table_cvTemporal.tex}
}
\end{table}

\clearpage

\section{Pooled Calibration Robustness}
\label{app:pooledRobustness}

This appendix reports robustness checks for the pooled finest-level calibration in Section~\ref{sec:pooledCalibration}. The checks focus on whether the pooled results are driven by the unbalanced number of local units across Brazil, Italy, and the United States, and on how aggregate-country intercept anchoring changes level predictions.

\begin{table}[!htbp]
\centering
\small
\caption{Balanced-sample robustness for the benchmark pooled finest-level calibration. For each country-year, the sample draws the same number of local units for Brazil, Italy, and the United States, equal to the minimum available count across the three countries (3,014 units, corresponding to the U.S. county sample). The random draw uses a fixed seed. Error statistics are in log points and are computed from predicted local income or GDP density.}
\label{tab:pooledFinestPredictionBalanced}
\resizebox{\textwidth}{!}{%
\input{tables/table_pooledFinestPredictionBalanced.tex}
}
\end{table}

\begin{table}[!htbp]
\centering
\small
\caption{Country-specific intercepts used in the aggregate-country calibration of Table~\ref{tab:pooledFinestPrediction}. The values are obtained from country-level totals using the pooled slopes estimated in the country-year FE specification. This table is included only as a diagnostic check.}
\label{tab:pooledFinestAggregateAlpha}
\input{tables/table_pooledFinestAggregateAlpha.tex}
\end{table}

\begin{figure}[!htbp]
\centering
\includegraphics[width=\textwidth]{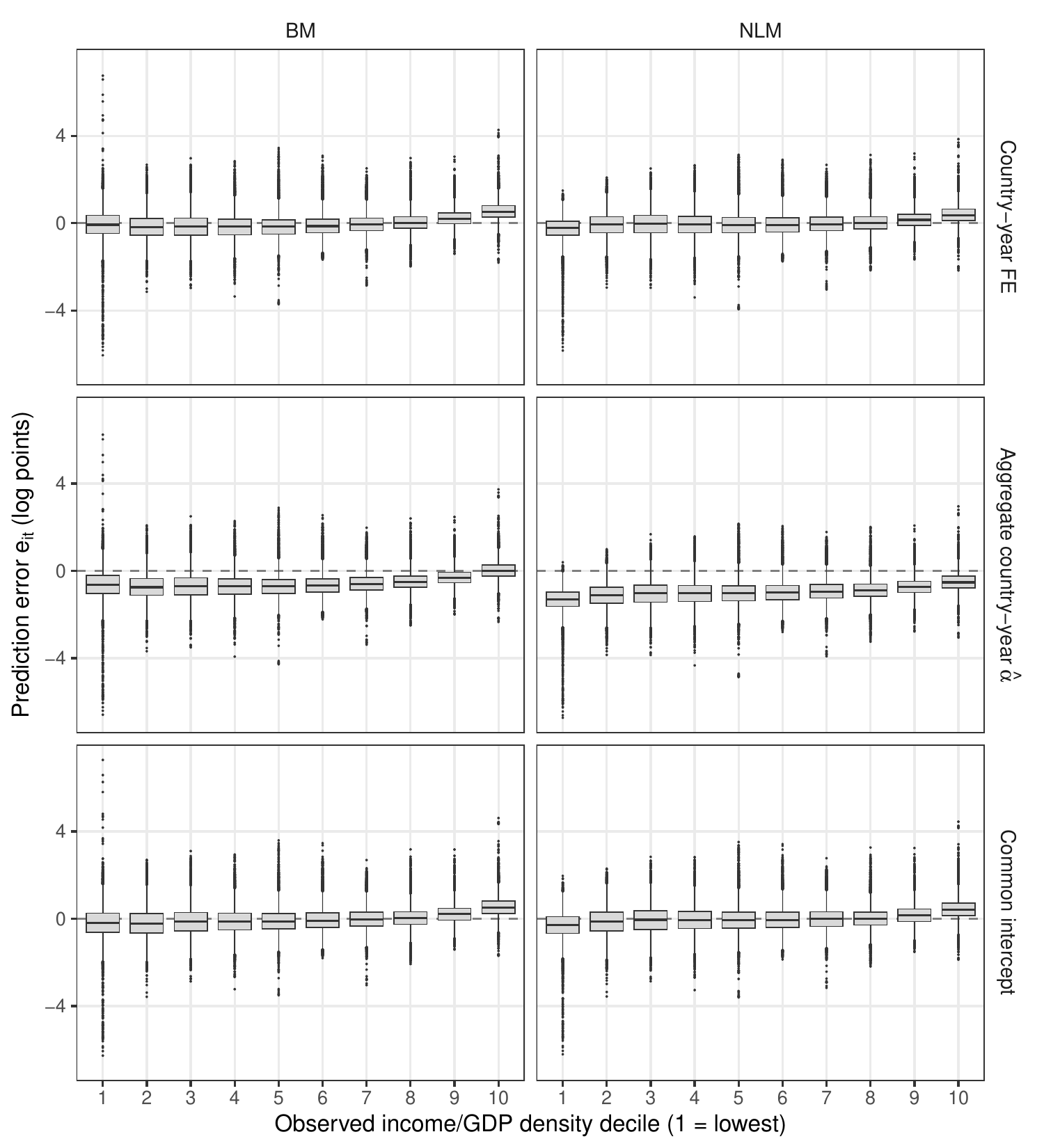}
\caption{Distribution of prediction errors from the benchmark pooled finest-level calibration, by model, intercept calibration, and decile of observed local income or GDP density (decile 1 = lowest). The exercise pools Brazil MUN, Italy MUN, and USA COUNTY observations.}
\label{fig:pooledFinestPredictionErrorByIncome}
\end{figure}

\clearpage

\section{Indonesia Aggregation Check}
\label{app:indonesiaAggregation}

Indonesia provides an empirical check of the aggregation mechanism because GDP and NTL are observed at two administrative scales and the kabupaten/kota units can be further aggregated into coarser random partitions. Starting from the kabupaten/kota sample, we repeatedly form partitions of increasing size and re-estimate BM on each partition. Figure~\ref{fig:betaByScaleIDN} reports the resulting elasticity estimates together with the administrative anchors.

The pattern is consistent with the analytical benchmark in the Appendix of the paper and with the Monte Carlo results in Section~\ref{sec:montecarlo}. The estimated elasticity is below one at the kabupaten/kota level, rises when observations are aggregated to provinces, and drifts toward unity under progressively coarser random partitions. Thus, in a setting where the finest-scale elasticity is below one, aggregation can make the relationship appear closer to the unit-elastic benchmark than it is at the local scale.

\begin{figure}[!htbp]
\centering
\includegraphics[width=0.7\linewidth]{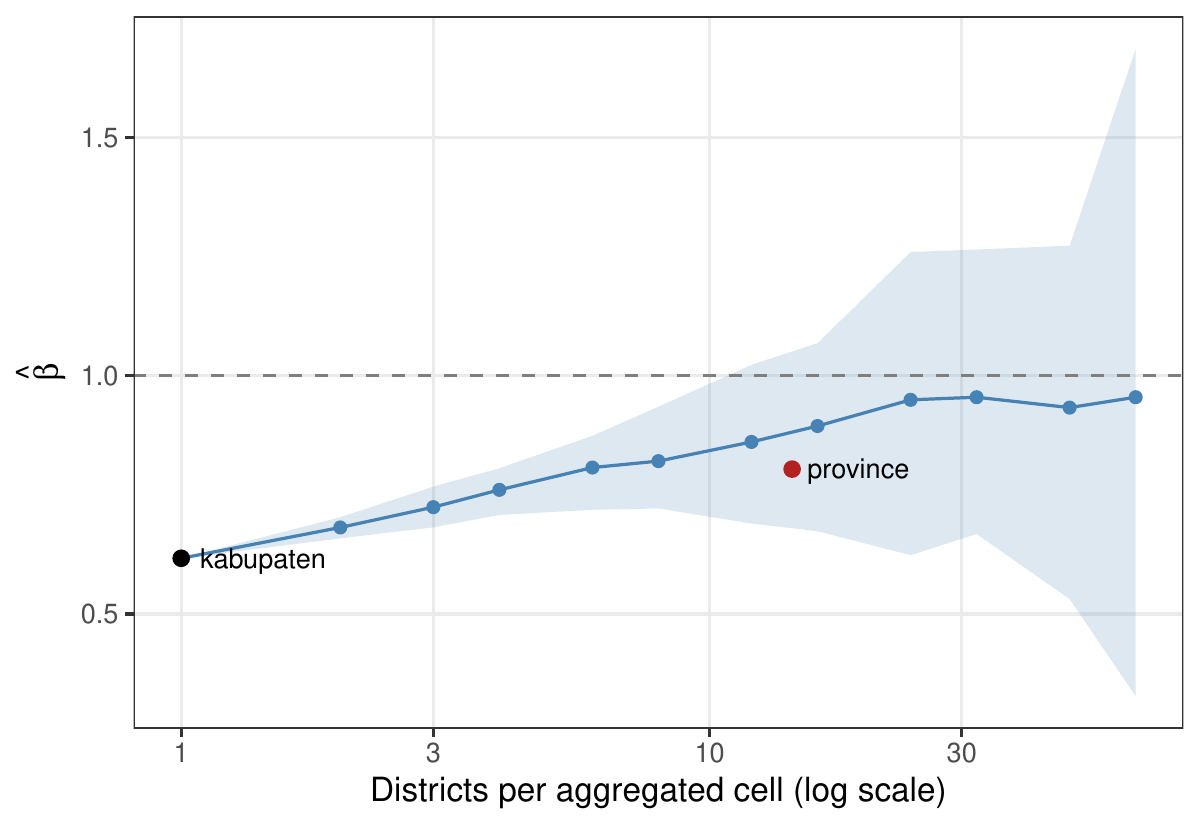}
\caption{Indonesia: estimated NTL elasticity $\hat\beta$ as the 487 kabupaten/kota are aggregated into units of increasing size (random partitions, 95\% band), with the administrative anchors (kabupaten, province). The elasticity drifts from the low finest-scale estimate toward unity, as predicted by Theorem~\ref{teo:aggregation} for $\mu<1$.}
\label{fig:betaByScaleIDN}
\end{figure}

\clearpage

\section{An Aggregation-Corrected Elementary Elasticity}
\label{app:muCorrected}

This appendix develops the supporting evidence referenced in Section~\ref{sec:betaElasticity}. The third implication of Theorem~\ref{teo:aggregation} discussed in Section~\ref{subsec:mcAnalyticalBenchmark} can be taken to the data: assuming negligible measurement error ($\kappa_\infty=0$), Eq.~\eqref{eq:mcBiasApprox} can be inverted to recover the elementary (cell-level) elasticity $\mu$ from the estimated slope and the empirical projection coefficients of Table~\ref{tab:deltaRhoEmpirics}. Since the theorem applies to the regression of log totals on log totals, we first estimate the naive slope $\hat\beta^{\mathrm{biased}}$ at each country and administrative level, and then solve $\hat\beta^{\mathrm{biased}} = \mu + (1-\mu)\left[\delta_P - \tfrac{\mu}{2}\rho_P\right]$ for $\mu$. Table~\ref{tab:muCorrected} reports the resulting $\hat\mu$. For comparison, the first column reports the Baseline-Model slope $\hat\beta^{\mathrm{BM}}$ on the same samples: the BM corrects for aggregation \emph{by design}, through densities and the area control, so the gap between $\hat\beta^{\mathrm{BM}}$ and $\hat\beta^{\mathrm{biased}}$ visualises the correction that the inversion must deliver \emph{ex post}, and the two routes can be checked against each other.

Three lessons emerge. First, at the finest scales the corrected elasticities confirm and sharpen the cross-country pattern of Figure~\ref{fig:estimatedElasticity_countries}: $\hat\mu$ is $1.00$ for Italian municipalities, $0.91$ for U.S. counties, $0.76$ for Brazilian municipalities, and $0.70$ for Indonesian kabupaten/kota. The ordering matches income levels and luminosity, and the ex post corrections sit close to the by-design ones: $\hat\mu$ is within a few hundredths of $\hat\beta^{\mathrm{BM}}$ in Italy, the United States, and Indonesia, while for Brazilian municipalities it lies somewhat below ($0.76$ against $0.87$), consistent with $\hat\mu$ being a cell-level object: the cell-to-municipality aggregation step, which the municipal BM cannot remove, is already at work there. For Indonesia the full $(\delta,\rho)$ correction is essential: ignoring the within-unit dispersion term $\rho_P$ would deliver a meaningless negative value, while the full inversion yields an elementary elasticity consistent with the below-unity estimates of Section~\ref{sec:betaElasticity}. Second, the correction degrades exactly where Theorem~\ref{teo:aggregation} predicts: at state and regional scales the quadratic has no real root or diverges, mirroring the large-remainder region of the Monte Carlo verification (Figure~\ref{fig:mcBiasApproxVerification}). The inversion is therefore a fine-scale tool, not a substitute for estimation at the appropriate resolution. Third, Kenya is informative through failure: the naive county slope ($0.52$) lies below $\delta_P=0.77$, a configuration that no non-negative $\mu$ can generate under $\kappa_\infty=0$. The maintained assumption of negligible measurement error is untenable there, consistent with lights missing a substantial share of agricultural and informal activity in low-luminosity settings.

\input{tables/table_muCorrected.tex}

\clearpage

\section{Inference Robustness}
\label{app:completeResultLinear}

This appendix documents the robustness of the inference for \emph{all} model parameters, ensuring that our conclusions are not artefacts of specific assumptions about the residual covariance structure. Throughout we report $95\%$ confidence intervals rather than standard errors, so that the precision of each estimate can be read directly.

Table~\ref{tab:bm_full_robust} gives the complete set of Baseline Model parameters --- the time-varying intercepts $\alpha_t$, the year effects, the elasticity $\beta$, and the area coefficient $\gamma$ --- and Table~\ref{tab:nlm_robust} the Nonlinear Model parameters, including the luminosity shift $\ell_0$; a consistent battery of inferential methods is applied to each specification.

The first inferential baseline constructs confidence intervals by clustering standard errors at the individual spatial unit level. This allows the residuals of a given municipality, county, or province to be arbitrarily serially correlated over time, which represents the standard approach in short administrative panels and corresponds to the inference displayed in Figures~\ref{fig:estimatedElasticity_countries} and~\ref{fig:estimatedGamma_countries}.

The second baseline implements a more conservative clustering strategy at a higher spatial level --- Brazilian states for municipalities, Italian NUTS3 provinces for municipalities, U.S. states for counties, and Indonesian provinces for kabupaten/kota. This multi-level clustering framework relaxes the assumption of spatial independence across neighbouring borders by allowing for unrestricted cross-sectional correlation among all nested sub-units. A non-parametric Conley spatial-HAC estimator offers a natural complement, at the cost of choosing coordinate centroids and a distance bandwidth. Our higher-level clustering is a convenient alternative that sidesteps these tuning choices and absorbs localised spatial spillovers and policy shocks operating along regional administrative boundaries; because it allows unrestricted correlation within large administrative blocks, it is if anything more conservative than a Conley estimator with a moderate bandwidth. We view the two as complementary rather than as substitutes. For large panels, the cluster-robust variance matrix uses the standard CR1S sandwich estimator, while for smaller panels the CR2 small-sample correction is applied where computationally feasible.

As an additional diagnostic check --- particularly reliable when the number of aggregate spatial clusters is relatively small --- we compute wild cluster bootstrap confidence intervals (using Rademacher weights, a percentile-$t$ approach, and $1{,}999$ replications) applied directly at the higher geographic cluster tier. The resulting bootstrap intervals remain remarkably close to the analytical cluster-robust counterparts throughout the analysis --- for instance, yielding $[0.79,0.96]$ against $[0.79,0.95]$ at the Brazilian municipal scale, and $[0.89,1.02]$ against $[0.89,1.02]$ for U.S. counties. This tight alignment confirms that our inferential conclusions are robust to the choice of variance estimator and asymptotic approximations.

Finally, Table~\ref{tab:beta_tost} operationalises the discussion on the NTL elasticity via a formal \emph{equivalence testing} framework. In large administrative panels containing thousands of observations, conventional hypothesis testing can be misleading: because of high statistical power, even economically trivial departures from unity will reject the point null hypothesis $H_0: \beta=1$. To shift the focus from mere statistical power to genuine economic significance, we implement the Two One-Sided Tests (TOST) procedure. This framework tests the null hypothesis of non-equivalence against the alternative that $\beta$ falls entirely within a pre-specified stable range $(1-\delta, 1+\delta)$. We evaluate this at two conservative tolerance thresholds, $\delta=10\%$ and $\delta=20\%$.

The TOST results confirm that equivalence to unit elasticity holds at the $5\%$ significance level (where the $90\%$ confidence interval is entirely contained within the tolerance bounds) under both baseline and higher-geographic clustering for the benchmark group. Specifically, Italian municipalities and U.S. counties exhibit formal equivalence within the strict $\pm10\%$ interval, while all three benchmark economies --- including Brazilian municipalities --- satisfy equivalence within the $\pm20\%$ range. Conversely, Kenya and Indonesia systematically reject the hypothesis of equivalence at either tolerance threshold, showing that near-unit elasticity is a distinct empirical feature of the benchmark contexts rather than a mechanical byproduct of the spatial data.

\input{tables/table_bm_full_robust.tex}

\input{tables/table_nlm_robust.tex}

\input{tables/table_beta_tost.tex}

\clearpage

\section{Selected Literature and Raw NTL Maps}
\label{app:literatureTable}

This final appendix collects descriptive material that supports the motivation and data discussion in Sections~\ref{sec:introduction} and~\ref{sec:dataset}. Table~\ref{tab:nightlights_gdp} summarises the heterogeneity in economic target, satellite source, NTL transformation, and estimation strategy across selected studies. Figure~\ref{app:rawMaps} reports the raw 2019 Black Marble NTL maps used as descriptive background in Section~\ref{sec:dataset}.

\begin{table}[!htbp]
\centering
\tiny
\caption{Selected studies using NTL as a proxy for economic activity and related socio-economic outcomes.}
\label{tab:nightlights_gdp}
\begin{tabular}{lllll}
	\hline
	\hline
	\textbf{Study} & \textbf{Economic measure} & \textbf{Source of } & \textbf{NTL} & \textbf{Estimation}  \\
	 & & \textbf{satellite data} & & \textbf{details} \\
	\midrule
	Sutton \& Costanza (2002) & GDP per $km^2$ & DMSP-OLS & Light density & U.S. states, 1995 \\
	Chen \& Nordhaus (2011) & GDP density/growth & DMSP-OLS & Light density & Grid cells, 1992--2008 \\
	Henderson et al. (2012) & Total GDP & DMSP-OLS & Area-weighted light & Countries, panel FE \\
	Galimberti (2020) & GDP growth & DMSP-OLS & Light growth & Countries, panel/AR models \\
	Hu \& Yao (2022) & GDP growth & DMSP-OLS & Sum of light & Countries, panel FE \\
	Martinez (2022) & Total GDP & DMSP-OLS/VIIRS & Light density & Countries, panel FE \\
	Gibson (2021) & Total GDP & DMSP-OLS/VIIRS & Sum of light & European NUTS2 \\
	Bluhm et al. (2022) & GDP density & DMSP-OLS & Light density & Subnational, multi-scale \\
	Gibson et al. (2024) & Local GDP growth & DMSP-OLS/VIIRS & Corrected light growth & Subnational, quantile regression \\
	Rossi-Hansberg \& Zhang (2025) & GDP and growth & VIIRS Black Marble & Light density/growth & Global gridded GDP \\
	Jean et al. (2016) & Poverty/wealth proxy & DMSP-OLS + daytime & Mean light & CNN, African countries \\
	Mirza et al. (2021) & Income inequality & DMSP-OLS/VIIRS & Light-intensity Gini & Subnational panel \\
	Abbes et al. (2023) & Wealth proxy & VIIRS + daytime & Mean light & Deep learning, Africa \\
	Lehnert et al. (2023) & Total GDP & Landsat + NTL & Sum of light & Machine learning, Germany \\
	Giannini (2025) & Personal income & Black Marble & Sum of light & Recurrent neural network, Italy \\
	\hline
	\hline
\end{tabular}
\end{table}

\begin{figure}[!hbtp]
\centering
\caption{Raw NTL in 2019 at approximately 500-meter resolution for Brazil, Italy, the United States, Kenya, and Indonesia.}
\label{app:rawMaps}
\subcaptionbox{Brazil}{\includegraphics[height=3.2cm]{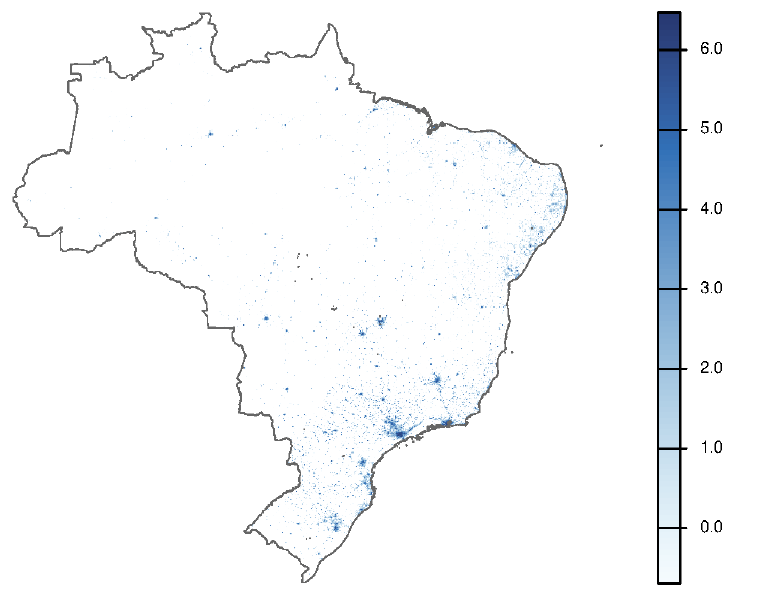}}\hfill
\subcaptionbox{Italy}{\includegraphics[height=3.2cm]{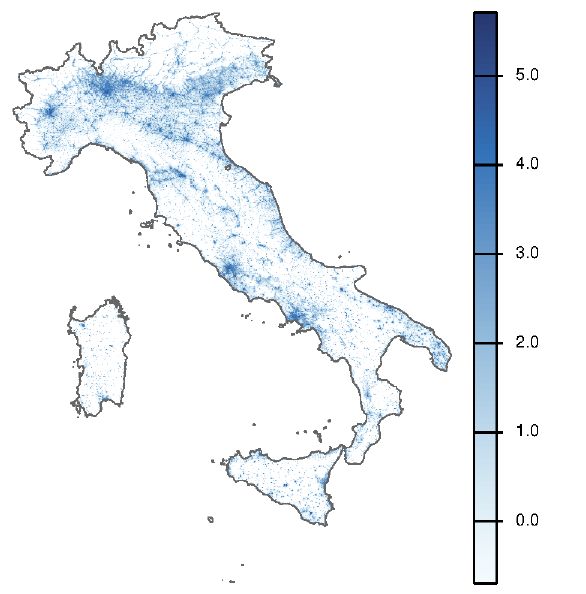}}\hfill
\subcaptionbox{Kenya}{\includegraphics[height=3.2cm]{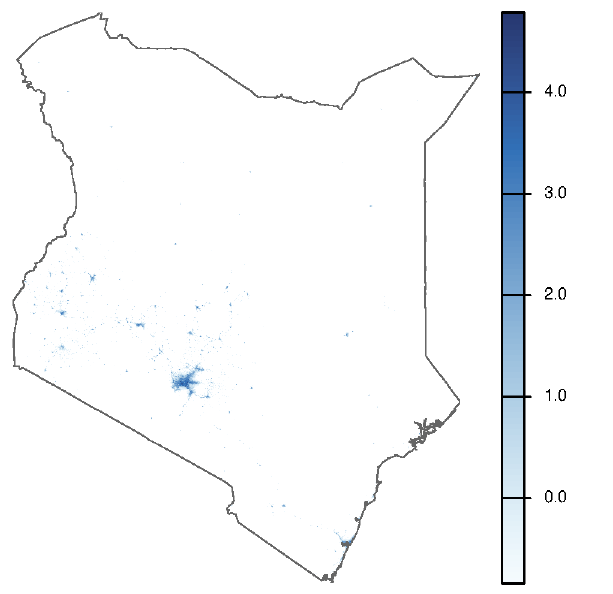}}\\
\subcaptionbox{USA}{\includegraphics[height=3.2cm]{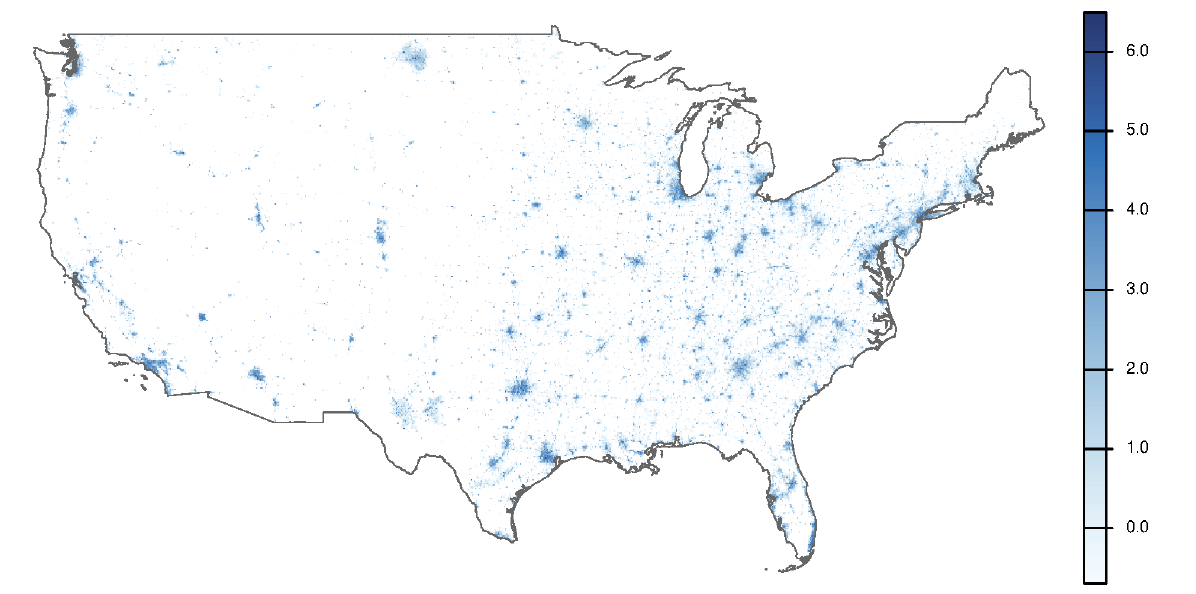}}\hfill
\subcaptionbox{Indonesia}{\includegraphics[height=3.2cm]{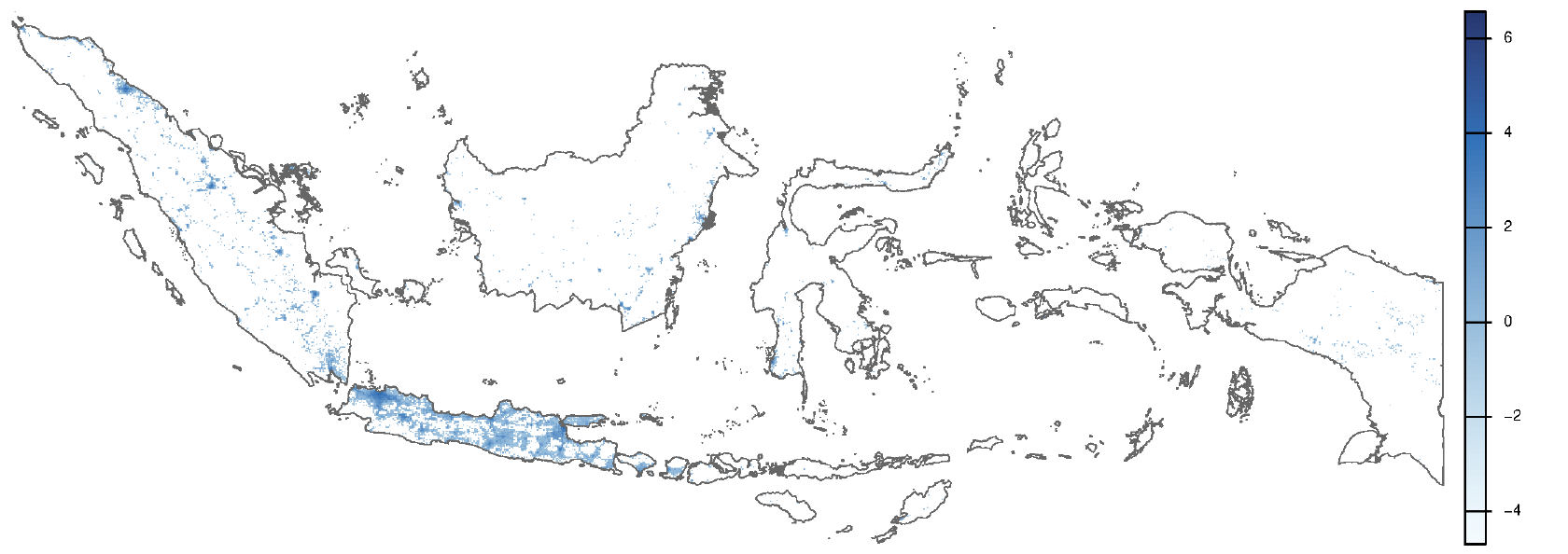}}
\begin{flushleft}
	\textit{\small Source: NASA Black Marble NTL.}
\end{flushleft}
\end{figure}

\clearpage

\section{Additional Monte Carlo Figures}
\label{app:mcFigures}
This appendix collects the Monte Carlo figures referenced in Section~\ref{sec:MonteCarloExperiments} that complement the summary evidence retained in the main text: the totals-regression estimates of the elasticity and scale parameter (Figure~\ref{fig:mcBetaAlphaTotal}) and the reverse-regression attenuation experiments for the naive and preferred specifications (Figures~\ref{fig:mcNoiseNaive} and~\ref{fig:mcNoise}).

\begin{figure}[!htbp]
\centering
\includegraphics[width=\textwidth]{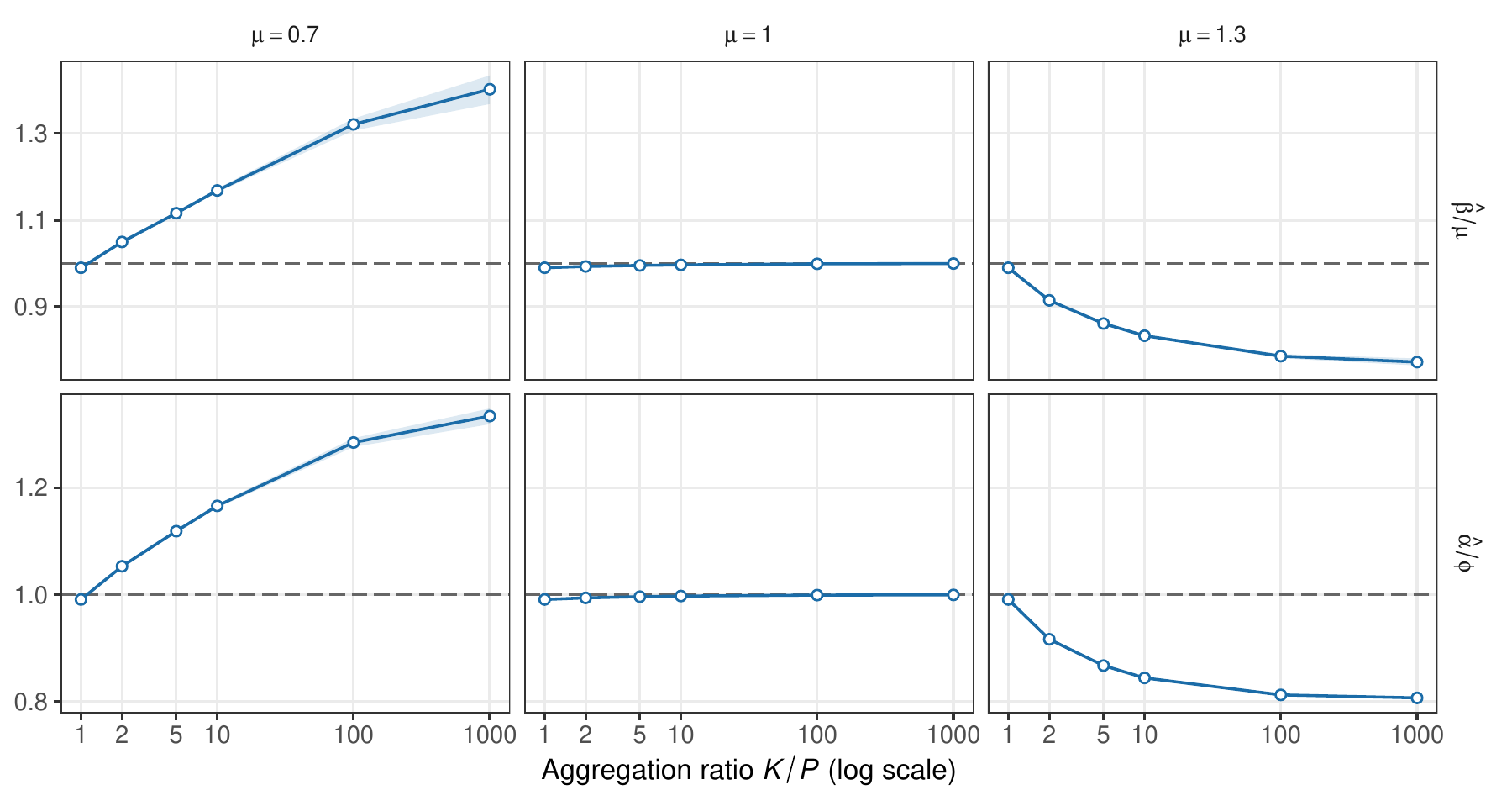}
\caption{Monte Carlo estimates of Model~\eqref{eq:lightOnIncomeMC} (log totals on log totals), $\sigma_\eps=0.1$. Lines/points: median across the 1000 replications; bands: 90\% intervals combining Monte Carlo and estimation uncertainty. First row: $\hat\beta/\mu$; second row: $\hat\alpha/\phi$. Columns: true elasticities $\mu=0.7,1,1.3$; horizontal axis: $K/P$ (log scale).}
\label{fig:mcBetaAlphaTotal}
\end{figure}

\begin{figure}[!htbp]
\centering
\includegraphics[width=\textwidth]{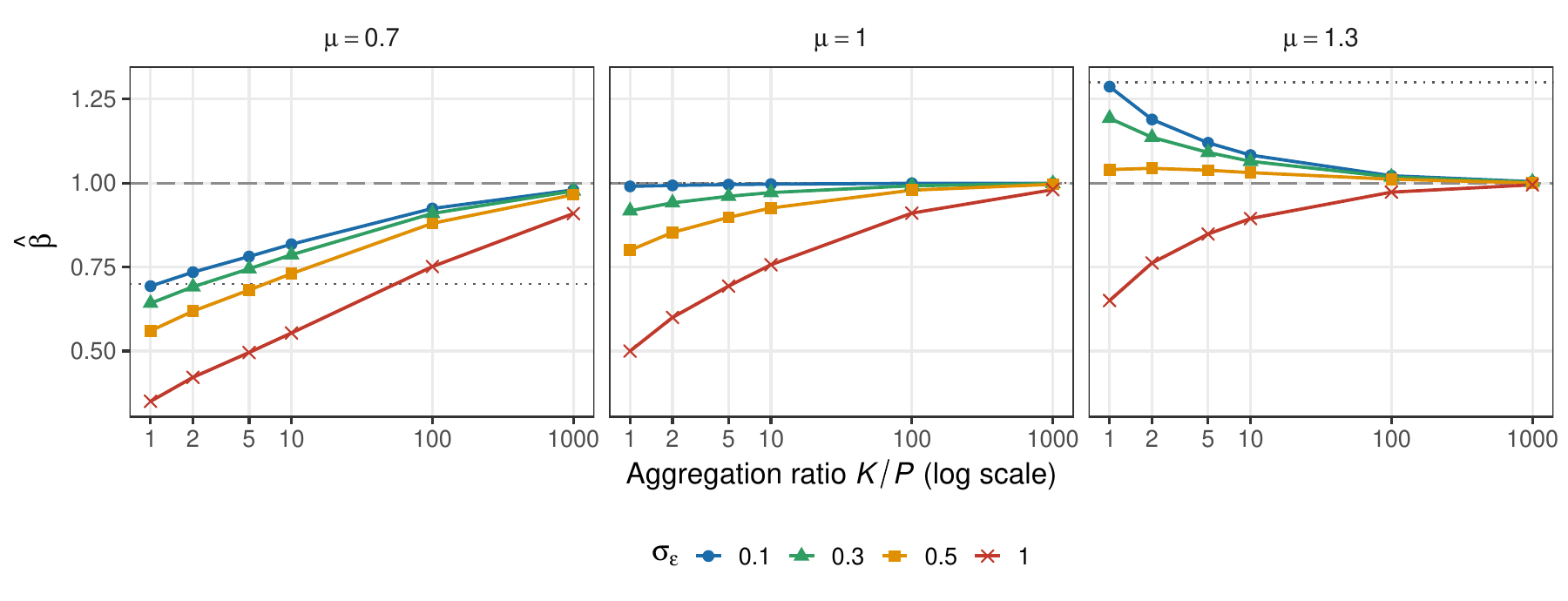}
\caption{Monte Carlo: reverse-regression attenuation and spatial aggregation in Model~\eqref{eq:lightOnIncomeMC} (log totals on log totals) as $\sigma_\eps$ varies. Columns are the true elasticity $\mu$; colours are the idiosyncratic-shock standard deviation $\sigma_\eps$. The dotted line is the true $\mu$ and the dashed line the aggregation fixed point $\hat\beta=1$. At the finest scale ($K/P=1$) $\hat\beta\approx\lambda\mu$ with $\lambda=\sigma_\nu^2/(\sigma_\nu^2+\sigma_\eps^2)$; under aggregation the slope is contracted toward one.}
\label{fig:mcNoiseNaive}
\end{figure}

\begin{figure}[!htbp]
\centering
\includegraphics[width=\textwidth]{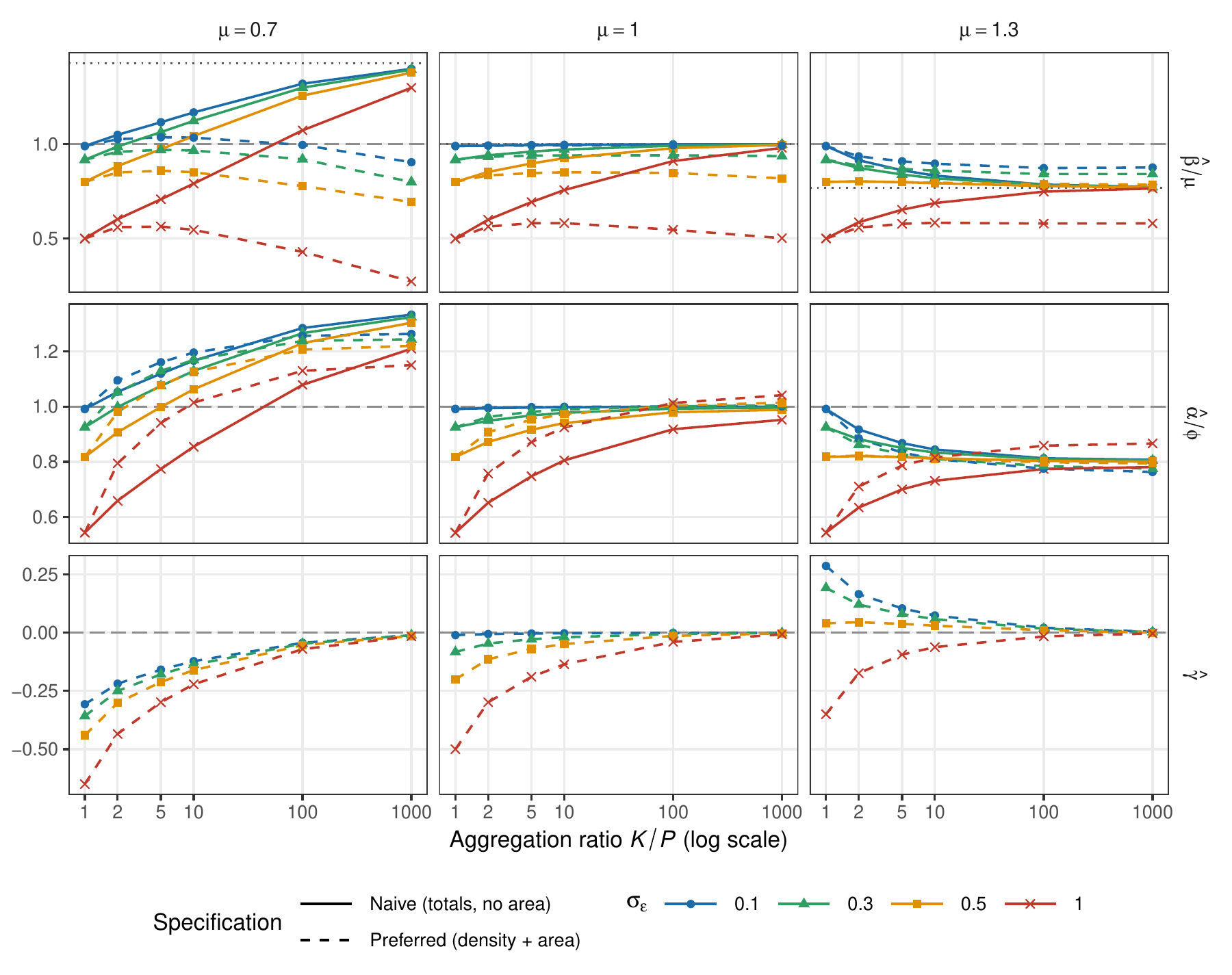}
\caption{Monte Carlo: reverse-regression attenuation and spatial aggregation as $\sigma_\eps$ varies, for the naive specification (log totals on log totals, solid lines) and the preferred specification (densities with an area control, Equation~\eqref{eq:mcModelDensityArea}, dashed lines). As in Figure~\ref{fig:mcBetaGammaSixPanel}, the rows report $\hat\beta/\mu$, $\hat\alpha/\phi$, and $\hat\gamma$; the area coefficient $\hat\gamma$ exists only for the preferred specification. Columns are the true elasticity $\mu$; colours are the idiosyncratic-shock standard deviation $\sigma_\eps$; averages over Monte Carlo replications, with unit area positively correlated with the elementary-location count (the benchmark regime of Figure~\ref{fig:mcBetaGammaSixPanel}). The long-dashed line marks the reference value ($1$ for the ratios, $0$ for $\hat\gamma$); the dotted line in the first row marks the aggregation fixed point $\hat\beta/\mu=1/\mu$. At the finest scale ($K/P=1$) $\hat\beta/\mu\approx\lambda$ with $\lambda=\sigma_\nu^2/(\sigma_\nu^2+\sigma_\eps^2)$; under aggregation the naive ratio is pulled toward $1/\mu$, while the density-and-area ratio stays close to $\lambda$, with a residual drift at the highest noise levels.}
\label{fig:mcNoise}
\end{figure}

\clearpage

\section{Additional Empirical Figures}
\label{app:empFigures}
This appendix collects descriptive and supplementary empirical figures referenced in Section~\ref{sec:empirical}: the map of NTL and economic density at the finest administrative level (Figure~\ref{fig:dataLowerAdmLevel}), the estimated area coefficient (Figure~\ref{fig:estimatedGamma_countries}), and the implied local elasticity of the Nonlinear Model (Figure~\ref{fig:elasticityByNTL}).

\begin{figure}[!hbtp]
\centering
\begin{subfigure}[b]{0.18\linewidth}
\includegraphics[width=\textwidth]{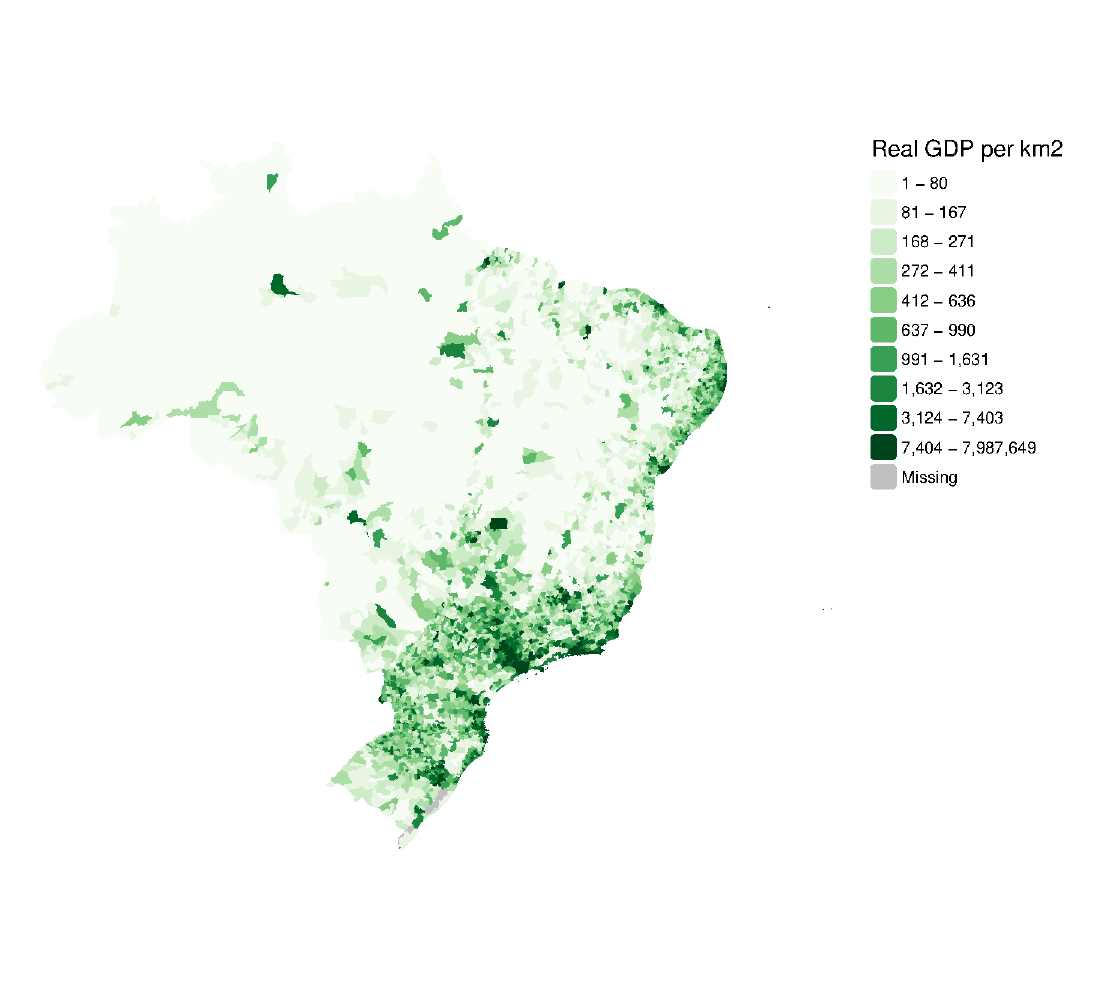}
\caption{Brazil: GDP}
\end{subfigure}
\begin{subfigure}[b]{0.15\linewidth}
\includegraphics[width=\textwidth]{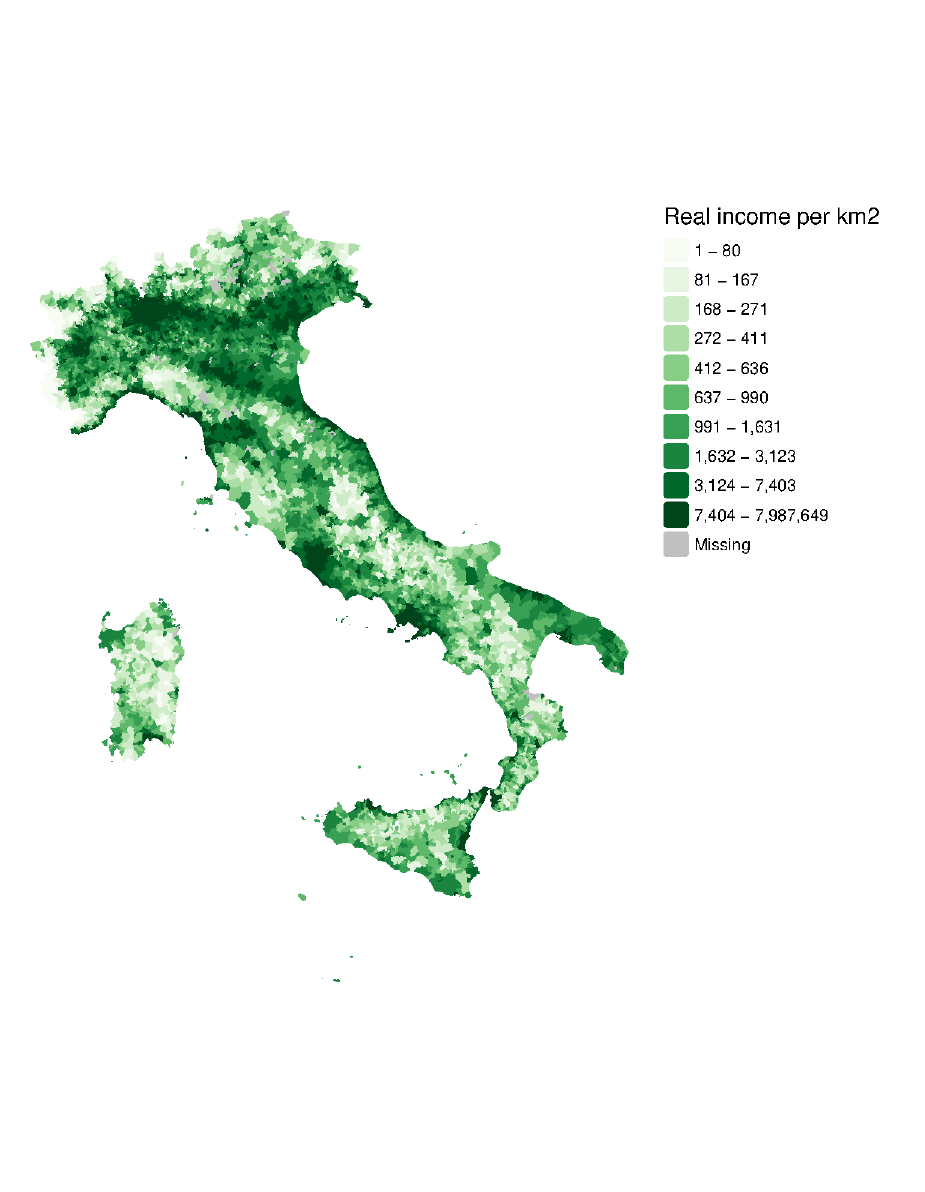}
\caption{Italy: income}
\end{subfigure}
\begin{subfigure}[b]{0.22\linewidth}
\includegraphics[width=\textwidth]{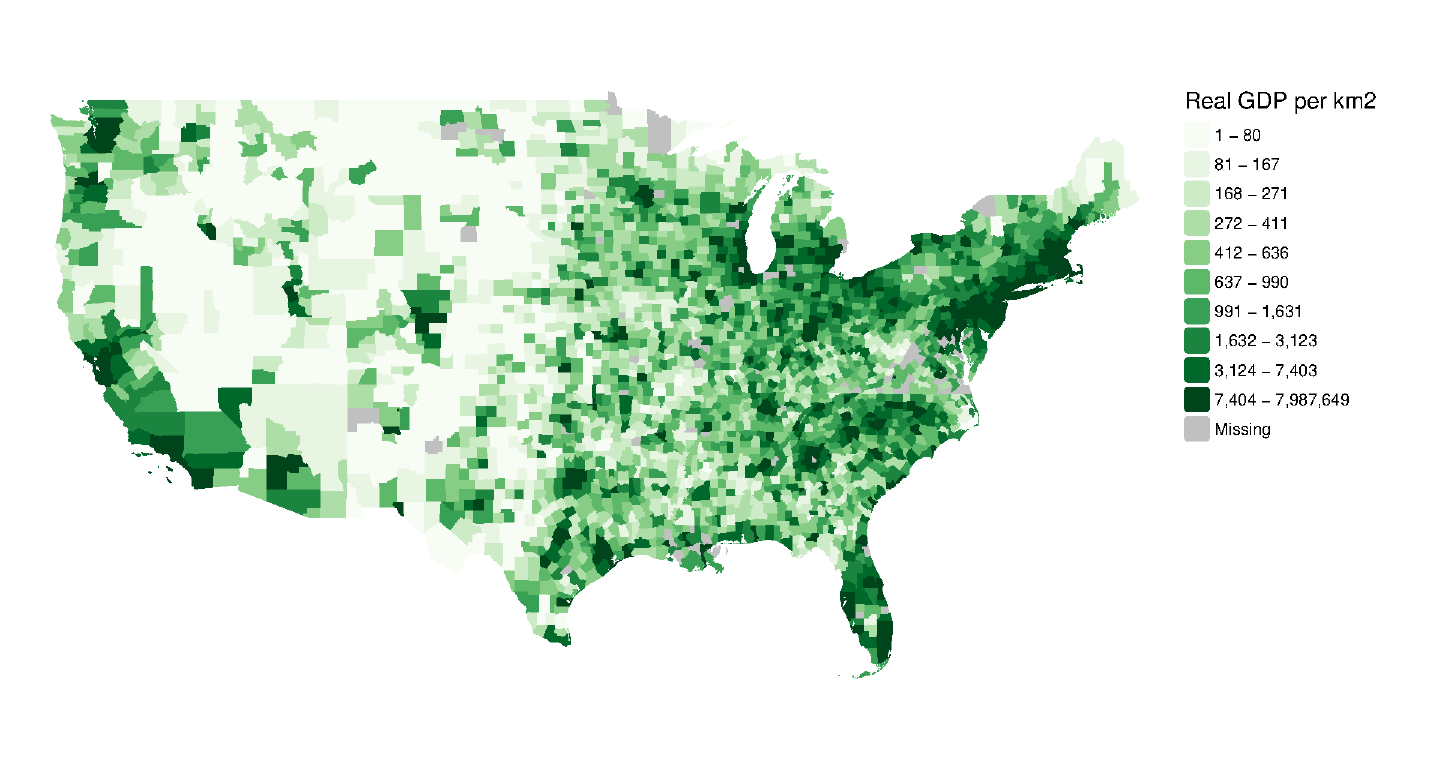}
\caption{U.S.: GDP}
\end{subfigure}
\begin{subfigure}[b]{0.16\linewidth}
\includegraphics[width=\textwidth]{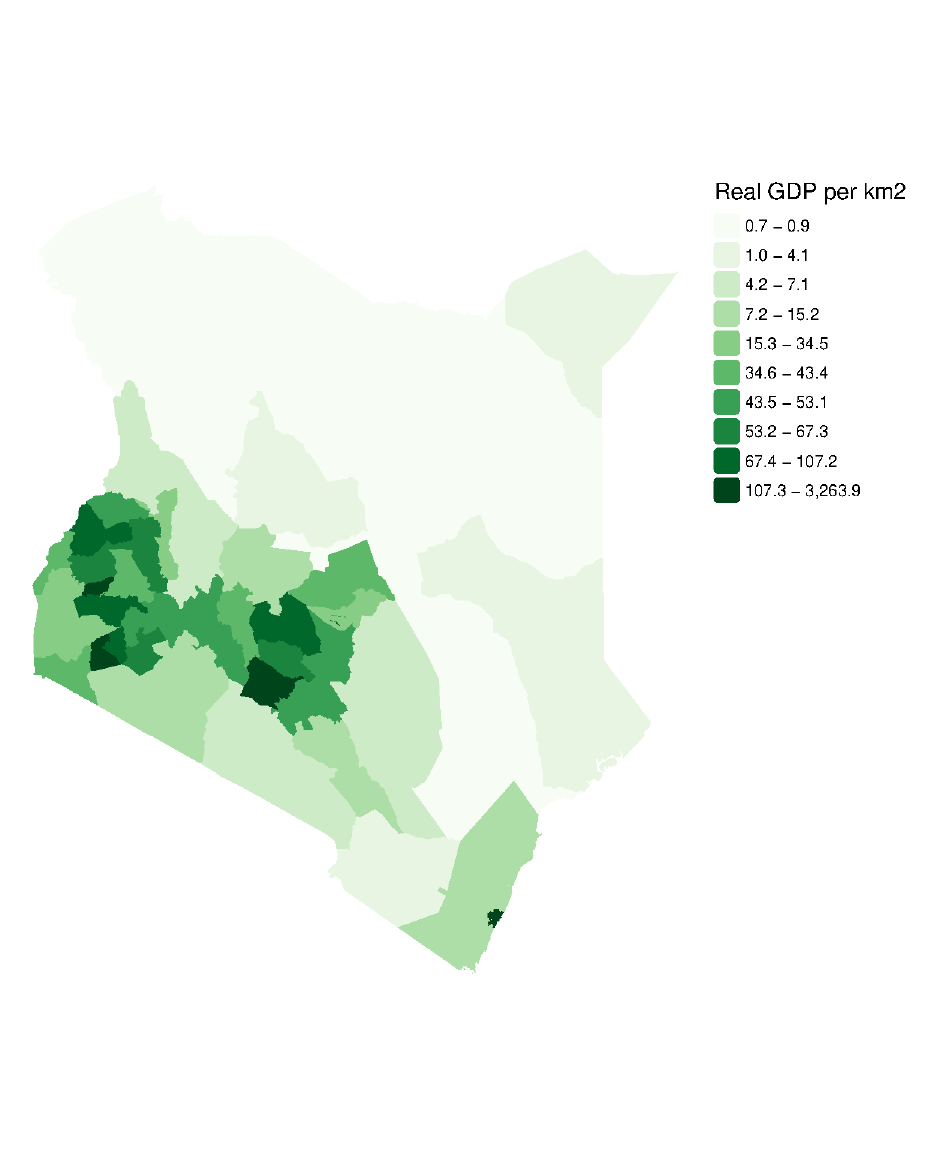}
\caption{Kenya: GDP}
\end{subfigure}
\begin{subfigure}[b]{0.24\linewidth}
\includegraphics[width=\textwidth]{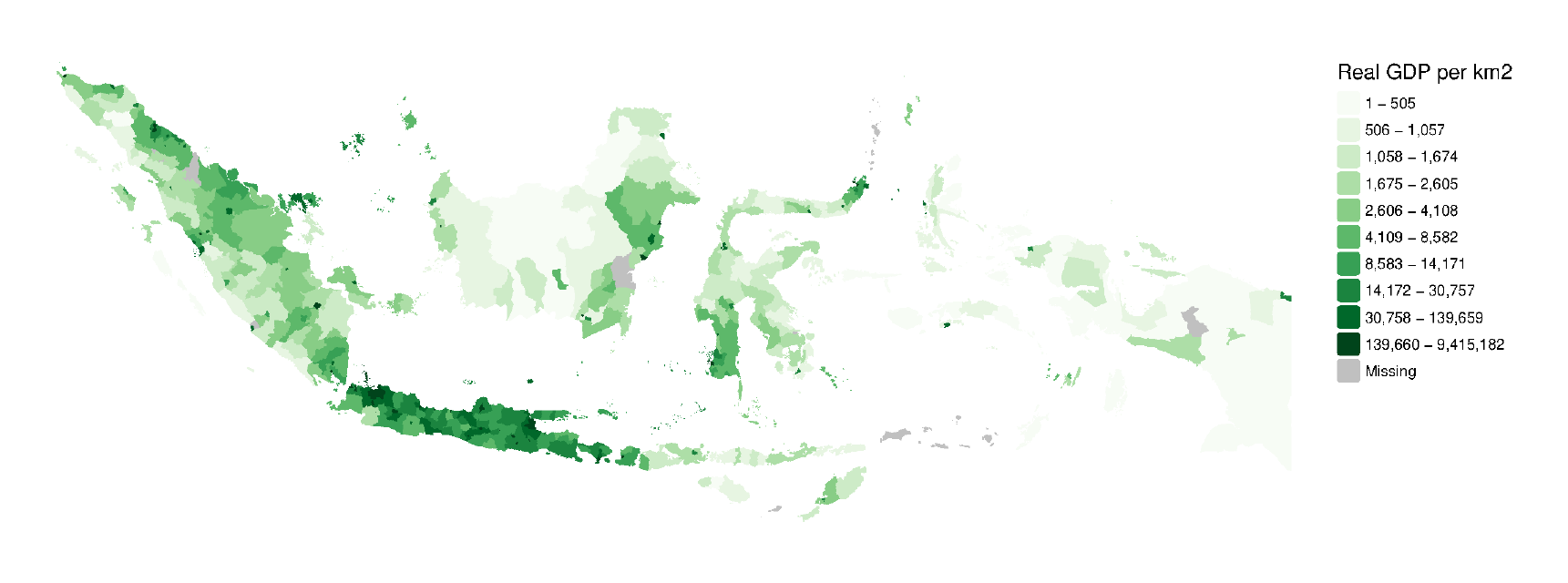}
\caption{Indonesia: GDP}
\end{subfigure}
\begin{subfigure}[b]{0.18\linewidth}
\includegraphics[width=\textwidth]{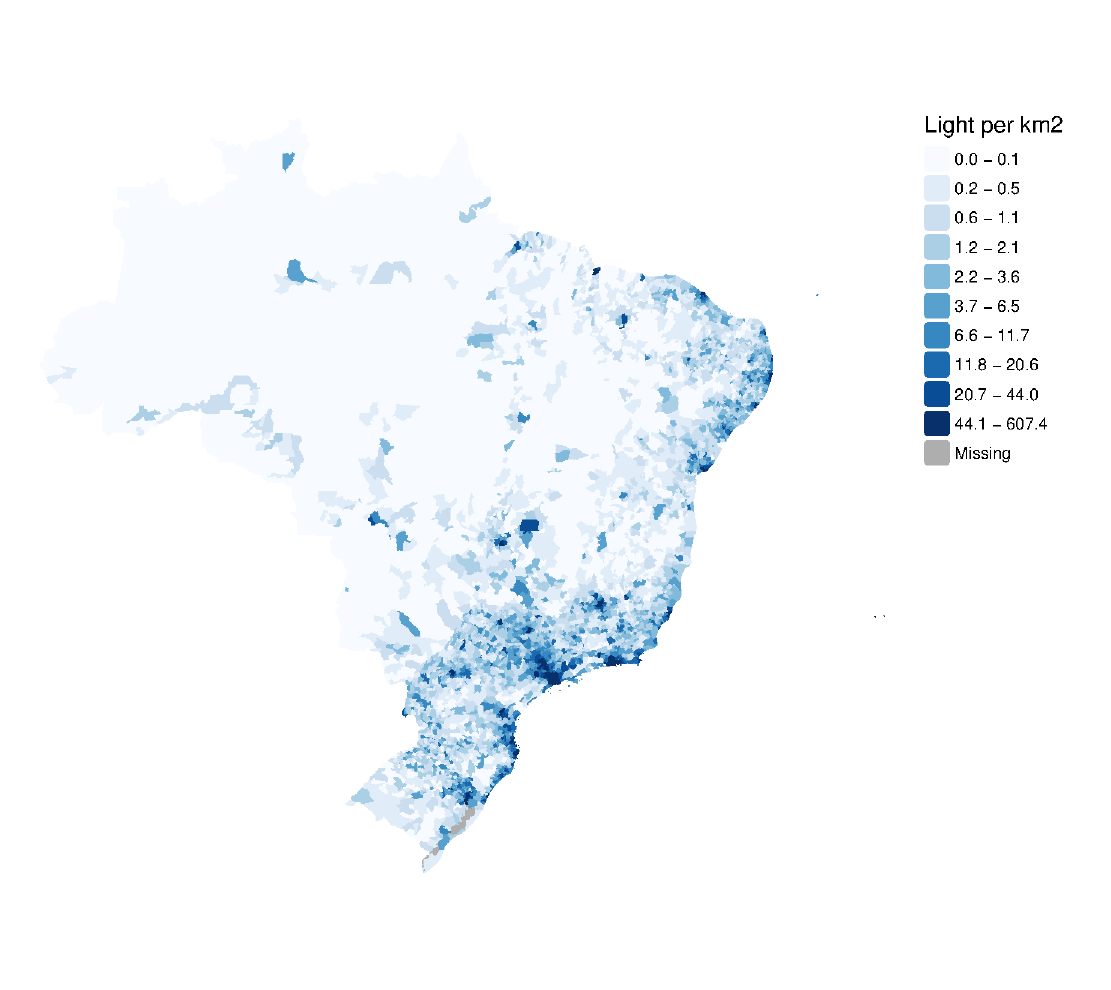}
\caption{Brazil: NTL}
\end{subfigure}
\begin{subfigure}[b]{0.15\linewidth}
\includegraphics[width=\textwidth]{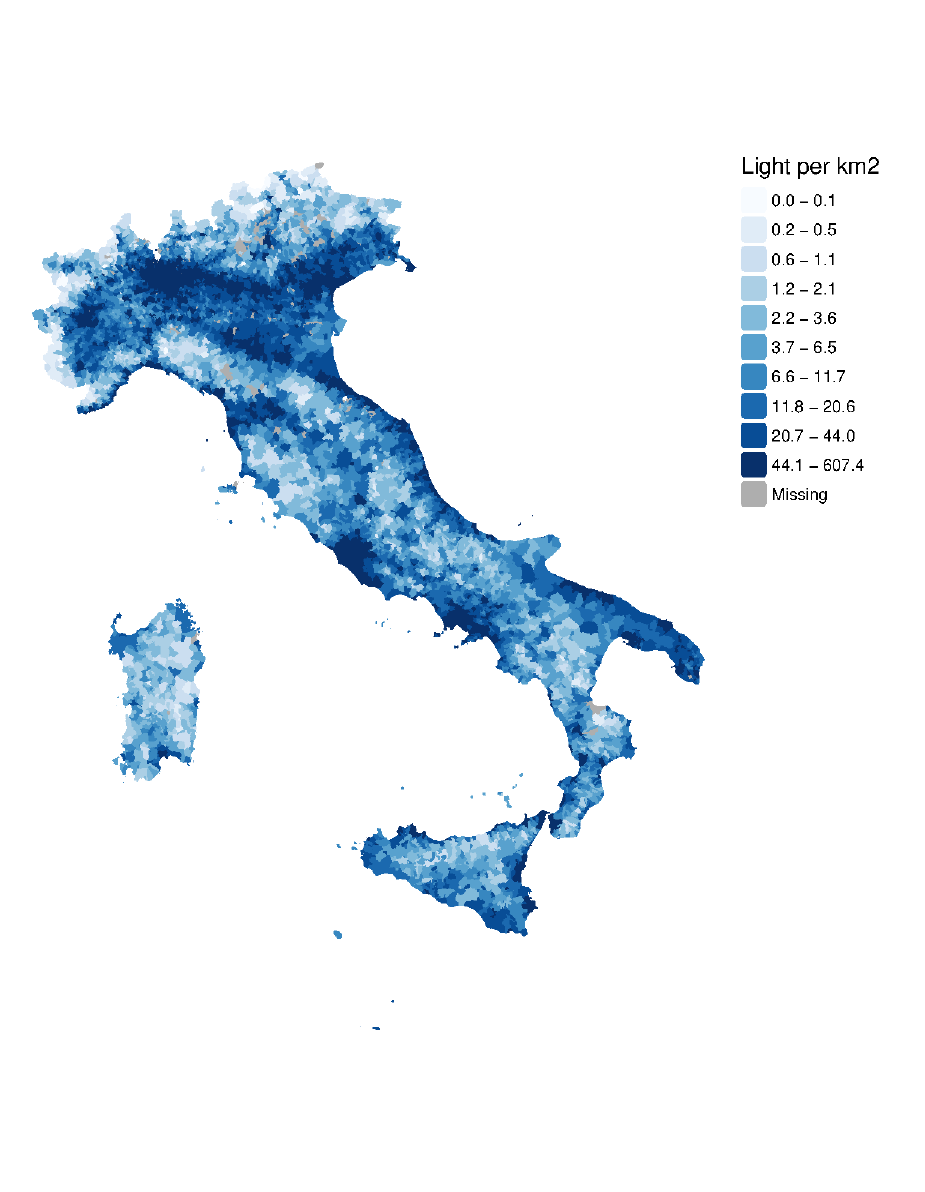}
\caption{Italy: NTL}
\end{subfigure}
\begin{subfigure}[b]{0.22\linewidth}
\includegraphics[width=\textwidth]{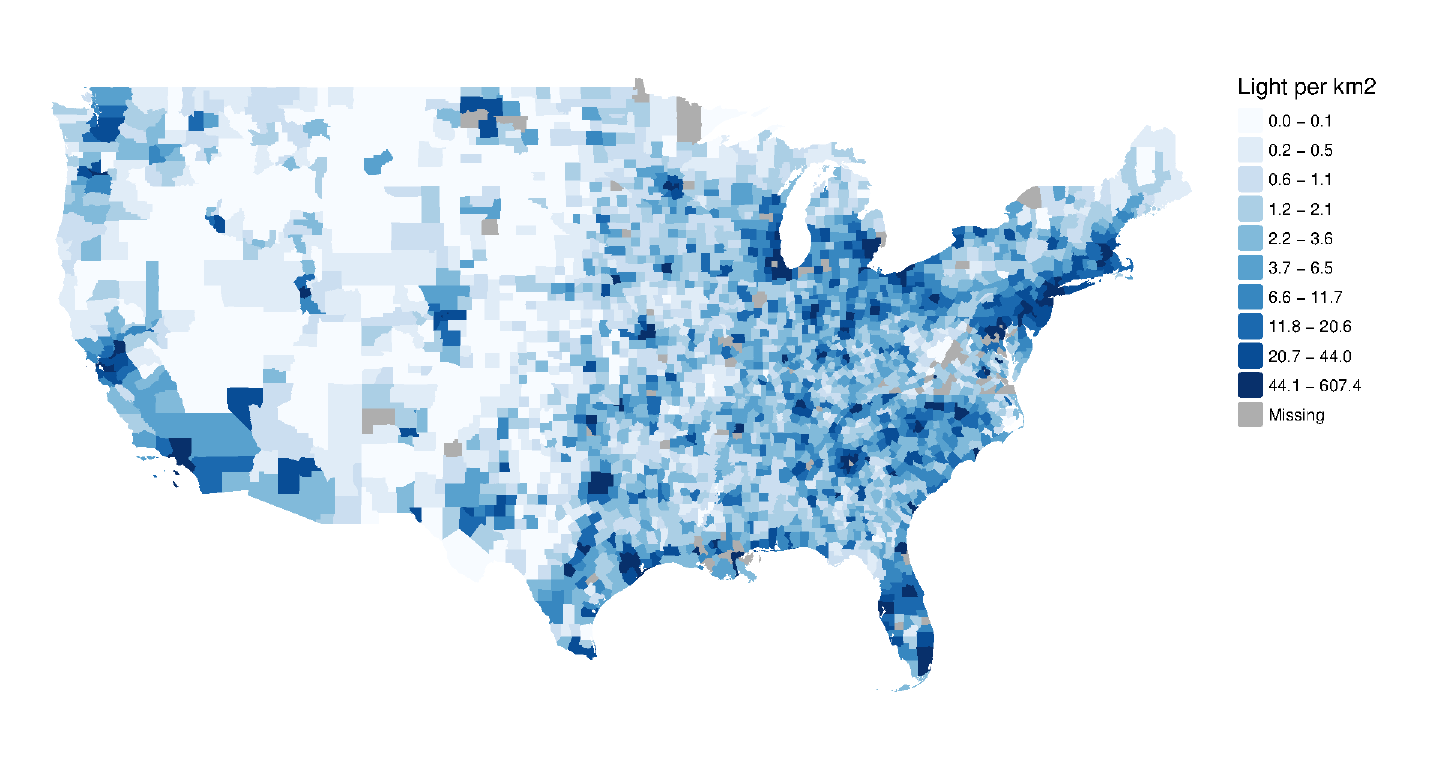}
\caption{U.S.: NTL}
\end{subfigure}
\begin{subfigure}[b]{0.16\linewidth}
\includegraphics[width=\textwidth]{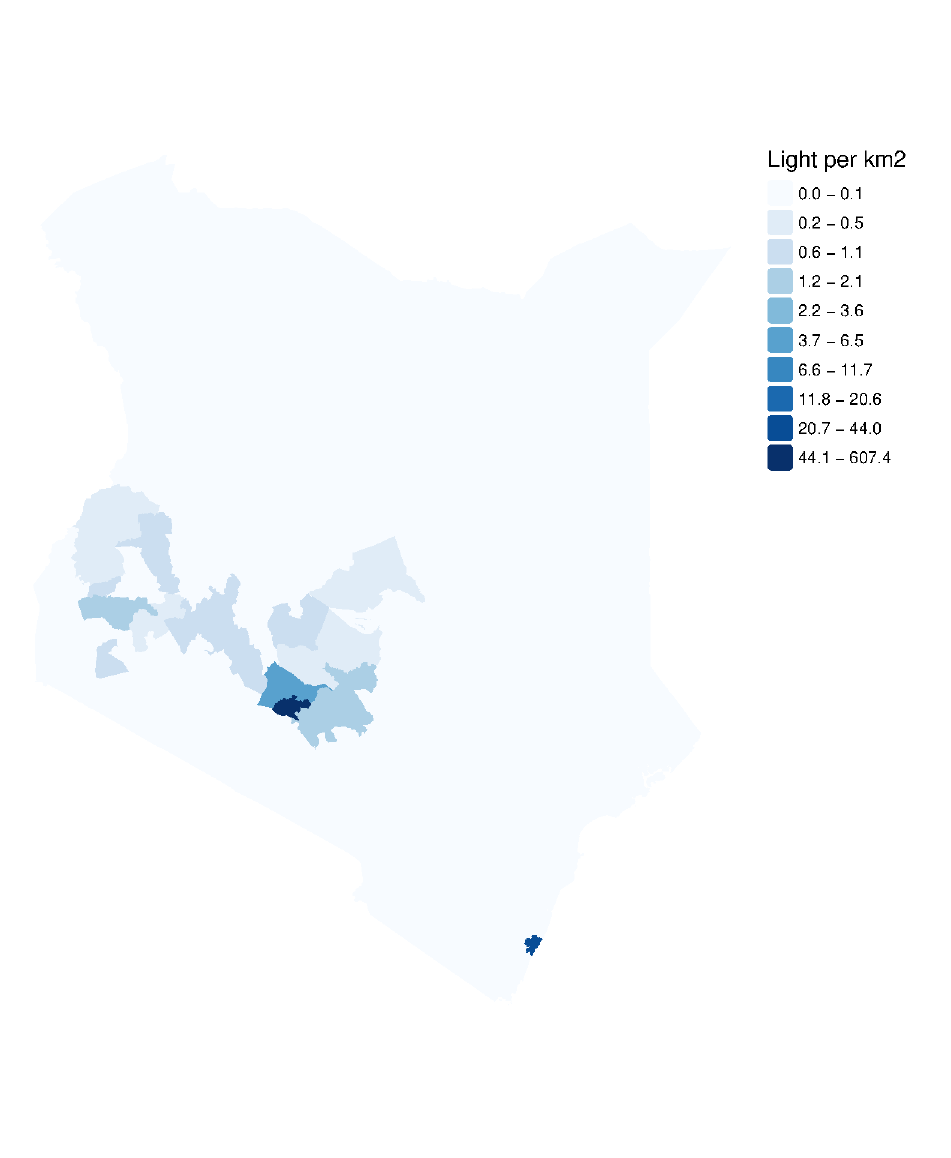}
\caption{Kenya: NTL}
\end{subfigure}
\begin{subfigure}[b]{0.24\linewidth}
\includegraphics[width=\textwidth]{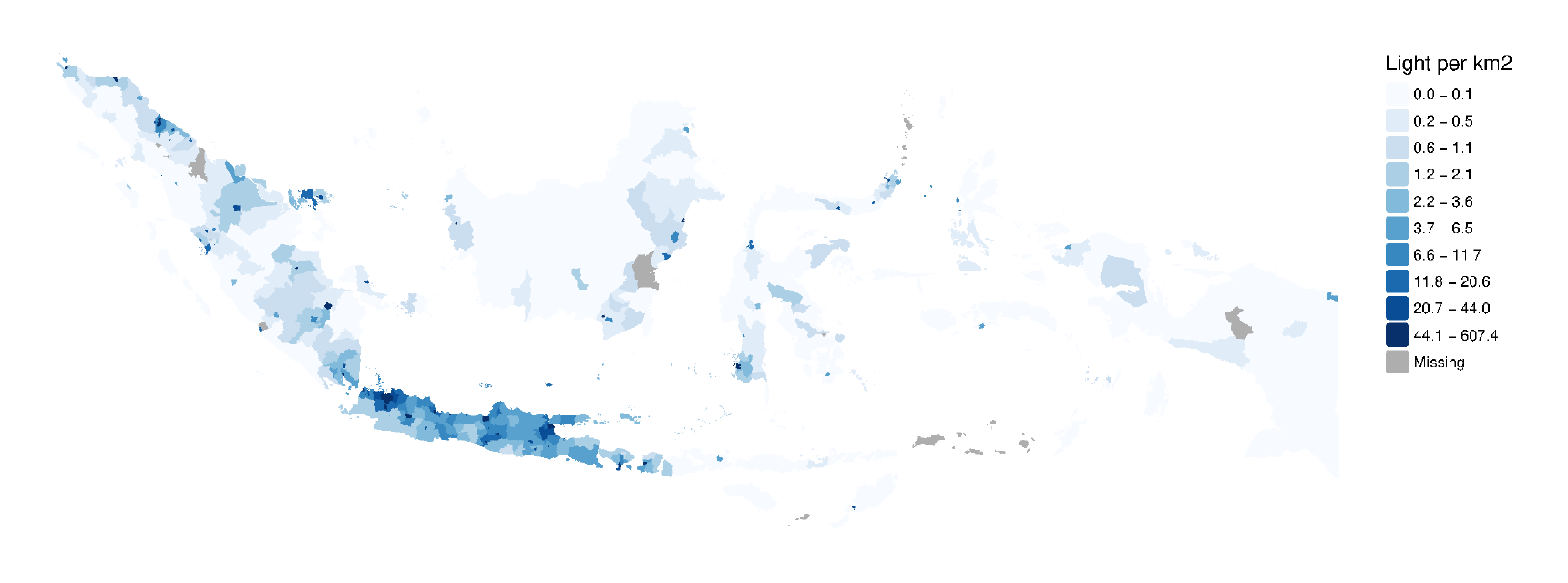}
\caption{Indonesia: NTL}
\end{subfigure}
\caption{NTL, real GDP, and income density per km$^2$ in 2019 at the lowest administrative level: municipalities for Brazil and Italy, counties for the United States and Kenya, and kabupaten/kota for Indonesia. NTL density uses a common cross-country colour scale; GDP/income density uses a common scale for Brazil, Italy, and the United States, and country-specific deciles for Kenya and Indonesia.}
\label{fig:dataLowerAdmLevel}
\begin{flushleft}
\textit{Sources: data from ARDECO, Agenzia delle Entrate, BEA, IBGE, KNBS, BPS/INDO-DAPOER, and World Development Indicators.}
\end{flushleft}
\end{figure}

\begin{figure}[!htbp]
\centering
\includegraphics[width=\textwidth]{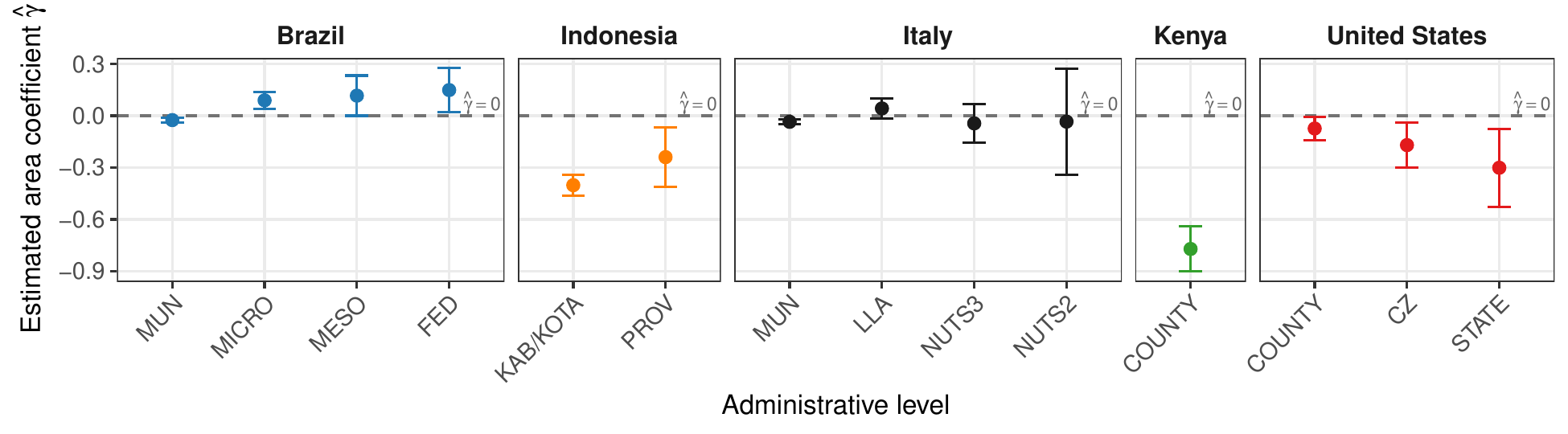}
\caption{Estimated area coefficient ($\gamma$) of Eq.~\eqref{eq:lightOnIncome} by country and administrative aggregation. Error bars represent 95\% confidence intervals based on standard errors clustered by spatial unit. Aggregations are labelled directly in each panel: Italy reports municipalities, local labour areas, NUTS 3 provinces, and NUTS 2 regions; Brazil reports municipalities, microregions, mesoregions, and federative units; the United States reports counties, commuting zones, and states; Kenya reports counties; Indonesia reports kabupaten/kota and provinces.}
\label{fig:estimatedGamma_countries}
\end{figure}

\begin{figure}[!htbp]
\centering
\includegraphics[width=\textwidth]{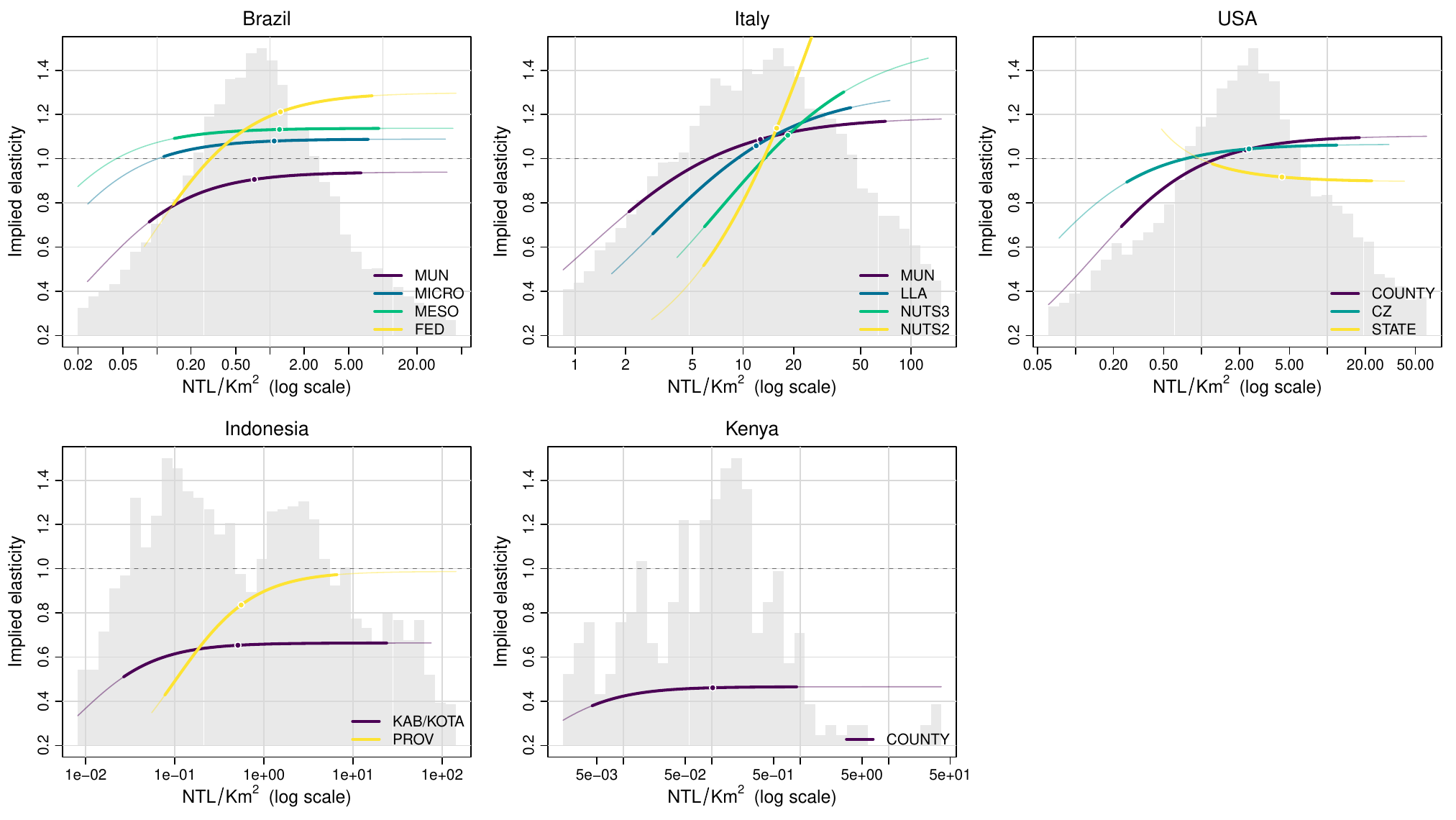}
\caption{Implied local NTL elasticity of GDP/income density from the Nonlinear Model, $\varepsilon(x)=\hat\beta\,x/(x+\hat\ell_0)$ with $x=NTL/\text{Km2}$, by country and administrative level. For each level the curve is drawn solid over the observed interdecile range (p10--p90) of NTL density and thin over the p2.5--p97.5 tails; the dot marks the median NTL density and the dashed line the unit-elastic benchmark. The shaded grey histogram in each panel shows the distribution of observed NTL density at the finest administrative level, indicating where the bulk of the units lie.}
\label{fig:elasticityByNTL}
\end{figure}

\end{document}

%% file: tables/table_montecarlo_oos.tex
\begin{table}[htbp]\centering
\footnotesize
\caption{Monte Carlo out-of-sample comparison of the empirical specifications. For each true elasticity $\mu$, area--size correlation regime, shock standard deviation $\sigma_\eps$, and aggregation ratio $K/P$, units are split into five folds within each replication; every specification is estimated on four folds and predicts the fifth, and the table reports the root mean squared out-of-fold prediction error (averages over 1000 or 100 replications depending on the block, $K=100{,}000$). Specifications: \emph{T} regresses log totals on log totals (Equation~\eqref{eq:lightOnIncomeMC}); \emph{D} log densities on log densities; \emph{D+A} adds the log-area control (the Baseline Model of Section~\ref{sec:personalIncome}, Equation~\eqref{eq:mcModelDensityArea}); \emph{D+A+$\ell_0$} replaces $\log(NTL_p/S_p)$ with $\log(NTL_p/S_p+\ell_0)$, with $\ell_0$ estimated on the training folds (the Nonlinear Model, Equation~\eqref{eq:mcModelNLM}). RMSEs are comparable across specifications because predicting $\log Y_p$ and $\log(Y_p/S_p)$ is equivalent given the observed $S_p$.}
\label{tab:mcOOS}
\begin{adjustbox}{width=\textwidth}
\begin{tabular}{lcc rrrr rrrr rrrr}
\hline\hline
 & & & \multicolumn{4}{c}{$\mu=0.7$} & \multicolumn{4}{c}{$\mu=1$} & \multicolumn{4}{c}{$\mu=1.3$}\\
\cline{4-7}\cline{8-11}\cline{12-15}
Area--size & $\sigma_\eps$ & $K/P$ & T & D & D+A & D+A+$\ell_0$ & T & D & D+A & D+A+$\ell_0$ & T & D & D+A & D+A+$\ell_0$ \\
\hline
Positive & 0.1 & 1 & \textbf{0.100} & 0.271 & \textbf{0.100} & \textbf{0.100} & \textbf{0.100} & \textbf{0.100} & \textbf{0.100} & \textbf{0.100} & \textbf{0.099} & 0.202 & \textbf{0.099} & \textbf{0.099} \\
 &  & 2 & 0.150 & 0.258 & \textbf{0.135} & \textbf{0.135} & \textbf{0.086} & \textbf{0.086} & \textbf{0.086} & \textbf{0.086} & 0.134 & 0.186 & \textbf{0.125} & \textbf{0.124} \\
 &  & 5 & 0.176 & 0.225 & \textbf{0.142} & \textbf{0.142} & \textbf{0.069} & \textbf{0.069} & \textbf{0.069} & \textbf{0.069} & 0.133 & 0.156 & \textbf{0.119} & \textbf{0.118} \\
 &  & 10 & 0.183 & 0.196 & \textbf{0.137} & \textbf{0.137} & \textbf{0.057} & \textbf{0.057} & \textbf{0.057} & \textbf{0.057} & 0.123 & 0.133 & \textbf{0.108} & \textbf{0.107} \\
 &  & 100 & 0.146 & 0.108 & \textbf{0.094} & \textbf{0.094} & \textbf{0.027} & \textbf{0.027} & \textbf{0.027} & \textbf{0.028} & 0.071 & 0.068 & \textbf{0.062} & \textbf{0.063} \\
 &  & 1000 & 0.079 & 0.050 & \textbf{0.049} & 0.051 & \textbf{0.011} & \textbf{0.011} & \textbf{0.012} & 0.013 & 0.031 & \textbf{0.029} & \textbf{0.029} & 0.031 \\
\addlinespace
Positive & 0.5 & 1 & \textbf{0.447} & 0.582 & \textbf{0.447} & \textbf{0.447} & \textbf{0.447} & 0.471 & \textbf{0.447} & \textbf{0.447} & \textbf{0.447} & 0.448 & \textbf{0.447} & \textbf{0.447} \\
 &  & 2 & 0.434 & 0.522 & \textbf{0.420} & \textbf{0.420} & 0.398 & 0.409 & \textbf{0.395} & \textbf{0.395} & \textbf{0.386} & 0.388 & \textbf{0.386} & \textbf{0.386} \\
 &  & 5 & 0.399 & 0.430 & \textbf{0.360} & \textbf{0.360} & 0.325 & 0.326 & \textbf{0.317} & \textbf{0.317} & \textbf{0.304} & 0.307 & \textbf{0.304} & \textbf{0.304} \\
 &  & 10 & 0.371 & 0.365 & \textbf{0.314} & \textbf{0.314} & 0.274 & 0.270 & \textbf{0.265} & \textbf{0.265} & \textbf{0.251} & 0.253 & \textbf{0.251} & \textbf{0.251} \\
 &  & 100 & 0.256 & 0.188 & \textbf{0.177} & \textbf{0.177} & 0.139 & 0.131 & \textbf{0.130} & \textbf{0.130} & \textbf{0.119} & 0.120 & \textbf{0.120} & 0.120 \\
 &  & 1000 & 0.135 & \textbf{0.082} & \textbf{0.082} & 0.085 & 0.059 & \textbf{0.055} & 0.056 & 0.058 & \textbf{0.049} & 0.050 & 0.051 & 0.054 \\
\addlinespace
Independent & 0.1 & 1 & \textbf{0.099} & 0.271 & \textbf{0.099} & \textbf{0.099} & \textbf{0.100} & \textbf{0.100} & \textbf{0.100} & \textbf{0.100} & \textbf{0.100} & 0.202 & \textbf{0.100} & \textbf{0.100} \\
 &  & 2 & \textbf{0.149} & 0.265 & \textbf{0.149} & \textbf{0.149} & \textbf{0.086} & \textbf{0.087} & \textbf{0.086} & \textbf{0.086} & \textbf{0.134} & 0.185 & \textbf{0.134} & \textbf{0.134} \\
 &  & 5 & \textbf{0.176} & 0.251 & \textbf{0.176} & \textbf{0.176} & \textbf{0.069} & \textbf{0.069} & \textbf{0.069} & \textbf{0.069} & \textbf{0.134} & 0.158 & \textbf{0.134} & \textbf{0.134} \\
 &  & 10 & \textbf{0.183} & 0.235 & \textbf{0.183} & \textbf{0.182} & \textbf{0.057} & \textbf{0.057} & \textbf{0.057} & \textbf{0.057} & \textbf{0.123} & 0.137 & \textbf{0.123} & \textbf{0.123} \\
 &  & 100 & 0.146 & 0.158 & 0.146 & \textbf{0.145} & \textbf{0.027} & \textbf{0.027} & \textbf{0.027} & \textbf{0.028} & \textbf{0.071} & 0.074 & \textbf{0.072} & \textbf{0.072} \\
 &  & 1000 & \textbf{0.078} & 0.080 & \textbf{0.078} & 0.080 & \textbf{0.012} & \textbf{0.011} & \textbf{0.012} & 0.014 & \textbf{0.032} & \textbf{0.032} & \textbf{0.032} & 0.034 \\
\addlinespace
Negative & 0.1 & 1 & \textbf{0.099} & \textbf{0.099} & \textbf{0.099} & \textbf{0.099} & \textbf{0.100} & \textbf{0.100} & \textbf{0.100} & \textbf{0.100} & \textbf{0.100} & \textbf{0.100} & \textbf{0.100} & \textbf{0.100} \\
 &  & 2 & 0.150 & 0.150 & \textbf{0.139} & \textbf{0.139} & \textbf{0.086} & \textbf{0.086} & \textbf{0.086} & \textbf{0.086} & 0.134 & 0.134 & \textbf{0.127} & \textbf{0.127} \\
 &  & 5 & 0.176 & 0.176 & \textbf{0.149} & \textbf{0.149} & \textbf{0.069} & \textbf{0.069} & \textbf{0.069} & \textbf{0.069} & 0.133 & 0.133 & \textbf{0.122} & \textbf{0.122} \\
 &  & 10 & 0.183 & 0.183 & \textbf{0.150} & \textbf{0.150} & \textbf{0.057} & \textbf{0.057} & \textbf{0.057} & \textbf{0.057} & 0.123 & 0.123 & \textbf{0.112} & \textbf{0.112} \\
 &  & 100 & 0.147 & 0.147 & \textbf{0.131} & \textbf{0.131} & \textbf{0.027} & \textbf{0.027} & \textbf{0.027} & \textbf{0.027} & 0.071 & 0.071 & \textbf{0.068} & \textbf{0.068} \\
 &  & 1000 & 0.080 & 0.080 & \textbf{0.077} & 0.084 & \textbf{0.011} & \textbf{0.011} & \textbf{0.012} & 0.014 & \textbf{0.031} & \textbf{0.031} & \textbf{0.032} & 0.036 \\
\hline\hline
\multicolumn{15}{l}{\parbox{1.25\textwidth}{\textit{Note}: bold marks the lowest RMSE in each cell (ties within $0.0005$ all in bold). ``Area--size'' is the correlation between unit area and the firm count in the DGP\@. A totals specification with the area control is not reported: it shares the regressor span of D+A, so its out-of-fold predictions differ from those of D+A exactly by the observed $\log S_p$ and the prediction errors coincide.}}\\
\end{tabular}
\end{adjustbox}
\end{table}

%% file: tables/table_descriptiveStats.tex
\begin{table}[ht]
\centering
\caption{Descriptive statistics by country and aggregation level. The number of units is the effective number of spatial units in the analytical sample after merging economic, area, and NTL data. Economic activity density is expressed in thousands of international dollars per km$^2$ using World Bank PPP conversion factors for GDP (PA.NUS.PPP), and refers to GDP for Brazil, the United States, Kenya, and Indonesia, and to personal income for Italy. SD denotes the standard deviation; CV denotes the coefficient of variation, defined as the ratio between the standard deviation and the mean.}
\label{tab:statDescr}
\resizebox{\textwidth}{!}{%
\begin{tabular}{llrrrrrrrrrr}
  \hline
Country & Level & \shortstack{Number\\of units} & \shortstack{Area\\mean} & \shortstack{GDP/income\\mean} & \shortstack{GDP/income\\median} & \shortstack{GDP/income\\SD} & \shortstack{GDP/income\\CV} & \shortstack{NTL\\mean} & \shortstack{NTL\\median} & \shortstack{NTL\\SD} & \shortstack{NTL\\CV} \\
  \hline
BRA & MUN & 5570 & 1525.5 & 1776.97 & 183.24 & 13908.34 & 7.83 & 4.59 & 0.73 & 20.49 & 4.47 \\
   & MICRO & 558 & 15228.2 & 1877.89 & 269.76 & 9535.72 & 5.08 & 4.43 & 1.09 & 13.75 & 3.10 \\
   & MESO & 137 & 62024.3 & 1766.00 & 301.49 & 6043.59 & 3.42 & 4.28 & 1.21 & 10.15 & 2.37 \\
   & FED & 27 & 315200.7 & 1517.14 & 434.29 & 3804.01 & 2.51 & 3.90 & 1.24 & 8.77 & 2.25 \\
  ITA & MUN & 7802 & 38.0 & 5803.82 & 1808.06 & 12503.71 & 2.15 & 27.74 & 12.66 & 43.18 & 1.56 \\
   & LLA & 609 & 495.7 & 3600.31 & 1858.38 & 5222.77 & 1.45 & 19.17 & 11.97 & 20.96 & 1.09 \\
   & NUTS3 & 107 & 2816.1 & 5412.43 & 3100.31 & 8467.77 & 1.56 & 24.46 & 18.44 & 25.82 & 1.06 \\
   & NUTS2 & 21 & 14348.5 & 3439.80 & 2841.70 & 2404.35 & 0.70 & 16.46 & 15.84 & 9.34 & 0.57 \\
  USA & COUNTY & 3014 & 2821.1 & 7347.74 & 543.36 & 148013.45 & 20.14 & 8.78 & 2.24 & 27.71 & 3.15 \\
   & CZ & 706 & 12941.5 & 2347.99 & 591.20 & 9808.87 & 4.18 & 5.20 & 2.38 & 8.93 & 1.72 \\
   & STATE & 51 & 182905.2 & 18326.89 & 1712.60 & 99893.86 & 5.45 & 17.52 & 4.35 & 69.24 & 3.95 \\
  IDN & KAB/KOTA & 487 & 3833.0 & 22414.72 & 1025.58 & 128549.91 & 5.74 & 8.54 & 0.51 & 24.56 & 2.88 \\
   & PROV & 34 & 55787.6 & 23481.03 & 886.85 & 122347.34 & 5.21 & 6.23 & 0.56 & 26.67 & 4.28 \\
  KEN & COUNTY & 47 & 12581.8 & 3166.08 & 728.90 & 11576.39 & 3.66 & 2.06 & 0.09 & 8.96 & 4.35 \\
   \hline
\end{tabular}%
}
\end{table}

%% file: tables/table_deltaRhoEmpirics.tex
\begin{table}[!htbp]\centering
\small
\caption{Cell-level variance of log radiance and empirical projection coefficients of Equation~\eqref{eq:mcDeltaRhoDef} by country and administrative level. The elementary units are the lit Black Marble grid cells ($\approx 500$~m): $\sigma^2_n$ is the cross-sectional variance of log radiance across all lit cells within the estimation sample of each country. All cell-level statistics are computed on the 2012 and 2019 rasters (2013 and 2019 for Kenya) and averaged; $P$ is the average number of units per year in the estimation samples of Section~\ref{sec:personalIncome}.}
\label{tab:deltaRhoEmpirics}
\scriptsize{
\begin{tabular}{lclrrr}
\hline\hline
Country & $\sigma^2_n$ & Level & $P$ & $\delta_P$ & $\rho_P$ \\
\hline
Brazil        & 1.90 & MUN      & 5569 & 0.74 & 0.56 \\
              &      & MICRO    &  558 & 0.70 & 0.02 \\
              &      & MESO     &  137 & 0.74 & $-$4.37 \\
              &      & STATE    &   27 & 0.94 & $-$0.12 \\
\addlinespace
Italy         & 1.57 & MUN      & 7802 & 0.54 & 0.38 \\
              &      & LLA      &  609 & 0.73 & 0.27 \\
              &      & NUTS3    &  107 & 0.65 & $-$0.29 \\
              &      & NUTS2    &   20 & 0.78 & $-$0.36 \\
\addlinespace
United States & 1.16 & COUNTY   & 3014 & 0.74 & 0.54 \\
              &      & CZ       &  706 & 0.84 & 0.58 \\
              &      & STATE    &   51 & 1.04 & 0.09 \\
\addlinespace
Indonesia     & 0.90 & KAB/KOTA &  487 & 0.73 & 2.85 \\
              &      & PROV     &   34 & 0.85 & 4.35 \\
\addlinespace
Kenya         & 1.15 & COUNTY   &   47 & 0.77 & 0.23 \\
\hline\hline
\end{tabular}
}
\end{table}

%% file: tables/table_cvCrossSectional.tex
\begin{tabular}{llcccccccccc}
  \hline
Country & Level & \shortstack{$R^2_{\text{oos}}$\\(BM)} & \shortstack{$R^2_{\text{oos}}$\\(NLM)} & \shortstack{Gain\\NLM--BM} & \shortstack{$R^2_{\text{oos}}$\\(NIM)} & \shortstack{$\Delta R^2_{\text{oos}}$\\(BM)} & \shortstack{$\Delta R^2_{\text{oos}}$\\(NLM)} & \shortstack{RMSE\\(BM)} & \shortstack{RMSE\\(NLM)} & \shortstack{$L_4$\\(BM)} & \shortstack{$L_4$\\(NLM)} \\ 
  \hline
Brazil & MUN & 0.867 & 0.873 & 0.006 & 0.319 & 0.549 & 0.555 & 0.607 & 0.593 & 0.880 & 0.831 \\ 
  Brazil & MICRO & 0.938 & 0.940 & 0.002 & 0.486 & 0.453 & 0.454 & 0.440 & 0.434 & 0.630 & 0.631 \\ 
  Brazil & MESO & 0.959 & 0.957 & -0.002 & 0.600 & 0.359 & 0.357 & 0.374 & 0.382 & 0.565 & 0.615 \\ 
  Brazil & FED & 0.969 & 0.978 & 0.008 & 0.380 & 0.590 & 0.598 & 0.297 & 0.255 & 0.419 & 0.352 \\ 
  Indonesia & KAB/KOTA & 0.909 & 0.910 & 0.001 & 0.671 & 0.238 & 0.239 & 0.653 & 0.648 & 0.942 & 0.944 \\ 
  Indonesia & PROV & 0.943 & 0.947 & 0.004 & 0.363 & 0.580 & 0.584 & 0.410 & 0.396 & 0.627 & 0.558 \\ 
  Italy & MUN & 0.854 & 0.866 & 0.012 & 0.159 & 0.695 & 0.707 & 0.566 & 0.542 & 0.858 & 0.786 \\ 
  Italy & LLA & 0.859 & 0.877 & 0.018 & 0.019 & 0.840 & 0.858 & 0.416 & 0.388 & 0.638 & 0.568 \\ 
  Italy & NUTS3 & 0.860 & 0.881 & 0.021 & 0.294 & 0.567 & 0.588 & 0.339 & 0.313 & 0.514 & 0.431 \\ 
  Italy & NUTS2 & 0.506 & 0.655 & 0.150 & -0.095 & 0.601 & 0.751 & 0.496 & 0.414 & 0.730 & 0.572 \\ 
  Kenya & COUNTY & 0.951 & 0.950 & -0.001 & 0.846 & 0.105 & 0.104 & 0.431 & 0.434 & 0.543 & 0.552 \\ 
  USA & COUNTY & 0.886 & 0.898 & 0.012 & 0.188 & 0.698 & 0.710 & 0.589 & 0.557 & 1.138 & 1.090 \\ 
  USA & CZ & 0.859 & 0.863 & 0.004 & 0.034 & 0.825 & 0.829 & 0.617 & 0.608 & 1.350 & 1.336 \\ 
  USA & STATE & 0.836 & 0.834 & -0.003 & 0.564 & 0.272 & 0.270 & 0.643 & 0.648 & 1.130 & 1.137 \\ 
   \hline
\end{tabular}

%% file: tables/table_pooledFinestPrediction.tex
\begin{tabular}{llcccccc}
  \hline
Calibration & Model & Mean error & MAE & RMSE & P10 & Median & P90 \\
  \hline
  Country-year FE & BM & 0.000 & 0.449 & 0.592 & -0.713 & -0.005 & 0.708 \\
  Country-year FE & NLM & 0.000 & 0.433 & 0.573 & -0.688 & 0.005 & 0.673 \\
  Common intercept & BM & 0.000 & 0.478 & 0.629 & -0.762 & 0.001 & 0.744 \\
  Common intercept & NLM & 0.000 & 0.470 & 0.620 & -0.750 & 0.006 & 0.728 \\
  Aggregate country-year intercept & BM & -0.541 & 0.655 & 0.803 & -1.255 & -0.547 & 0.168 \\
   Aggregate country-year intercept & NLM & -0.954 & 0.984 & 1.117 & -1.658 & -0.944 & -0.272 \\
   Aggregate country intercept & BM & -0.593 & 0.695 & 0.845 & -1.318 & -0.596 & 0.128 \\
   Aggregate country intercept & NLM & -1.089 & 1.111 & 1.239 & -1.803 & -1.078 & -0.398 \\

   \hline
\end{tabular}

%% file: tables/table_montecarlo_alpha_beta_total.tex
\begin{table}[!htbp]\centering
\footnotesize
\caption{Monte Carlo estimates of $\alpha$ and $\beta$ in the totals specification without controls, $\log Y_p = \alpha + \beta \log NTL_p + \epsilon_p$ (the specification covered by Theorem~\ref{teo:aggregation}), with $\sigma_\eps = 0.1$. Each cell reports the median estimate across Monte Carlo replications, with the empirical 90\% interval in brackets. The true scale parameter is $\phi=11.5$; an asterisk indicates that the interval excludes the true value.}
\label{tab:mcBetaAlphaTotalEstimates}
\begin{tabular}{ccc cc}
\hline\hline
$\mu$ & $\phi$ & $K/P$ & $\hat\alpha$ & $\hat\beta$ \\
\hline
0.70 & 11.50 & 1.000 & \shortstack[c]{11.40$^{*}$\\{\scriptsize [11.39, 11.40]}} & \shortstack[c]{0.6931$^{*}$\\{\scriptsize [0.6927, 0.6934]}} \\
0.70 & 11.50 & 2.000 & \shortstack[c]{12.11$^{*}$\\{\scriptsize [12.10, 12.12]}} & \shortstack[c]{0.7347$^{*}$\\{\scriptsize [0.7341, 0.7354]}} \\
0.70 & 11.50 & 5.000 & \shortstack[c]{12.86$^{*}$\\{\scriptsize [12.85, 12.88]}} & \shortstack[c]{0.7812$^{*}$\\{\scriptsize [0.7799, 0.7827]}} \\
0.70 & 11.50 & 10.00 & \shortstack[c]{13.41$^{*}$\\{\scriptsize [13.38, 13.43]}} & \shortstack[c]{0.8179$^{*}$\\{\scriptsize [0.8159, 0.8200]}} \\
0.70 & 11.50 & 100.0 & \shortstack[c]{14.77$^{*}$\\{\scriptsize [14.69, 14.84]}} & \shortstack[c]{0.9247$^{*}$\\{\scriptsize [0.9162, 0.9320]}} \\
0.70 & 11.50 & 1000. & \shortstack[c]{15.34$^{*}$\\{\scriptsize [15.17, 15.47]}} & \shortstack[c]{0.9814$^{*}$\\{\scriptsize [0.9594, 1.000]}} \\
1.0 & 11.50 & 1.000 & \shortstack[c]{11.40$^{*}$\\{\scriptsize [11.39, 11.40]}} & \shortstack[c]{0.9901$^{*}$\\{\scriptsize [0.9896, 0.9906]}} \\
1.0 & 11.50 & 2.000 & \shortstack[c]{11.43$^{*}$\\{\scriptsize [11.42, 11.44]}} & \shortstack[c]{0.9931$^{*}$\\{\scriptsize [0.9925, 0.9937]}} \\
1.0 & 11.50 & 5.000 & \shortstack[c]{11.46$^{*}$\\{\scriptsize [11.45, 11.46]}} & \shortstack[c]{0.9955$^{*}$\\{\scriptsize [0.9947, 0.9963]}} \\
1.0 & 11.50 & 10.00 & \shortstack[c]{11.47$^{*}$\\{\scriptsize [11.46, 11.48]}} & \shortstack[c]{0.9967$^{*}$\\{\scriptsize [0.9957, 0.9978]}} \\
1.0 & 11.50 & 100.0 & \shortstack[c]{11.49\\{\scriptsize [11.48, 11.50]}} & \shortstack[c]{0.9991\\{\scriptsize [0.9973, 1.001]}} \\
1.0 & 11.50 & 1000. & \shortstack[c]{11.49\\{\scriptsize [11.48, 11.51]}} & \shortstack[c]{0.9998\\{\scriptsize [0.9964, 1.003]}} \\
1.3 & 11.50 & 1.000 & \shortstack[c]{11.40$^{*}$\\{\scriptsize [11.39, 11.40]}} & \shortstack[c]{1.287$^{*}$\\{\scriptsize [1.286, 1.288]}} \\
1.3 & 11.50 & 2.000 & \shortstack[c]{10.54$^{*}$\\{\scriptsize [10.53, 10.55]}} & \shortstack[c]{1.189$^{*}$\\{\scriptsize [1.188, 1.190]}} \\
1.3 & 11.50 & 5.000 & \shortstack[c]{9.976$^{*}$\\{\scriptsize [9.967, 9.986]}} & \shortstack[c]{1.120$^{*}$\\{\scriptsize [1.118, 1.121]}} \\
1.3 & 11.50 & 10.00 & \shortstack[c]{9.711$^{*}$\\{\scriptsize [9.699, 9.723]}} & \shortstack[c]{1.083$^{*}$\\{\scriptsize [1.081, 1.086]}} \\
1.3 & 11.50 & 100.0 & \shortstack[c]{9.346$^{*}$\\{\scriptsize [9.328, 9.364]}} & \shortstack[c]{1.022$^{*}$\\{\scriptsize [1.017, 1.027]}} \\
1.3 & 11.50 & 1000. & \shortstack[c]{9.282$^{*}$\\{\scriptsize [9.273, 9.294]}} & \shortstack[c]{1.004$^{*}$\\{\scriptsize [0.9954, 1.015]}} \\
\hline\hline
\end{tabular}
\end{table}

%% file: tables/table_montecarlo_area_bias.tex
\begin{table}[!htbp]
\centering
\footnotesize
\setlength{\tabcolsep}{4pt}
\caption{Mean absolute bias in $\hat\beta$ with and without the area control. Bias is averaged over the six aggregation ratios for each true elasticity, variable definition, and area-size correlation regime. The last column reports the percentage reduction in mean absolute bias from adding $\log(\mathrm{Km2})$; negative values occur near the aggregation-invariant case $\mu=1$, where the baseline bias is already close to zero.}\label{tab:mcAreaBias}
\begin{adjustbox}{width=\textwidth}
\begin{tabular}{cccccc}
\toprule
True $\mu$ & Variables & Area-size corr. & $|\hat\beta-\mu|$ no area & $|\hat\beta-\mu|$ area & Bias reduction \\
\midrule
0.70 & Totals & Independent & 0.1243 & 0.1243 & 0.0\% \\
0.70 & Totals & Positive & 0.1244 & 0.02215 & 82.2\% \\
0.70 & Totals & Negative & 0.1244 & 0.09296 & 25.3\% \\
0.70 & Densities & Independent & 0.1807 & 0.1244 & 31.2\% \\
0.70 & Densities & Positive & 0.05725 & 0.02308 & 59.7\% \\
0.70 & Densities & Negative & 0.1244 & 0.09295 & 25.3\% \\
1.0 & Totals & Independent & 0.004277 & 0.004293 & -0.4\% \\
1.0 & Totals & Positive & 0.004281 & 0.007695 & -79.7\% \\
1.0 & Totals & Negative & 0.004269 & 0.005233 & -22.6\% \\
1.0 & Densities & Independent & 0.002298 & 0.004274 & -86.0\% \\
1.0 & Densities & Positive & 0.005908 & 0.007473 & -26.5\% \\
1.0 & Densities & Negative & 0.004269 & 0.005201 & -21.8\% \\
1.3 & Totals & Independent & 0.1823 & 0.1824 & -0.0\% \\
1.3 & Totals & Positive & 0.1824 & 0.1139 & 37.6\% \\
1.3 & Totals & Negative & 0.1823 & 0.1593 & 12.6\% \\
1.3 & Densities & Independent & 0.2478 & 0.1823 & 26.4\% \\
1.3 & Densities & Positive & 0.1874 & 0.1132 & 39.6\% \\
1.3 & Densities & Negative & 0.1824 & 0.1592 & 12.7\% \\
\bottomrule
\end{tabular}
\end{adjustbox}
\end{table}

%% file: tables/table_montecarlo_alpha_beta_gamma.tex
\begingroup
\scriptsize
\setlength{\tabcolsep}{4pt}
\begin{longtable}{cccclccc}
\caption{Monte Carlo estimates of $\alpha$, $\beta$, and $\gamma$ under the area-control specification. Each cell reports the median estimate across 1000 Monte Carlo replications, with the empirical 90\% Monte Carlo interval in brackets. The true DGP scale parameter is $\phi=11.50$. An asterisk on $\hat\alpha$ indicates that the interval does not include $\phi$; an asterisk on $\hat\beta$ indicates that the interval does not include the true DGP elasticity $\mu$; an asterisk on $\hat\gamma$ indicates that the interval does not include zero.}\label{tab:mcBetaGammaEstimates}\\
\toprule
$\mu$ & $\phi$ & $K/P$ & Variables & Area-size corr. & $\hat\alpha$ & $\hat\beta$ & $\hat\gamma$ \\
\midrule
\endfirsthead
\caption[]{Monte Carlo estimates of $\alpha$, $\beta$, and $\gamma$ under the area-control specification (continued).}\\
\toprule
$\mu$ & $\phi$ & $K/P$ & Variables & Area-size corr. & $\hat\alpha$ & $\hat\beta$ & $\hat\gamma$ \\
\midrule
\endhead
\midrule
\multicolumn{8}{r}{Continued on next page}\\
\endfoot
\bottomrule
\endlastfoot
0.70 & 11.50 & 1.000 & Totals & Independent & \shortstack[c]{11.40$^{*}$\\{\scriptsize [11.39, 11.40]}} & \shortstack[c]{0.6931$^{*}$\\{\scriptsize [0.6927, 0.6935]}} & \shortstack[c]{0.00001809\\{\scriptsize [-0.0005022, 0.0004994]}} \\
0.70 & 11.50 & 1.000 & Totals & Positive & \shortstack[c]{11.40$^{*}$\\{\scriptsize [11.39, 11.40]}} & \shortstack[c]{0.6931$^{*}$\\{\scriptsize [0.6927, 0.6934]}} & \shortstack[c]{0.000002079\\{\scriptsize [-0.0005151, 0.0005415]}} \\
0.70 & 11.50 & 1.000 & Totals & Negative & \shortstack[c]{22.30\\{\scriptsize [-433.6, 480.8]}} & \shortstack[c]{0.6931$^{*}$\\{\scriptsize [0.6927, 0.6934]}} & \shortstack[c]{0.9468\\{\scriptsize [-38.65, 40.77]}} \\
0.70 & 11.50 & 1.000 & Densities & Independent & \shortstack[c]{11.40$^{*}$\\{\scriptsize [11.39, 11.40]}} & \shortstack[c]{0.6931$^{*}$\\{\scriptsize [0.6927, 0.6934]}} & \shortstack[c]{-0.3069$^{*}$\\{\scriptsize [-0.3076, -0.3063]}} \\
0.70 & 11.50 & 1.000 & Densities & Positive & \shortstack[c]{11.40$^{*}$\\{\scriptsize [11.39, 11.40]}} & \shortstack[c]{0.6931$^{*}$\\{\scriptsize [0.6927, 0.6935]}} & \shortstack[c]{-0.3069$^{*}$\\{\scriptsize [-0.3076, -0.3063]}} \\
0.70 & 11.50 & 1.000 & Densities & Negative & \shortstack[c]{10.29\\{\scriptsize [-441.3, 484.3]}} & \shortstack[c]{0.6931$^{*}$\\{\scriptsize [0.6927, 0.6934]}} & \shortstack[c]{-0.4030\\{\scriptsize [-39.62, 40.77]}} \\
0.70 & 11.50 & 2.000 & Totals & Independent & \shortstack[c]{12.11$^{*}$\\{\scriptsize [12.09, 12.13]}} & \shortstack[c]{0.7348$^{*}$\\{\scriptsize [0.7341, 0.7354]}} & \shortstack[c]{0.00002899\\{\scriptsize [-0.001042, 0.001106]}} \\
0.70 & 11.50 & 2.000 & Totals & Positive & \shortstack[c]{12.59$^{*}$\\{\scriptsize [12.58, 12.60]}} & \shortstack[c]{0.7184$^{*}$\\{\scriptsize [0.7177, 0.7191]}} & \shortstack[c]{0.06289$^{*}$\\{\scriptsize [0.06202, 0.06381]}} \\
0.70 & 11.50 & 2.000 & Totals & Negative & \shortstack[c]{-27683.$^{*}$\\{\scriptsize [-28699., -26583.]}} & \shortstack[c]{0.7231$^{*}$\\{\scriptsize [0.7224, 0.7240]}} & \shortstack[c]{-2560.$^{*}$\\{\scriptsize [-2654., -2458.]}} \\
0.70 & 11.50 & 2.000 & Densities & Independent & \shortstack[c]{12.11$^{*}$\\{\scriptsize [12.09, 12.13]}} & \shortstack[c]{0.7348$^{*}$\\{\scriptsize [0.7341, 0.7355]}} & \shortstack[c]{-0.2652$^{*}$\\{\scriptsize [-0.2665, -0.2639]}} \\
0.70 & 11.50 & 2.000 & Densities & Positive & \shortstack[c]{12.59$^{*}$\\{\scriptsize [12.58, 12.60]}} & \shortstack[c]{0.7184$^{*}$\\{\scriptsize [0.7178, 0.7192]}} & \shortstack[c]{-0.2186$^{*}$\\{\scriptsize [-0.2196, -0.2177]}} \\
0.70 & 11.50 & 2.000 & Densities & Negative & \shortstack[c]{-27691.$^{*}$\\{\scriptsize [-28678., -26525.]}} & \shortstack[c]{0.7231$^{*}$\\{\scriptsize [0.7224, 0.7240]}} & \shortstack[c]{-2561.$^{*}$\\{\scriptsize [-2652., -2453.]}} \\
0.70 & 11.50 & 5.000 & Totals & Independent & \shortstack[c]{12.87$^{*}$\\{\scriptsize [12.84, 12.89]}} & \shortstack[c]{0.7813$^{*}$\\{\scriptsize [0.7799, 0.7826]}} & \shortstack[c]{0.000006056\\{\scriptsize [-0.002050, 0.002071]}} \\
0.70 & 11.50 & 5.000 & Totals & Positive & \shortstack[c]{13.34$^{*}$\\{\scriptsize [13.33, 13.36]}} & \shortstack[c]{0.7257$^{*}$\\{\scriptsize [0.7239, 0.7271]}} & \shortstack[c]{0.1160$^{*}$\\{\scriptsize [0.1141, 0.1179]}} \\
0.70 & 11.50 & 5.000 & Totals & Negative & \shortstack[c]{-21288.$^{*}$\\{\scriptsize [-21909., -20617.]}} & \shortstack[c]{0.7408$^{*}$\\{\scriptsize [0.7392, 0.7426]}} & \shortstack[c]{-2151.$^{*}$\\{\scriptsize [-2213., -2083.]}} \\
0.70 & 11.50 & 5.000 & Densities & Independent & \shortstack[c]{12.86$^{*}$\\{\scriptsize [12.84, 12.89]}} & \shortstack[c]{0.7812$^{*}$\\{\scriptsize [0.7799, 0.7826]}} & \shortstack[c]{-0.2187$^{*}$\\{\scriptsize [-0.2212, -0.2161]}} \\
0.70 & 11.50 & 5.000 & Densities & Positive & \shortstack[c]{13.34$^{*}$\\{\scriptsize [13.33, 13.36]}} & \shortstack[c]{0.7257$^{*}$\\{\scriptsize [0.7239, 0.7273]}} & \shortstack[c]{-0.1584$^{*}$\\{\scriptsize [-0.1598, -0.1569]}} \\
0.70 & 11.50 & 5.000 & Densities & Negative & \shortstack[c]{-21292.$^{*}$\\{\scriptsize [-21894., -20653.]}} & \shortstack[c]{0.7409$^{*}$\\{\scriptsize [0.7391, 0.7426]}} & \shortstack[c]{-2151.$^{*}$\\{\scriptsize [-2212., -2087.]}} \\
0.70 & 11.50 & 10.00 & Totals & Independent & \shortstack[c]{13.41$^{*}$\\{\scriptsize [13.37, 13.45]}} & \shortstack[c]{0.8180$^{*}$\\{\scriptsize [0.8158, 0.8200]}} & \shortstack[c]{-0.00004643\\{\scriptsize [-0.003142, 0.002893]}} \\
0.70 & 11.50 & 10.00 & Totals & Positive & \shortstack[c]{13.74$^{*}$\\{\scriptsize [13.72, 13.77]}} & \shortstack[c]{0.7246$^{*}$\\{\scriptsize [0.7215, 0.7277]}} & \shortstack[c]{0.1535$^{*}$\\{\scriptsize [0.1499, 0.1567]}} \\
0.70 & 11.50 & 10.00 & Totals & Negative & \shortstack[c]{-12380.$^{*}$\\{\scriptsize [-12795., -11975.]}} & \shortstack[c]{0.7573$^{*}$\\{\scriptsize [0.7547, 0.7601]}} & \shortstack[c]{-1346.$^{*}$\\{\scriptsize [-1391., -1301.]}} \\
0.70 & 11.50 & 10.00 & Densities & Independent & \shortstack[c]{13.41$^{*}$\\{\scriptsize [13.37, 13.45]}} & \shortstack[c]{0.8179$^{*}$\\{\scriptsize [0.8159, 0.8201]}} & \shortstack[c]{-0.1821$^{*}$\\{\scriptsize [-0.1858, -0.1781]}} \\
0.70 & 11.50 & 10.00 & Densities & Positive & \shortstack[c]{13.74$^{*}$\\{\scriptsize [13.72, 13.77]}} & \shortstack[c]{0.7246$^{*}$\\{\scriptsize [0.7217, 0.7276]}} & \shortstack[c]{-0.1221$^{*}$\\{\scriptsize [-0.1242, -0.1199]}} \\
0.70 & 11.50 & 10.00 & Densities & Negative & \shortstack[c]{-12378.$^{*}$\\{\scriptsize [-12783., -11979.]}} & \shortstack[c]{0.7573$^{*}$\\{\scriptsize [0.7545, 0.7599]}} & \shortstack[c]{-1346.$^{*}$\\{\scriptsize [-1390., -1302.]}} \\
0.70 & 11.50 & 100.0 & Totals & Independent & \shortstack[c]{14.77$^{*}$\\{\scriptsize [14.68, 14.86]}} & \shortstack[c]{0.9251$^{*}$\\{\scriptsize [0.9174, 0.9323]}} & \shortstack[c]{0.00009547\\{\scriptsize [-0.006945, 0.007536]}} \\
0.70 & 11.50 & 100.0 & Totals & Positive & \shortstack[c]{14.44$^{*}$\\{\scriptsize [14.38, 14.50]}} & \shortstack[c]{0.6954\\{\scriptsize [0.6759, 0.7165]}} & \shortstack[c]{0.2614$^{*}$\\{\scriptsize [0.2392, 0.2841]}} \\
0.70 & 11.50 & 100.0 & Totals & Negative & \shortstack[c]{-674.8$^{*}$\\{\scriptsize [-775.9, -586.0]}} & \shortstack[c]{0.8684$^{*}$\\{\scriptsize [0.8577, 0.8783]}} & \shortstack[c]{-99.73$^{*}$\\{\scriptsize [-114.4, -86.89]}} \\
0.70 & 11.50 & 100.0 & Densities & Independent & \shortstack[c]{14.77$^{*}$\\{\scriptsize [14.67, 14.86]}} & \shortstack[c]{0.9248$^{*}$\\{\scriptsize [0.9171, 0.9326]}} & \shortstack[c]{-0.07531$^{*}$\\{\scriptsize [-0.08673, -0.06474]}} \\
0.70 & 11.50 & 100.0 & Densities & Positive & \shortstack[c]{14.44$^{*}$\\{\scriptsize [14.38, 14.50]}} & \shortstack[c]{0.6953\\{\scriptsize [0.6761, 0.7168]}} & \shortstack[c]{-0.04244$^{*}$\\{\scriptsize [-0.04907, -0.03653]}} \\
0.70 & 11.50 & 100.0 & Densities & Negative & \shortstack[c]{-678.9$^{*}$\\{\scriptsize [-773.4, -577.5]}} & \shortstack[c]{0.8680$^{*}$\\{\scriptsize [0.8569, 0.8794]}} & \shortstack[c]{-100.5$^{*}$\\{\scriptsize [-114.2, -85.79]}} \\
0.70 & 11.50 & 1000. & Totals & Independent & \shortstack[c]{15.33$^{*}$\\{\scriptsize [15.16, 15.48]}} & \shortstack[c]{0.9804$^{*}$\\{\scriptsize [0.9602, 0.9991]}} & \shortstack[c]{-0.0001391\\{\scriptsize [-0.01284, 0.01313]}} \\
0.70 & 11.50 & 1000. & Totals & Positive & \shortstack[c]{14.56$^{*}$\\{\scriptsize [14.30, 14.79]}} & \shortstack[c]{0.6473\\{\scriptsize [0.5445, 0.7524]}} & \shortstack[c]{0.3431$^{*}$\\{\scriptsize [0.2337, 0.4457]}} \\
0.70 & 11.50 & 1000. & Totals & Negative & \shortstack[c]{1.277\\{\scriptsize [-14.49, 13.41]}} & \shortstack[c]{0.9611$^{*}$\\{\scriptsize [0.9263, 0.9954]}} & \shortstack[c]{-3.013$^{*}$\\{\scriptsize [-6.391, -0.4432]}} \\
0.70 & 11.50 & 1000. & Densities & Independent & \shortstack[c]{15.34$^{*}$\\{\scriptsize [15.17, 15.48]}} & \shortstack[c]{0.9811$^{*}$\\{\scriptsize [0.9606, 0.9987]}} & \shortstack[c]{-0.01874\\{\scriptsize [-0.04307, 0.003105]}} \\
0.70 & 11.50 & 1000. & Densities & Positive & \shortstack[c]{14.54$^{*}$\\{\scriptsize [14.31, 14.78]}} & \shortstack[c]{0.6388\\{\scriptsize [0.5452, 0.7427]}} & \shortstack[c]{-0.009058\\{\scriptsize [-0.02191, 0.001504]}} \\
0.70 & 11.50 & 1000. & Densities & Negative & \shortstack[c]{1.311\\{\scriptsize [-13.49, 12.54]}} & \shortstack[c]{0.9615$^{*}$\\{\scriptsize [0.9282, 0.9929]}} & \shortstack[c]{-3.053$^{*}$\\{\scriptsize [-6.250, -0.6335]}} \\
1.0 & 11.50 & 1.000 & Totals & Independent & \shortstack[c]{11.40$^{*}$\\{\scriptsize [11.39, 11.40]}} & \shortstack[c]{0.9901$^{*}$\\{\scriptsize [0.9896, 0.9906]}} & \shortstack[c]{0.00001014\\{\scriptsize [-0.0005342, 0.0005023]}} \\
1.0 & 11.50 & 1.000 & Totals & Positive & \shortstack[c]{11.40$^{*}$\\{\scriptsize [11.39, 11.40]}} & \shortstack[c]{0.9901$^{*}$\\{\scriptsize [0.9896, 0.9906]}} & \shortstack[c]{0.000007809\\{\scriptsize [-0.0005353, 0.0005260]}} \\
1.0 & 11.50 & 1.000 & Totals & Negative & \shortstack[c]{26.84\\{\scriptsize [-445.5, 467.8]}} & \shortstack[c]{0.9901$^{*}$\\{\scriptsize [0.9896, 0.9906]}} & \shortstack[c]{1.342\\{\scriptsize [-39.68, 39.64]}} \\
1.0 & 11.50 & 1.000 & Densities & Independent & \shortstack[c]{11.40$^{*}$\\{\scriptsize [11.39, 11.40]}} & \shortstack[c]{0.9901$^{*}$\\{\scriptsize [0.9896, 0.9906]}} & \shortstack[c]{-0.009909$^{*}$\\{\scriptsize [-0.01062, -0.009156]}} \\
1.0 & 11.50 & 1.000 & Densities & Positive & \shortstack[c]{11.40$^{*}$\\{\scriptsize [11.39, 11.40]}} & \shortstack[c]{0.9901$^{*}$\\{\scriptsize [0.9896, 0.9906]}} & \shortstack[c]{-0.009909$^{*}$\\{\scriptsize [-0.01069, -0.009173]}} \\
1.0 & 11.50 & 1.000 & Densities & Negative & \shortstack[c]{8.888\\{\scriptsize [-439.5, 470.2]}} & \shortstack[c]{0.9901$^{*}$\\{\scriptsize [0.9896, 0.9906]}} & \shortstack[c]{-0.2278\\{\scriptsize [-39.17, 39.84]}} \\
1.0 & 11.50 & 2.000 & Totals & Independent & \shortstack[c]{11.43$^{*}$\\{\scriptsize [11.42, 11.44]}} & \shortstack[c]{0.9931$^{*}$\\{\scriptsize [0.9925, 0.9936]}} & \shortstack[c]{-0.000006201\\{\scriptsize [-0.0006385, 0.0006128]}} \\
1.0 & 11.50 & 2.000 & Totals & Positive & \shortstack[c]{11.45$^{*}$\\{\scriptsize [11.44, 11.46]}} & \shortstack[c]{0.9921$^{*}$\\{\scriptsize [0.9915, 0.9928]}} & \shortstack[c]{0.002465$^{*}$\\{\scriptsize [0.001798, 0.003119]}} \\
1.0 & 11.50 & 2.000 & Totals & Negative & \shortstack[c]{-1043.$^{*}$\\{\scriptsize [-1339., -760.2]}} & \shortstack[c]{0.9924$^{*}$\\{\scriptsize [0.9918, 0.9930]}} & \shortstack[c]{-97.48$^{*}$\\{\scriptsize [-124.8, -71.32]}} \\
1.0 & 11.50 & 2.000 & Densities & Independent & \shortstack[c]{11.43$^{*}$\\{\scriptsize [11.42, 11.44]}} & \shortstack[c]{0.9931$^{*}$\\{\scriptsize [0.9925, 0.9937]}} & \shortstack[c]{-0.006914$^{*}$\\{\scriptsize [-0.007741, -0.006031]}} \\
1.0 & 11.50 & 2.000 & Densities & Positive & \shortstack[c]{11.45$^{*}$\\{\scriptsize [11.44, 11.46]}} & \shortstack[c]{0.9921$^{*}$\\{\scriptsize [0.9914, 0.9927]}} & \shortstack[c]{-0.005453$^{*}$\\{\scriptsize [-0.006095, -0.004725]}} \\
1.0 & 11.50 & 2.000 & Densities & Negative & \shortstack[c]{-1033.$^{*}$\\{\scriptsize [-1328., -740.8]}} & \shortstack[c]{0.9924$^{*}$\\{\scriptsize [0.9918, 0.9930]}} & \shortstack[c]{-96.53$^{*}$\\{\scriptsize [-123.8, -69.53]}} \\
1.0 & 11.50 & 5.000 & Totals & Independent & \shortstack[c]{11.46$^{*}$\\{\scriptsize [11.45, 11.47]}} & \shortstack[c]{0.9954$^{*}$\\{\scriptsize [0.9947, 0.9962]}} & \shortstack[c]{0.00002299\\{\scriptsize [-0.0007915, 0.0008096]}} \\
1.0 & 11.50 & 5.000 & Totals & Positive & \shortstack[c]{11.47$^{*}$\\{\scriptsize [11.47, 11.48]}} & \shortstack[c]{0.9928$^{*}$\\{\scriptsize [0.9917, 0.9940]}} & \shortstack[c]{0.003923$^{*}$\\{\scriptsize [0.002747, 0.005075]}} \\
1.0 & 11.50 & 5.000 & Totals & Negative & \shortstack[c]{-643.6$^{*}$\\{\scriptsize [-820.3, -487.5]}} & \shortstack[c]{0.9937$^{*}$\\{\scriptsize [0.9927, 0.9947]}} & \shortstack[c]{-66.15$^{*}$\\{\scriptsize [-83.98, -50.38]}} \\
1.0 & 11.50 & 5.000 & Densities & Independent & \shortstack[c]{11.46$^{*}$\\{\scriptsize [11.45, 11.47]}} & \shortstack[c]{0.9955$^{*}$\\{\scriptsize [0.9947, 0.9962]}} & \shortstack[c]{-0.004536$^{*}$\\{\scriptsize [-0.005722, -0.003350]}} \\
1.0 & 11.50 & 5.000 & Densities & Positive & \shortstack[c]{11.47$^{*}$\\{\scriptsize [11.47, 11.48]}} & \shortstack[c]{0.9929$^{*}$\\{\scriptsize [0.9917, 0.9940]}} & \shortstack[c]{-0.003279$^{*}$\\{\scriptsize [-0.004121, -0.002403]}} \\
1.0 & 11.50 & 5.000 & Densities & Negative & \shortstack[c]{-647.2$^{*}$\\{\scriptsize [-813.9, -488.4]}} & \shortstack[c]{0.9937$^{*}$\\{\scriptsize [0.9927, 0.9946]}} & \shortstack[c]{-66.51$^{*}$\\{\scriptsize [-83.35, -50.47]}} \\
1.0 & 11.50 & 10.00 & Totals & Independent & \shortstack[c]{11.47$^{*}$\\{\scriptsize [11.46, 11.48]}} & \shortstack[c]{0.9967$^{*}$\\{\scriptsize [0.9958, 0.9978]}} & \shortstack[c]{0.000004499\\{\scriptsize [-0.0009374, 0.0009454]}} \\
1.0 & 11.50 & 10.00 & Totals & Positive & \shortstack[c]{11.48$^{*}$\\{\scriptsize [11.48, 11.49]}} & \shortstack[c]{0.9931$^{*}$\\{\scriptsize [0.9913, 0.9949]}} & \shortstack[c]{0.004623$^{*}$\\{\scriptsize [0.002901, 0.006368]}} \\
1.0 & 11.50 & 10.00 & Totals & Negative & \shortstack[c]{-315.3$^{*}$\\{\scriptsize [-417.1, -217.7]}} & \shortstack[c]{0.9947$^{*}$\\{\scriptsize [0.9932, 0.9961]}} & \shortstack[c]{-35.48$^{*}$\\{\scriptsize [-46.52, -24.88]}} \\
1.0 & 11.50 & 10.00 & Densities & Independent & \shortstack[c]{11.47$^{*}$\\{\scriptsize [11.46, 11.48]}} & \shortstack[c]{0.9968$^{*}$\\{\scriptsize [0.9957, 0.9978]}} & \shortstack[c]{-0.003258$^{*}$\\{\scriptsize [-0.004646, -0.001885]}} \\
1.0 & 11.50 & 10.00 & Densities & Positive & \shortstack[c]{11.49$^{*}$\\{\scriptsize [11.48, 11.49]}} & \shortstack[c]{0.9932$^{*}$\\{\scriptsize [0.9915, 0.9949]}} & \shortstack[c]{-0.002259$^{*}$\\{\scriptsize [-0.003363, -0.001228]}} \\
1.0 & 11.50 & 10.00 & Densities & Negative & \shortstack[c]{-315.6$^{*}$\\{\scriptsize [-421.9, -201.1]}} & \shortstack[c]{0.9947$^{*}$\\{\scriptsize [0.9933, 0.9961]}} & \shortstack[c]{-35.52$^{*}$\\{\scriptsize [-47.06, -23.08]}} \\
1.0 & 11.50 & 100.0 & Totals & Independent & \shortstack[c]{11.49\\{\scriptsize [11.48, 11.51]}} & \shortstack[c]{0.9991\\{\scriptsize [0.9970, 1.001]}} & \shortstack[c]{0.00001485\\{\scriptsize [-0.001329, 0.001488]}} \\
1.0 & 11.50 & 100.0 & Totals & Positive & \shortstack[c]{11.50\\{\scriptsize [11.48, 11.52]}} & \shortstack[c]{0.9932\\{\scriptsize [0.9842, 1.001]}} & \shortstack[c]{0.006255\\{\scriptsize [-0.002253, 0.01529]}} \\
1.0 & 11.50 & 100.0 & Totals & Negative & \shortstack[c]{1.193\\{\scriptsize [-17.59, 17.82]}} & \shortstack[c]{0.9982\\{\scriptsize [0.9949, 1.001]}} & \shortstack[c]{-1.490\\{\scriptsize [-4.207, 0.9141]}} \\
1.0 & 11.50 & 100.0 & Densities & Independent & \shortstack[c]{11.49\\{\scriptsize [11.47, 11.51]}} & \shortstack[c]{0.9992\\{\scriptsize [0.9971, 1.001]}} & \shortstack[c]{-0.0007756\\{\scriptsize [-0.003350, 0.001620]}} \\
1.0 & 11.50 & 100.0 & Densities & Positive & \shortstack[c]{11.50\\{\scriptsize [11.48, 11.52]}} & \shortstack[c]{0.9930\\{\scriptsize [0.9855, 1.001]}} & \shortstack[c]{-0.0005243\\{\scriptsize [-0.002494, 0.001310]}} \\
1.0 & 11.50 & 100.0 & Densities & Negative & \shortstack[c]{1.199\\{\scriptsize [-17.88, 20.19]}} & \shortstack[c]{0.9982\\{\scriptsize [0.9947, 1.002]}} & \shortstack[c]{-1.491\\{\scriptsize [-4.253, 1.260]}} \\
1.0 & 11.50 & 1000. & Totals & Independent & \shortstack[c]{11.49\\{\scriptsize [11.48, 11.51]}} & \shortstack[c]{0.9998\\{\scriptsize [0.9963, 1.003]}} & \shortstack[c]{0.00003492\\{\scriptsize [-0.001935, 0.001837]}} \\
1.0 & 11.50 & 1000. & Totals & Positive & \shortstack[c]{11.51\\{\scriptsize [11.45, 11.56]}} & \shortstack[c]{0.9920\\{\scriptsize [0.9587, 1.027]}} & \shortstack[c]{0.007592\\{\scriptsize [-0.02689, 0.04235]}} \\
1.0 & 11.50 & 1000. & Totals & Negative & \shortstack[c]{11.36\\{\scriptsize [8.986, 13.85]}} & \shortstack[c]{0.9995\\{\scriptsize [0.9934, 1.006]}} & \shortstack[c]{-0.02949\\{\scriptsize [-0.5400, 0.5055]}} \\
1.0 & 11.50 & 1000. & Densities & Independent & \shortstack[c]{11.49\\{\scriptsize [11.48, 11.51]}} & \shortstack[c]{0.9998\\{\scriptsize [0.9963, 1.003]}} & \shortstack[c]{-0.0003404\\{\scriptsize [-0.004143, 0.003604]}} \\
1.0 & 11.50 & 1000. & Densities & Positive & \shortstack[c]{11.50\\{\scriptsize [11.45, 11.57]}} & \shortstack[c]{0.9947\\{\scriptsize [0.9562, 1.029]}} & \shortstack[c]{-0.0001700\\{\scriptsize [-0.003175, 0.002904]}} \\
1.0 & 11.50 & 1000. & Densities & Negative & \shortstack[c]{11.36\\{\scriptsize [8.932, 13.80]}} & \shortstack[c]{0.9997\\{\scriptsize [0.9935, 1.006]}} & \shortstack[c]{-0.02905\\{\scriptsize [-0.5583, 0.5030]}} \\
1.3 & 11.50 & 1.000 & Totals & Independent & \shortstack[c]{11.40$^{*}$\\{\scriptsize [11.39, 11.40]}} & \shortstack[c]{1.287$^{*}$\\{\scriptsize [1.286, 1.288]}} & \shortstack[c]{0.000006210\\{\scriptsize [-0.0005112, 0.0004921]}} \\
1.3 & 11.50 & 1.000 & Totals & Positive & \shortstack[c]{11.40$^{*}$\\{\scriptsize [11.39, 11.40]}} & \shortstack[c]{1.287$^{*}$\\{\scriptsize [1.286, 1.288]}} & \shortstack[c]{0.000002563\\{\scriptsize [-0.0005351, 0.0004818]}} \\
1.3 & 11.50 & 1.000 & Totals & Negative & \shortstack[c]{-9.551\\{\scriptsize [-451.6, 432.3]}} & \shortstack[c]{1.287$^{*}$\\{\scriptsize [1.286, 1.288]}} & \shortstack[c]{-1.819\\{\scriptsize [-40.21, 36.56]}} \\
1.3 & 11.50 & 1.000 & Densities & Independent & \shortstack[c]{11.40$^{*}$\\{\scriptsize [11.39, 11.40]}} & \shortstack[c]{1.287$^{*}$\\{\scriptsize [1.286, 1.288]}} & \shortstack[c]{0.2871$^{*}$\\{\scriptsize [0.2863, 0.2880]}} \\
1.3 & 11.50 & 1.000 & Densities & Positive & \shortstack[c]{11.40$^{*}$\\{\scriptsize [11.39, 11.40]}} & \shortstack[c]{1.287$^{*}$\\{\scriptsize [1.286, 1.288]}} & \shortstack[c]{0.2872$^{*}$\\{\scriptsize [0.2862, 0.2880]}} \\
1.3 & 11.50 & 1.000 & Densities & Negative & \shortstack[c]{8.538\\{\scriptsize [-465.1, 464.1]}} & \shortstack[c]{1.287$^{*}$\\{\scriptsize [1.286, 1.288]}} & \shortstack[c]{0.03885\\{\scriptsize [-41.10, 39.61]}} \\
1.3 & 11.50 & 2.000 & Totals & Independent & \shortstack[c]{10.54$^{*}$\\{\scriptsize [10.53, 10.55]}} & \shortstack[c]{1.189$^{*}$\\{\scriptsize [1.188, 1.190]}} & \shortstack[c]{-0.00004087\\{\scriptsize [-0.001006, 0.0009554]}} \\
1.3 & 11.50 & 2.000 & Totals & Positive & \shortstack[c]{10.17$^{*}$\\{\scriptsize [10.16, 10.18]}} & \shortstack[c]{1.216$^{*}$\\{\scriptsize [1.215, 1.217]}} & \shortstack[c]{-0.05052$^{*}$\\{\scriptsize [-0.05149, -0.04961]}} \\
1.3 & 11.50 & 2.000 & Totals & Negative & \shortstack[c]{21768.$^{*}$\\{\scriptsize [20830., 22614.]}} & \shortstack[c]{1.208$^{*}$\\{\scriptsize [1.207, 1.209]}} & \shortstack[c]{2011.$^{*}$\\{\scriptsize [1924., 2089.]}} \\
1.3 & 11.50 & 2.000 & Densities & Independent & \shortstack[c]{10.54$^{*}$\\{\scriptsize [10.53, 10.55]}} & \shortstack[c]{1.189$^{*}$\\{\scriptsize [1.188, 1.190]}} & \shortstack[c]{0.1892$^{*}$\\{\scriptsize [0.1878, 0.1906]}} \\
1.3 & 11.50 & 2.000 & Densities & Positive & \shortstack[c]{10.17$^{*}$\\{\scriptsize [10.16, 10.18]}} & \shortstack[c]{1.216$^{*}$\\{\scriptsize [1.215, 1.217]}} & \shortstack[c]{0.1658$^{*}$\\{\scriptsize [0.1647, 0.1668]}} \\
1.3 & 11.50 & 2.000 & Densities & Negative & \shortstack[c]{21757.$^{*}$\\{\scriptsize [20707., 22614.]}} & \shortstack[c]{1.208$^{*}$\\{\scriptsize [1.207, 1.209]}} & \shortstack[c]{2010.$^{*}$\\{\scriptsize [1913., 2089.]}} \\
1.3 & 11.50 & 5.000 & Totals & Independent & \shortstack[c]{9.976$^{*}$\\{\scriptsize [9.958, 9.995]}} & \shortstack[c]{1.120$^{*}$\\{\scriptsize [1.118, 1.121]}} & \shortstack[c]{-0.00001340\\{\scriptsize [-0.001608, 0.001581]}} \\
1.3 & 11.50 & 5.000 & Totals & Positive & \shortstack[c]{9.582$^{*}$\\{\scriptsize [9.570, 9.594]}} & \shortstack[c]{1.182$^{*}$\\{\scriptsize [1.180, 1.185]}} & \shortstack[c]{-0.07737$^{*}$\\{\scriptsize [-0.07930, -0.07549]}} \\
1.3 & 11.50 & 5.000 & Totals & Negative & \shortstack[c]{13680.$^{*}$\\{\scriptsize [13189., 14168.]}} & \shortstack[c]{1.164$^{*}$\\{\scriptsize [1.162, 1.166]}} & \shortstack[c]{1380.$^{*}$\\{\scriptsize [1331., 1430.]}} \\
1.3 & 11.50 & 5.000 & Densities & Independent & \shortstack[c]{9.976$^{*}$\\{\scriptsize [9.956, 9.995]}} & \shortstack[c]{1.120$^{*}$\\{\scriptsize [1.118, 1.121]}} & \shortstack[c]{0.1198$^{*}$\\{\scriptsize [0.1175, 0.1219]}} \\
1.3 & 11.50 & 5.000 & Densities & Positive & \shortstack[c]{9.582$^{*}$\\{\scriptsize [9.569, 9.595]}} & \shortstack[c]{1.182$^{*}$\\{\scriptsize [1.180, 1.185]}} & \shortstack[c]{0.1049$^{*}$\\{\scriptsize [0.1036, 0.1063]}} \\
1.3 & 11.50 & 5.000 & Densities & Negative & \shortstack[c]{13695.$^{*}$\\{\scriptsize [13200., 14149.]}} & \shortstack[c]{1.164$^{*}$\\{\scriptsize [1.162, 1.166]}} & \shortstack[c]{1382.$^{*}$\\{\scriptsize [1332., 1428.]}} \\
1.3 & 11.50 & 10.00 & Totals & Independent & \shortstack[c]{9.711$^{*}$\\{\scriptsize [9.688, 9.733]}} & \shortstack[c]{1.083$^{*}$\\{\scriptsize [1.081, 1.085]}} & \shortstack[c]{-0.00008394\\{\scriptsize [-0.002035, 0.001984]}} \\
1.3 & 11.50 & 10.00 & Totals & Positive & \shortstack[c]{9.314$^{*}$\\{\scriptsize [9.297, 9.332]}} & \shortstack[c]{1.165$^{*}$\\{\scriptsize [1.162, 1.169]}} & \shortstack[c]{-0.09083$^{*}$\\{\scriptsize [-0.09438, -0.08726]}} \\
1.3 & 11.50 & 10.00 & Totals & Negative & \shortstack[c]{6843.$^{*}$\\{\scriptsize [6532., 7139.]}} & \shortstack[c]{1.134$^{*}$\\{\scriptsize [1.131, 1.137]}} & \shortstack[c]{741.9$^{*}$\\{\scriptsize [708.1, 774.0]}} \\
1.3 & 11.50 & 10.00 & Densities & Independent & \shortstack[c]{9.711$^{*}$\\{\scriptsize [9.689, 9.731]}} & \shortstack[c]{1.083$^{*}$\\{\scriptsize [1.081, 1.085]}} & \shortstack[c]{0.08340$^{*}$\\{\scriptsize [0.08048, 0.08600]}} \\
1.3 & 11.50 & 10.00 & Densities & Positive & \shortstack[c]{9.315$^{*}$\\{\scriptsize [9.296, 9.332]}} & \shortstack[c]{1.165$^{*}$\\{\scriptsize [1.162, 1.169]}} & \shortstack[c]{0.07453$^{*}$\\{\scriptsize [0.07280, 0.07645]}} \\
1.3 & 11.50 & 10.00 & Densities & Negative & \shortstack[c]{6836.$^{*}$\\{\scriptsize [6536., 7132.]}} & \shortstack[c]{1.134$^{*}$\\{\scriptsize [1.131, 1.137]}} & \shortstack[c]{741.2$^{*}$\\{\scriptsize [708.7, 773.3]}} \\
1.3 & 11.50 & 100.0 & Totals & Independent & \shortstack[c]{9.347$^{*}$\\{\scriptsize [9.316, 9.378]}} & \shortstack[c]{1.022$^{*}$\\{\scriptsize [1.017, 1.027]}} & \shortstack[c]{0.00007377\\{\scriptsize [-0.003500, 0.003707]}} \\
1.3 & 11.50 & 100.0 & Totals & Positive & \shortstack[c]{8.916$^{*}$\\{\scriptsize [8.828, 8.998]}} & \shortstack[c]{1.135$^{*}$\\{\scriptsize [1.114, 1.157]}} & \shortstack[c]{-0.1140$^{*}$\\{\scriptsize [-0.1369, -0.09295]}} \\
1.3 & 11.50 & 100.0 & Totals & Negative & \shortstack[c]{221.7$^{*}$\\{\scriptsize [173.1, 279.5]}} & \shortstack[c]{1.042$^{*}$\\{\scriptsize [1.033, 1.051]}} & \shortstack[c]{30.74$^{*}$\\{\scriptsize [23.69, 39.09]}} \\
1.3 & 11.50 & 100.0 & Densities & Independent & \shortstack[c]{9.347$^{*}$\\{\scriptsize [9.316, 9.380]}} & \shortstack[c]{1.022$^{*}$\\{\scriptsize [1.017, 1.027]}} & \shortstack[c]{0.02212$^{*}$\\{\scriptsize [0.01600, 0.02850]}} \\
1.3 & 11.50 & 100.0 & Densities & Positive & \shortstack[c]{8.912$^{*}$\\{\scriptsize [8.829, 8.993]}} & \shortstack[c]{1.136$^{*}$\\{\scriptsize [1.117, 1.157]}} & \shortstack[c]{0.02140$^{*}$\\{\scriptsize [0.01668, 0.02577]}} \\
1.3 & 11.50 & 100.0 & Densities & Negative & \shortstack[c]{224.3$^{*}$\\{\scriptsize [176.3, 278.4]}} & \shortstack[c]{1.042$^{*}$\\{\scriptsize [1.034, 1.051]}} & \shortstack[c]{31.14$^{*}$\\{\scriptsize [24.19, 38.99]}} \\
1.3 & 11.50 & 1000. & Totals & Independent & \shortstack[c]{9.283$^{*}$\\{\scriptsize [9.256, 9.310]}} & \shortstack[c]{1.004$^{*}$\\{\scriptsize [0.9954, 1.014]}} & \shortstack[c]{0.0002767\\{\scriptsize [-0.004700, 0.004941]}} \\
1.3 & 11.50 & 1000. & Totals & Positive & \shortstack[c]{8.809$^{*}$\\{\scriptsize [8.430, 9.193]}} & \shortstack[c]{1.131$^{*}$\\{\scriptsize [1.028, 1.231]}} & \shortstack[c]{-0.1268$^{*}$\\{\scriptsize [-0.2268, -0.02353]}} \\
1.3 & 11.50 & 1000. & Totals & Negative & \shortstack[c]{12.23\\{\scriptsize [6.403, 19.22]}} & \shortstack[c]{1.009$^{*}$\\{\scriptsize [0.9924, 1.026]}} & \shortstack[c]{0.6383\\{\scriptsize [-0.6224, 2.153]}} \\
1.3 & 11.50 & 1000. & Densities & Independent & \shortstack[c]{9.281$^{*}$\\{\scriptsize [9.254, 9.309]}} & \shortstack[c]{1.004$^{*}$\\{\scriptsize [0.9947, 1.014]}} & \shortstack[c]{0.004083\\{\scriptsize [-0.005837, 0.01479]}} \\
1.3 & 11.50 & 1000. & Densities & Positive & \shortstack[c]{8.800$^{*}$\\{\scriptsize [8.411, 9.198]}} & \shortstack[c]{1.132$^{*}$\\{\scriptsize [1.030, 1.237]}} & \shortstack[c]{0.004137\\{\scriptsize [-0.002992, 0.01178]}} \\
1.3 & 11.50 & 1000. & Densities & Negative & \shortstack[c]{12.12\\{\scriptsize [6.161, 19.60]}} & \shortstack[c]{1.008$^{*}$\\{\scriptsize [0.9917, 1.026]}} & \shortstack[c]{0.6232\\{\scriptsize [-0.6841, 2.260]}} \\
\end{longtable}
\endgroup

%% file: tables/table_montecarlo_noise.tex
\begin{table}[!htbp]\centering
\small
\caption{Monte Carlo: reverse-regression attenuation and spatial aggregation as the idiosyncratic-shock standard deviation $\sigma_\eps$ varies. The \emph{naive} specification regresses log totals on log totals with no area control; the \emph{preferred} specification uses densities with an area control. At the finest scale ($K/P=1$) both recover $\hat\beta\approx\lambda\mu$, where $\lambda=\sigma_\nu^2/(\sigma_\nu^2+\sigma_\eps^2)$ is the reliability ratio (attenuation). Under aggregation the naive slope is contracted toward one (firm-count channel), whereas the preferred slope stays near $\lambda\mu$ because the density transformation removes that channel. Averages over Monte Carlo replications; $\sigma_\nu^2=\mathrm{Var}(\log y)=1$.}
\label{tab:mcNoise}
\resizebox{\textwidth}{!}{%
\begin{tabular}{cccc cc cc}
\hline\hline
 & & & & \multicolumn{2}{c}{Naive (totals)} & \multicolumn{2}{c}{Preferred (density+area)}\\
\cline{5-6}\cline{7-8}
$\mu$ & $\sigma_\eps$ & $\lambda$ & $\lambda\mu$ & $\hat\beta$ (fine) & $\hat\beta$ (coarse) & $\hat\beta$ (fine) & $\hat\beta$ (coarse) \\
\hline
0.7 & 0.1 & 0.990 & 0.693 & 0.693 & 0.980 & 0.693 & 0.633 \\
0.7 & 0.3 & 0.917 & 0.642 & 0.642 & 0.976 & 0.642 & 0.560 \\
0.7 & 0.5 & 0.800 & 0.560 & 0.560 & 0.965 & 0.560 & 0.485 \\
0.7 & 1.0 & 0.500 & 0.350 & 0.350 & 0.909 & 0.350 & 0.191 \\
1.0 & 0.1 & 0.990 & 0.990 & 0.990 & 0.999 & 0.990 & 0.993 \\
1.0 & 0.3 & 0.917 & 0.917 & 0.918 & 0.998 & 0.918 & 0.937 \\
1.0 & 0.5 & 0.800 & 0.800 & 0.800 & 0.996 & 0.800 & 0.819 \\
1.0 & 1.0 & 0.500 & 0.500 & 0.500 & 0.980 & 0.500 & 0.502 \\
1.3 & 0.1 & 0.990 & 1.287 & 1.287 & 1.004 & 1.287 & 1.139 \\
1.3 & 0.3 & 0.917 & 1.193 & 1.193 & 1.003 & 1.193 & 1.095 \\
1.3 & 0.5 & 0.800 & 1.040 & 1.040 & 1.000 & 1.040 & 1.020 \\
1.3 & 1.0 & 0.500 & 0.650 & 0.650 & 0.995 & 0.650 & 0.754 \\
\hline\hline
\multicolumn{8}{l}{\textit{Note}: ``fine'' is $K/P=1$; ``coarse'' is the largest $K/P$ in the grid. Area is positively correlated with the firm count.}\\
\end{tabular}%
}
\end{table}

%% file: tables/table_montecarlo_biasApprox.tex
\begin{table}[!htbp]\centering
\small
\caption{Monte Carlo verification of the bias approximation in Equation~\eqref{eq:mcBiasApprox}. For each true elasticity $\mu$, shock standard deviation $\sigma_\eps$, and aggregation ratio $K/P$, the table reports the simulated bias $\hat\beta_P-\mu$ from the regression of $\log Y_p$ on $\log NTL_p$ (average over 1000 replications, $K=100{,}000$), the approximation $-\mu\kappa_P+(1-\mu)\left[\delta_P-\tfrac{\mu}{2}\rho_P\right]$, with $\kappa_P$ computed from Equation~\eqref{eq:mcReliability} using the true shocks $\eps_i$ and the true $\mu$, and $\delta_P$ and $\rho_P$ from Equation~\eqref{eq:mcDeltaRhoDef}, and their difference (the remainder $R$ of Theorem~\ref{teo:aggregation}).}
\label{tab:mcBiasApproxVerification}
\resizebox{\textwidth}{!}{%
\begin{tabular}{cc rrr rrr rrr}
\hline\hline
 & & \multicolumn{3}{c}{$\mu=0.7$} & \multicolumn{3}{c}{$\mu=1$} & \multicolumn{3}{c}{$\mu=1.3$}\\
\cline{3-5}\cline{6-8}\cline{9-11}
$\sigma_\eps$ & $K/P$ & $\hat\beta_P-\mu$ & Approx. & $R$ & $\hat\beta_P-\mu$ & Approx. & $R$ & $\hat\beta_P-\mu$ & Approx. & $R$ \\
\hline
0.1 & 1 & -0.007 & -0.007 & 0.000 & -0.010 & -0.010 & -0.000 & -0.013 & -0.013 & -0.000 \\
 & 2 & 0.035 & 0.038 & -0.003 & -0.007 & -0.006 & -0.001 & -0.111 & -0.108 & -0.003 \\
 & 5 & 0.081 & 0.072 & 0.010 & -0.005 & -0.004 & -0.001 & -0.180 & -0.171 & -0.009 \\
 & 10 & 0.118 & 0.088 & 0.030 & -0.003 & -0.002 & -0.001 & -0.217 & -0.205 & -0.012 \\
 & 100 & 0.225 & 0.126 & 0.099 & -0.001 & -0.000 & -0.000 & -0.278 & -0.270 & -0.008 \\
 & 1000 & 0.281 & 0.199 & 0.082 & -0.000 & -0.000 & -0.000 & -0.296 & -0.293 & -0.002 \\
\addlinespace
0.3 & 1 & -0.058 & -0.058 & 0.000 & -0.083 & -0.083 & -0.000 & -0.107 & -0.107 & 0.000 \\
 & 2 & -0.009 & -0.002 & -0.006 & -0.059 & -0.053 & -0.005 & -0.164 & -0.154 & -0.010 \\
 & 5 & 0.044 & 0.039 & 0.006 & -0.039 & -0.032 & -0.007 & -0.209 & -0.191 & -0.018 \\
 & 10 & 0.086 & 0.059 & 0.028 & -0.028 & -0.021 & -0.007 & -0.235 & -0.214 & -0.021 \\
 & 100 & 0.210 & 0.094 & 0.115 & -0.008 & -0.004 & -0.003 & -0.282 & -0.268 & -0.014 \\
 & 1000 & 0.276 & 0.166 & 0.110 & -0.001 & -0.001 & -0.001 & -0.296 & -0.292 & -0.004 \\
\addlinespace
0.5 & 1 & -0.140 & -0.140 & 0.000 & -0.200 & -0.200 & 0.000 & -0.260 & -0.260 & -0.000 \\
 & 2 & -0.081 & -0.070 & -0.011 & -0.147 & -0.134 & -0.013 & -0.256 & -0.235 & -0.021 \\
 & 5 & -0.019 & -0.017 & -0.002 & -0.102 & -0.082 & -0.020 & -0.262 & -0.226 & -0.036 \\
 & 10 & 0.030 & 0.008 & 0.023 & -0.075 & -0.055 & -0.020 & -0.269 & -0.230 & -0.039 \\
 & 100 & 0.180 & 0.029 & 0.151 & -0.022 & -0.012 & -0.010 & -0.289 & -0.265 & -0.024 \\
 & 1000 & 0.267 & 0.089 & 0.178 & -0.005 & -0.002 & -0.003 & -0.297 & -0.291 & -0.007 \\
\addlinespace
1 & 1 & -0.350 & -0.350 & 0.000 & -0.500 & -0.500 & 0.000 & -0.650 & -0.650 & 0.000 \\
 & 2 & -0.278 & -0.254 & -0.025 & -0.400 & -0.370 & -0.030 & -0.538 & -0.490 & -0.048 \\
 & 5 & -0.204 & -0.176 & -0.029 & -0.307 & -0.246 & -0.061 & -0.452 & -0.351 & -0.101 \\
 & 10 & -0.146 & -0.146 & 0.000 & -0.244 & -0.173 & -0.071 & -0.406 & -0.283 & -0.123 \\
 & 100 & 0.053 & -0.284 & 0.337 & -0.089 & -0.040 & -0.049 & -0.326 & -0.231 & -0.094 \\
 & 1000 & 0.208 & -0.683 & 0.891 & -0.021 & -0.007 & -0.014 & -0.305 & -0.271 & -0.034 \\
\hline\hline
\multicolumn{11}{l}{\parbox{1.35\textwidth}{\textit{Note}: aggregation into $P$ units of heterogeneous size as in the baseline Monte Carlo design; $\sigma_\nu^2=\mathrm{Var}(\log y_i)=1$, $\phi=11.5$. At $K/P=1$ (no aggregation) $\delta_P=\rho_P=0$ and the bias is pure attenuation, $-\mu\kappa_P$, so $R\approx0$ by construction up to $O(\sigma_\eps^2)$ terms.}}\\
\end{tabular}
}
\end{table}

%% file: tables/table_kappa.tex
\begin{scriptsize}
\begin{longtable}{lllcc}
\caption{Selected NLS point estimates for the Nonlinear Model (NLM) in Eq.~(\ref{eq:lightOnIncomeQuadratic}), by country and administrative level, corresponding to Figure~\ref{fig:EstimatedQuadraticEffect_countries}. The model is estimated on the available 2012--2019 panel, except Kenya which starts in 2013, and includes time-varying intercepts $\alpha_t$; the intercept and year effects are omitted from the table to focus on the cross-country and cross-scale comparison of $\hat{\beta}$, $\hat{\ell}_0$, and $\hat{\gamma}$. Significance: $^{*}\,10\%$, $^{**}\,5\%$, $^{***}\,1\%$.}\label{tab:kappaEstimates}\\
\toprule
Country & Level & Parameter & Estimate & Std. Error \\
\midrule
\endfirsthead
\multicolumn{5}{l}{\tablename\ \thetable{} -- continued from previous page}\\
\toprule
Country & Level & Parameter & Estimate & Std. Error \\
\midrule
\endhead
\midrule
\multicolumn{5}{r}{Continued on next page}\\
\endfoot
\bottomrule
\endlastfoot
Brazil & MUN & $\beta$ & 0.940$^{***}$ & 0.003 \\
 &  & $\ell_0$ & 0.027$^{***}$ & 0.001 \\
 &  & $\gamma$ & -0.065$^{***}$ & 0.003 \\
 & MICRO & $\beta$ & 1.089$^{***}$ & 0.006 \\
 &  & $\ell_0$ & 0.009$^{***}$ & 0.001 \\
 &  & $\gamma$ & 0.077$^{***}$ & 0.008 \\
 & MESO & $\beta$ & 1.138$^{***}$ & 0.013 \\
 &  & $\ell_0$ & 0.006$^{***}$ & 0.001 \\
 &  & $\gamma$ & 0.126$^{***}$ & 0.017 \\
 & FED & $\beta$ & 1.299$^{***}$ & 0.022 \\
 &  & $\ell_0$ & 0.089$^{***}$ & 0.011 \\
 &  & $\gamma$ & 0.188$^{***}$ & 0.016 \\
Indonesia & KAB/KOTA & $\beta$ & 0.664$^{***}$ & 0.008 \\
 &  & $\ell_0$ & 0.008$^{***}$ & 0.001 \\
 &  & $\gamma$ & -0.367$^{***}$ & 0.010 \\
 & PROV & $\beta$ & 0.988$^{***}$ & 0.027 \\
 &  & $\ell_0$ & 0.101$^{***}$ & 0.017 \\
 &  & $\gamma$ & -0.193$^{***}$ & 0.023 \\
Italy & MUN & $\beta$ & 1.189$^{***}$ & 0.004 \\
 &  & $\ell_0$ & 1.182$^{***}$ & 0.026 \\
 &  & $\gamma$ & -0.033$^{***}$ & 0.002 \\
 & LLA & $\beta$ & 1.312$^{***}$ & 0.019 \\
 &  & $\ell_0$ & 2.867$^{***}$ & 0.188 \\
 &  & $\gamma$ & 0.061$^{***}$ & 0.007 \\
 & NUTS3 & $\beta$ & 1.538$^{***}$ & 0.053 \\
 &  & $\ell_0$ & 7.202$^{***}$ & 0.881 \\
 &  & $\gamma$ & -0.031 & 0.019 \\
 & NUTS2 & $\beta$ & 3.789$^{**}$ & 1.593 \\
 &  & $\ell_0$ & 36.889 & 22.308 \\
 &  & $\gamma$ & -0.052 & 0.047 \\
Kenya & COUNTY & $\beta$ & 0.466$^{***}$ & 0.017 \\
 &  & $\ell_0$ & 0.001 & 0.001 \\
 &  & $\gamma$ & -0.780$^{***}$ & 0.026 \\
USA & COUNTY & $\beta$ & 1.104$^{***}$ & 0.004 \\
 &  & $\ell_0$ & 0.137$^{***}$ & 0.004 \\
 &  & $\gamma$ & -0.113$^{***}$ & 0.005 \\
 & CZ & $\beta$ & 1.066$^{***}$ & 0.008 \\
 &  & $\ell_0$ & 0.049$^{***}$ & 0.005 \\
 &  & $\gamma$ & -0.181$^{***}$ & 0.011 \\
 & STATE & $\beta$ & 0.896$^{***}$ & 0.044 \\
 &  & $\ell_0$ & -0.101 & 0.081 \\
 &  & $\gamma$ & -0.313$^{***}$ & 0.028 \\
\end{longtable}
\end{scriptsize}

%% file: tables/table_cvCrossScale.tex
\begin{tabular}{lllcccccccc}
  \hline
Country & Jump & \shortstack{Estimate\\$\rightarrow$ predict} & \shortstack{$R^2_{\text{oos}}$\\naive (BM)} & \shortstack{$R^2_{\text{oos}}$\\naive (NLM)} & \shortstack{Gain\\naive} & \shortstack{$R^2_{\text{oos}}$\\demean (BM)} & \shortstack{$R^2_{\text{oos}}$\\demean (NLM)} & \shortstack{Gain\\demean} & \shortstack{Level bias\\(BM)} & \shortstack{Level bias\\(NLM)} \\ 
  \hline
Brazil & small & MICRO $\rightarrow$ MUN & 0.819 & 0.831 & 0.012 & 0.842 & 0.849 & 0.007 & 0.255 & 0.223 \\ 
  Brazil & large & FED $\rightarrow$ MUN & 0.580 & 0.469 & -0.112 & 0.819 & 0.817 & -0.002 & 0.812 & 0.982 \\ 
  Indonesia & small & PROV $\rightarrow$ KAB/KOTA & 0.820 & 0.820 & -0.001 & 0.880 & 0.896 & 0.017 & -0.525 & -0.597 \\ 
  Italy & small & LLA $\rightarrow$ MUN & 0.848 & 0.854 & 0.006 & 0.851 & 0.855 & 0.005 & 0.081 & 0.060 \\ 
  Italy & large & NUTS2 $\rightarrow$ MUN & 0.751 & 0.248 & -0.502 & 0.846 & 0.697 & -0.149 & -0.457 & -0.994 \\ 
  USA & small & CZ $\rightarrow$ COUNTY & 0.854 & 0.857 & 0.004 & 0.883 & 0.892 & 0.010 & -0.297 & -0.326 \\ 
  USA & large & STATE $\rightarrow$ COUNTY & 0.076 & 0.076 & -0.000 & 0.877 & 0.877 & 0.000 & -1.559 & -1.559 \\ 
   \hline
\end{tabular}

%% file: tables/table_cvTemporal.tex
\begin{tabular}{llccccccc}
  \hline
Country & Level & \shortstack{$R^2_{\text{oos}}$\\(BM)} & \shortstack{$R^2_{\text{oos}}$\\(NLM)} & \shortstack{Gain\\NLM--BM} & \shortstack{RMSE\\(BM)} & \shortstack{RMSE\\(NLM)} & \shortstack{Max.\ yearly\\error (BM)} & \shortstack{Max.\ yearly\\error (NLM)} \\ 
  \hline
Brazil & MUN & 0.856 & 0.862 & 0.006 & 0.631 & 0.617 & 0.697 & 0.681 \\ 
  Indonesia & KAB/KOTA & 0.910 & 0.912 & 0.002 & 0.647 & 0.640 & 0.726 & 0.708 \\ 
  Italy & MUN & 0.853 & 0.865 & 0.012 & 0.569 & 0.545 & 0.573 & 0.548 \\ 
  Kenya & COUNTY & 0.961 & 0.961 & 0.000 & 0.384 & 0.383 & 0.416 & 0.414 \\ 
  USA & COUNTY & 0.885 & 0.897 & 0.012 & 0.592 & 0.560 & 0.619 & 0.590 \\ 
   \hline
\end{tabular}

%% file: tables/table_pooledFinestPredictionBalanced.tex
\begin{tabular}{llcccccc}
  \hline
Calibration & Model & Mean error & MAE & RMSE & P10 & Median & P90 \\
  \hline
Country-year FE & BM & 0.000 & 0.445 & 0.595 & -0.694 & -0.009 & 0.705 \\
  Country-year FE & NLM & 0.000 & 0.428 & 0.574 & -0.670 & 0.000 & 0.671 \\
  Aggregate country-year intercept & BM & -0.563 & 0.671 & 0.820 & -1.261 & -0.573 & 0.146 \\
  Aggregate country-year intercept & NLM & -1.011 & 1.037 & 1.168 & -1.701 & -1.005 & -0.332 \\
  Aggregate country intercept & BM & -0.621 & 0.717 & 0.866 & -1.328 & -0.629 & 0.097 \\
  Aggregate country intercept & NLM & -1.166 & 1.184 & 1.308 & -1.865 & -1.159 & -0.480 \\
  Common intercept & BM & -0.000 & 0.476 & 0.638 & -0.754 & 0.003 & 0.741 \\
  Common intercept & NLM & -0.000 & 0.469 & 0.628 & -0.742 & 0.010 & 0.728 \\
   \hline
\end{tabular}

%% file: tables/table_pooledFinestAggregateAlpha.tex
\begin{tabular}{ccc}
  \hline
Country & $\hat{\alpha}$ (BM) & $\hat{\alpha}$ (NLM) \\ 
  \hline
Brazil & 5.902 & 6.562 \\ 
  Italy & 5.727 & 6.088 \\ 
  USA & 6.529 & 7.117 \\ 
   \hline
\end{tabular}

%% file: tables/table_muCorrected.tex
\begin{table}[!htbp]\centering
\small
\caption{Aggregation-corrected elementary elasticity under negligible measurement error ($\kappa_\infty=0$). For each country and administrative level, $\hat\beta^{\mathrm{BM}}$ is the Baseline-Model slope of Equation~\eqref{eq:lightOnIncome} (log activity density on log light density with an area control and year intercepts), i.e.\ the specification that corrects for aggregation \emph{by design}; $\hat\beta^{\mathrm{biased}}$ is the slope of the naive totals regression, $\log Y_{pt}$ on $\log NTL_{pt}$ with year intercepts only, i.e.\ the specification to which Theorem~\ref{teo:aggregation} applies. Standard errors are clustered by spatial unit. The corrected elasticity $\hat\mu$ solves $\tfrac{\rho_P}{2}\mu^2+(1-\delta_P-\tfrac{\rho_P}{2})\mu+\delta_P-\hat\beta^{\mathrm{biased}}=0$ (root closest to $\hat\beta^{\mathrm{biased}}$), i.e.\ Equation~\eqref{eq:mcBiasApprox} with $\kappa_\infty=0$, using the cell-based $\delta_P$ and $\rho_P$ of Table~\ref{tab:deltaRhoEmpirics}. Brackets report the 95\% interval obtained by mapping the confidence bounds of $\hat\beta^{\mathrm{biased}}$ through the same inversion, holding $\delta_P$ and $\rho_P$ fixed.}
\label{tab:muCorrected}
\begin{tabular}{ll cc cc}
\hline\hline
Country & Level & $\hat\beta^{\mathrm{BM}}$ (s.e.) & $\hat\beta^{\mathrm{biased}}$ (s.e.) & $\hat\mu$ & 95\% interval \\
\hline
Brazil        & MUN      & 0.87 (0.01) & 0.89 (0.01) & 0.76 & [0.73, 0.80] \\
              & MICRO    & 1.06 (0.02) & 1.05 (0.02) & 1.17 & [1.07, 1.28] \\
              & MESO     & 1.10 (0.04) & 1.10 (0.04) & 0.95 & [0.89, 0.99] \\
              & STATE    & 1.13 (0.05) & 1.13 (0.05) & --   & --           \\
\addlinespace
Italy         & MUN      & 1.01 (0.01) & 1.00 (0.01) & 1.00 & [0.98, 1.01] \\
              & LLA      & 0.99 (0.02) & 1.01 (0.02) & 1.02 & [0.93, 1.10] \\
              & NUTS3    & 1.07 (0.06) & 1.03 (0.05) & 1.14 & --           \\
              & NUTS2    & 1.06 (0.16) & 1.03 (0.06) & --   & --           \\
\addlinespace
United States & COUNTY   & 0.96 (0.01) & 0.96 (0.01) & 0.91 & [0.87, 0.96] \\
              & CZ       & 0.98 (0.02) & 0.96 (0.02) & 0.90 & [0.79, 1.00] \\
              & STATE    & 0.92 (0.09) & 0.76 (0.11) & --   & --           \\
\addlinespace
Indonesia     & KAB/KOTA & 0.62 (0.02) & 0.62 (0.02) & 0.70 & [0.65, 0.74] \\
              & PROV     & 0.80 (0.05) & 0.80 (0.05) & 0.91 & [0.85, 0.96] \\
\addlinespace
Kenya         & COUNTY   & 0.46 (0.05) & 0.52 (0.05) & --   & --           \\
\hline\hline
\multicolumn{6}{l}{\parbox{0.92\textwidth}{\textit{Note}: ``--'' denotes an inversion outside the domain of the local approximation, where no real root of the quadratic exists (Brazilian and U.S.\ states, Italian NUTS2, Kenyan counties). For Italian NUTS3 the point inversion exists but the mapped confidence interval is partly undefined. Failures concentrate exactly where the Monte Carlo verification (Figure~\ref{fig:mcBiasApproxVerification}) shows that the remainder of Theorem~\ref{teo:aggregation} becomes large: coarse aggregation and, for Kenya, a naive slope below $\delta_P$, which no $\mu\geq 0$ can generate under $\kappa_\infty=0$ --- evidence against negligible measurement error in that sample.}}\\
\end{tabular}
\end{table}

%% file: tables/table_bm_full_robust.tex
\begin{scriptsize}
\begin{longtable}{lllccc}
\caption{Complete Baseline Model estimates with $95\%$ robust confidence intervals.}\label{tab:bm_full_robust}\\
\hline\hline
Country & Level & Parameter & Estimate & $95\%$ CI (unit) & $95\%$ CI (higher geo.) \\
\hline
\endfirsthead
\hline\hline
Country & Level & Parameter & Estimate & $95\%$ CI (unit) & $95\%$ CI (higher geo.) \\
\hline
\endhead
Brazil & MUN & $\alpha_{2012}$ & 5.569 & [5.475, 5.662] & [5.057, 6.080] \\
Brazil & MUN & $d_{2013}$ & $-$0.127 & [$-$0.134, $-$0.120] & [$-$0.170, $-$0.084] \\
Brazil & MUN & $d_{2014}$ & $-$0.225 & [$-$0.233, $-$0.217] & [$-$0.274, $-$0.175] \\
Brazil & MUN & $d_{2015}$ & $-$0.319 & [$-$0.328, $-$0.311] & [$-$0.369, $-$0.270] \\
Brazil & MUN & $d_{2016}$ & $-$0.399 & [$-$0.409, $-$0.389] & [$-$0.456, $-$0.342] \\
Brazil & MUN & $d_{2017}$ & $-$0.445 & [$-$0.456, $-$0.435] & [$-$0.512, $-$0.379] \\
Brazil & MUN & $d_{2018}$ & $-$0.450 & [$-$0.461, $-$0.439] & [$-$0.521, $-$0.380] \\
Brazil & MUN & $d_{2019}$ & $-$0.403 & [$-$0.414, $-$0.391] & [$-$0.478, $-$0.328] \\
Brazil & MUN & $\log$(NTL\_km2) & 0.872 & [0.857, 0.887] & [0.795, 0.949] \\
Brazil & MUN & $\log$(Km2) & $-$0.025 & [$-$0.040, $-$0.009] & [$-$0.107, 0.057] \\
Brazil & MICRO & $\alpha_{2012}$ & 4.695 & [4.254, 5.136] & -- \\
Brazil & MICRO & $d_{2013}$ & $-$0.152 & [$-$0.168, $-$0.135] & -- \\
Brazil & MICRO & $d_{2014}$ & $-$0.250 & [$-$0.270, $-$0.230] & -- \\
Brazil & MICRO & $d_{2015}$ & $-$0.355 & [$-$0.375, $-$0.334] & -- \\
Brazil & MICRO & $d_{2016}$ & $-$0.466 & [$-$0.493, $-$0.438] & -- \\
Brazil & MICRO & $d_{2017}$ & $-$0.496 & [$-$0.523, $-$0.470] & -- \\
Brazil & MICRO & $d_{2018}$ & $-$0.490 & [$-$0.515, $-$0.465] & -- \\
Brazil & MICRO & $d_{2019}$ & $-$0.453 & [$-$0.480, $-$0.426] & -- \\
Brazil & MICRO & $\log$(NTL\_km2) & 1.056 & [1.022, 1.090] & -- \\
Brazil & MICRO & $\log$(Km2) & 0.089 & [0.039, 0.140] & -- \\
Brazil & MESO & $\alpha_{2012}$ & 4.311 & [3.090, 5.531] & -- \\
Brazil & MESO & $d_{2013}$ & $-$0.150 & [$-$0.174, $-$0.125] & -- \\
Brazil & MESO & $d_{2014}$ & $-$0.243 & [$-$0.272, $-$0.215] & -- \\
Brazil & MESO & $d_{2015}$ & $-$0.361 & [$-$0.396, $-$0.327] & -- \\
Brazil & MESO & $d_{2016}$ & $-$0.479 & [$-$0.523, $-$0.434] & -- \\
Brazil & MESO & $d_{2017}$ & $-$0.513 & [$-$0.557, $-$0.469] & -- \\
Brazil & MESO & $d_{2018}$ & $-$0.488 & [$-$0.530, $-$0.446] & -- \\
Brazil & MESO & $d_{2019}$ & $-$0.462 & [$-$0.502, $-$0.422] & -- \\
Brazil & MESO & $\log$(NTL\_km2) & 1.103 & [1.016, 1.190] & -- \\
Brazil & MESO & $\log$(Km2) & 0.117 & [0.000, 0.233] & -- \\
Brazil & FED & $\alpha_{2012}$ & 3.794 & [2.129, 5.460] & -- \\
Brazil & FED & $d_{2013}$ & $-$0.148 & [$-$0.171, $-$0.124] & -- \\
Brazil & FED & $d_{2014}$ & $-$0.240 & [$-$0.281, $-$0.199] & -- \\
Brazil & FED & $d_{2015}$ & $-$0.370 & [$-$0.424, $-$0.315] & -- \\
Brazil & FED & $d_{2016}$ & $-$0.482 & [$-$0.550, $-$0.415] & -- \\
Brazil & FED & $d_{2017}$ & $-$0.516 & [$-$0.580, $-$0.452] & -- \\
Brazil & FED & $d_{2018}$ & $-$0.494 & [$-$0.562, $-$0.426] & -- \\
Brazil & FED & $d_{2019}$ & $-$0.458 & [$-$0.526, $-$0.389] & -- \\
Brazil & FED & $\log$(NTL\_km2) & 1.129 & [1.025, 1.234] & -- \\
Brazil & FED & $\log$(Km2) & 0.149 & [0.013, 0.285] & -- \\
Italy & MUN & $\alpha_{2012}$ & 5.006 & [4.941, 5.070] & [4.748, 5.263] \\
Italy & MUN & $d_{2013}$ & 0.003 & [0.001, 0.006] & [$-$0.004, 0.010] \\
Italy & MUN & $d_{2014}$ & $-$0.032 & [$-$0.035, $-$0.029] & [$-$0.045, $-$0.019] \\
Italy & MUN & $d_{2015}$ & $-$0.011 & [$-$0.015, $-$0.008] & [$-$0.021, $-$0.002] \\
Italy & MUN & $d_{2016}$ & 0.060 & [0.056, 0.064] & [0.049, 0.071] \\
Italy & MUN & $d_{2017}$ & $-$0.008 & [$-$0.013, $-$0.003] & [$-$0.022, 0.006] \\
Italy & MUN & $d_{2018}$ & 0.077 & [0.073, 0.082] & [0.065, 0.090] \\
Italy & MUN & $d_{2019}$ & 0.133 & [0.127, 0.138] & [0.116, 0.149] \\
Italy & MUN & $\log$(NTL\_km2) & 1.010 & [0.998, 1.023] & [0.944, 1.077] \\
Italy & MUN & $\log$(Km2) & $-$0.035 & [$-$0.048, $-$0.021] & [$-$0.073, 0.003] \\
Italy & LLA & $\alpha_{2012}$ & 4.724 & [4.337, 5.110] & -- \\
Italy & LLA & $d_{2013}$ & $-$0.003 & [$-$0.007, 0.002] & -- \\
Italy & LLA & $d_{2014}$ & $-$0.015 & [$-$0.021, $-$0.009] & -- \\
Italy & LLA & $d_{2015}$ & 0.001 & [$-$0.005, 0.008] & -- \\
Italy & LLA & $d_{2016}$ & 0.068 & [0.060, 0.077] & -- \\
Italy & LLA & $d_{2017}$ & 0.002 & [$-$0.007, 0.011] & -- \\
Italy & LLA & $d_{2018}$ & 0.096 & [0.087, 0.105] & -- \\
Italy & LLA & $d_{2019}$ & 0.155 & [0.145, 0.166] & -- \\
Italy & LLA & $\log$(NTL\_km2) & 0.990 & [0.948, 1.032] & -- \\
Italy & LLA & $\log$(Km2) & 0.043 & [$-$0.017, 0.102] & -- \\
Italy & NUTS3 & $\alpha_{2012}$ & 5.214 & [4.161, 6.266] & -- \\
Italy & NUTS3 & $d_{2013}$ & $-$0.001 & [$-$0.007, 0.004] & -- \\
Italy & NUTS3 & $d_{2014}$ & 0.001 & [$-$0.008, 0.010] & -- \\
Italy & NUTS3 & $d_{2015}$ & 0.014 & [0.005, 0.023] & -- \\
Italy & NUTS3 & $d_{2016}$ & 0.085 & [0.072, 0.097] & -- \\
Italy & NUTS3 & $d_{2017}$ & 0.036 & [0.022, 0.049] & -- \\
Italy & NUTS3 & $d_{2018}$ & 0.113 & [0.099, 0.127] & -- \\
Italy & NUTS3 & $d_{2019}$ & 0.170 & [0.153, 0.186] & -- \\
Italy & NUTS3 & $\log$(NTL\_km2) & 1.062 & [0.940, 1.184] & -- \\
Italy & NUTS3 & $\log$(Km2) & $-$0.044 & [$-$0.154, 0.066] & -- \\
Italy & NUTS2 & $\alpha_{2012}$ & 5.694 & [3.081, 8.307] & -- \\
Italy & NUTS2 & $d_{2013}$ & $-$0.001 & [$-$0.012, 0.010] & -- \\
Italy & NUTS2 & $d_{2014}$ & $-$0.004 & [$-$0.024, 0.015] & -- \\
Italy & NUTS2 & $d_{2015}$ & 0.014 & [$-$0.002, 0.029] & -- \\
Italy & NUTS2 & $d_{2016}$ & 0.082 & [0.062, 0.102] & -- \\
Italy & NUTS2 & $d_{2017}$ & 0.046 & [0.019, 0.073] & -- \\
Italy & NUTS2 & $d_{2018}$ & 0.119 & [0.095, 0.142] & -- \\
Italy & NUTS2 & $d_{2019}$ & 0.172 & [0.144, 0.200] & -- \\
Italy & NUTS2 & $\log$(NTL\_km2) & 0.906 & [0.370, 1.442] & -- \\
Italy & NUTS2 & $\log$(Km2) & $-$0.034 & [$-$0.366, 0.298] & -- \\
USA & COUNTY & $\alpha_{2012}$ & 6.205 & [5.682, 6.727] & [4.489, 7.921] \\
USA & COUNTY & $d_{2013}$ & 0.031 & [0.025, 0.036] & [0.010, 0.051] \\
USA & COUNTY & $d_{2014}$ & 0.007 & [0.001, 0.014] & [$-$0.013, 0.028] \\
USA & COUNTY & $d_{2015}$ & 0.072 & [0.064, 0.080] & [0.029, 0.116] \\
USA & COUNTY & $d_{2016}$ & 0.091 & [0.082, 0.100] & [0.042, 0.140] \\
USA & COUNTY & $d_{2017}$ & 0.117 & [0.107, 0.126] & [0.069, 0.164] \\
USA & COUNTY & $d_{2018}$ & 0.131 & [0.120, 0.141] & [0.077, 0.184] \\
USA & COUNTY & $d_{2019}$ & 0.174 & [0.162, 0.185] & [0.101, 0.246] \\
USA & COUNTY & $\log$(NTL\_km2) & 0.956 & [0.934, 0.977] & [0.895, 1.016] \\
USA & COUNTY & $\log$(Km2) & $-$0.074 & [$-$0.143, $-$0.005] & [$-$0.298, 0.151] \\
USA & CZ & $\alpha_{2012}$ & 7.209 & [6.049, 8.369] & -- \\
USA & CZ & $d_{2013}$ & 0.026 & [0.013, 0.040] & -- \\
USA & CZ & $d_{2014}$ & 0.019 & [0.006, 0.033] & -- \\
USA & CZ & $d_{2015}$ & 0.062 & [0.046, 0.079] & -- \\
USA & CZ & $d_{2016}$ & 0.085 & [0.064, 0.105] & -- \\
USA & CZ & $d_{2017}$ & 0.108 & [0.086, 0.131] & -- \\
USA & CZ & $d_{2018}$ & 0.121 & [0.095, 0.146] & -- \\
USA & CZ & $d_{2019}$ & 0.138 & [0.115, 0.162] & -- \\
USA & CZ & $\log$(NTL\_km2) & 0.983 & [0.952, 1.015] & -- \\
USA & CZ & $\log$(Km2) & $-$0.170 & [$-$0.299, $-$0.040] & -- \\
USA & STATE & $\alpha_{2012}$ & 9.490 & [6.647, 12.333] & -- \\
USA & STATE & $d_{2013}$ & $-$0.007 & [$-$0.025, 0.012] & -- \\
USA & STATE & $d_{2014}$ & 0.019 & [$-$0.001, 0.039] & -- \\
USA & STATE & $d_{2015}$ & 0.079 & [0.063, 0.095] & -- \\
USA & STATE & $d_{2016}$ & 0.103 & [0.081, 0.125] & -- \\
USA & STATE & $d_{2017}$ & 0.119 & [0.097, 0.140] & -- \\
USA & STATE & $d_{2018}$ & 0.134 & [0.107, 0.161] & -- \\
USA & STATE & $d_{2019}$ & 0.155 & [0.118, 0.191] & -- \\
USA & STATE & $\log$(NTL\_km2) & 0.935 & [0.775, 1.094] & -- \\
USA & STATE & $\log$(Km2) & $-$0.301 & [$-$0.537, $-$0.065] & -- \\
Kenya & COUNTY & $\alpha_{2013}$ & 13.778 & [12.801, 14.754] & -- \\
Kenya & COUNTY & $d_{2014}$ & 0.019 & [$-$0.024, 0.063] & -- \\
Kenya & COUNTY & $d_{2015}$ & 0.019 & [$-$0.045, 0.082] & -- \\
Kenya & COUNTY & $d_{2016}$ & 0.026 & [$-$0.063, 0.116] & -- \\
Kenya & COUNTY & $d_{2017}$ & $-$0.053 & [$-$0.173, 0.068] & -- \\
Kenya & COUNTY & $d_{2018}$ & $-$0.059 & [$-$0.189, 0.071] & -- \\
Kenya & COUNTY & $d_{2019}$ & $-$0.011 & [$-$0.152, 0.131] & -- \\
Kenya & COUNTY & $\log$(NTL\_km2) & 0.459 & [0.366, 0.552] & -- \\
Kenya & COUNTY & $\log$(Km2) & $-$0.772 & [$-$0.906, $-$0.638] & -- \\
Indonesia & KAB/KOTA & $\alpha_{2012}$ & 10.520 & [10.089, 10.951] & [9.862, 11.178] \\
Indonesia & KAB/KOTA & $d_{2013}$ & 0.007 & [$-$0.022, 0.036] & [$-$0.071, 0.085] \\
Indonesia & KAB/KOTA & $d_{2014}$ & 0.007 & [$-$0.017, 0.031] & [$-$0.018, 0.032] \\
Indonesia & KAB/KOTA & $d_{2015}$ & 0.019 & [$-$0.018, 0.056] & [$-$0.096, 0.134] \\
Indonesia & KAB/KOTA & $d_{2016}$ & 0.061 & [0.031, 0.091] & [0.007, 0.116] \\
Indonesia & KAB/KOTA & $d_{2017}$ & 0.068 & [0.037, 0.100] & [0.014, 0.123] \\
Indonesia & KAB/KOTA & $d_{2018}$ & 0.038 & [0.005, 0.072] & [$-$0.026, 0.103] \\
Indonesia & KAB/KOTA & $d_{2019}$ & 0.047 & [0.009, 0.086] & [$-$0.015, 0.109] \\
Indonesia & KAB/KOTA & $\log$(NTL\_km2) & 0.611 & [0.575, 0.647] & [0.553, 0.669] \\
Indonesia & KAB/KOTA & $\log$(Km2) & $-$0.402 & [$-$0.462, $-$0.343] & [$-$0.497, $-$0.308] \\
Indonesia & PROV & $\alpha_{2012}$ & 10.002 & [7.918, 12.087] & -- \\
Indonesia & PROV & $d_{2013}$ & 0.000 & [$-$0.063, 0.064] & -- \\
Indonesia & PROV & $d_{2014}$ & $-$0.017 & [$-$0.057, 0.023] & -- \\
Indonesia & PROV & $d_{2015}$ & $-$0.063 & [$-$0.119, $-$0.006] & -- \\
Indonesia & PROV & $d_{2016}$ & 0.013 & [$-$0.055, 0.081] & -- \\
Indonesia & PROV & $d_{2017}$ & $-$0.003 & [$-$0.083, 0.077] & -- \\
Indonesia & PROV & $d_{2018}$ & $-$0.039 & [$-$0.112, 0.035] & -- \\
Indonesia & PROV & $d_{2019}$ & $-$0.085 & [$-$0.183, 0.013] & -- \\
Indonesia & PROV & $\log$(NTL\_km2) & 0.817 & [0.712, 0.923] & -- \\
Indonesia & PROV & $\log$(Km2) & $-$0.240 & [$-$0.434, $-$0.045] & -- \\
\hline\hline
\multicolumn{6}{l}{\textit{Note}: Confidence intervals are reported throughout (constructed from cluster-robust standard errors).}\\
\multicolumn{6}{l}{``Unit'' clusters by spatial unit; ``higher geo.'' clusters at the coarser spatial level where available.}\\
\multicolumn{6}{l}{The same robust inference is applied to all BM parameters: intercepts $\alpha_t$, year effects, $\beta$, and $\gamma$.}\\
\end{longtable}
\end{scriptsize}

%% file: tables/table_nlm_robust.tex
\begin{table}[!htbp]\centering
\scriptsize
\caption{Nonlinear Model estimates with $95\%$ robust confidence intervals.}
\label{tab:nlm_robust}
\begin{tabular}{lllccc}
\hline\hline
Country & Level & Parameter & Estimate & $95\%$ CI (unit) & $95\%$ CI (higher geo.) \\
\hline
Brazil & MUN & $\hat\beta$ & 0.940 & [0.922, 0.957] & [0.844, 1.035] \\
Brazil & MUN & $\hat\ell_0$ & 0.027 & [0.022, 0.033] & [0.003, 0.052] \\
Brazil & MUN & $\hat\gamma$ & $-$0.065 & [$-$0.081, $-$0.049] & [$-$0.142, 0.011] \\
Brazil & MICRO & $\hat\beta$ & 1.089 & [1.054, 1.124] & -- \\
Brazil & MICRO & $\hat\ell_0$ & 0.009 & [0.002, 0.016] & -- \\
Brazil & MICRO & $\hat\gamma$ & 0.077 & [0.029, 0.126] & -- \\
Brazil & MESO & $\hat\beta$ & 1.138 & [1.067, 1.208] & -- \\
Brazil & MESO & $\hat\ell_0$ & 0.006 & [$-$0.007, 0.019] & -- \\
Brazil & MESO & $\hat\gamma$ & 0.126 & [0.022, 0.230] & -- \\
Brazil & FED & $\hat\beta$ & 1.299 & [1.203, 1.396] & -- \\
Brazil & FED & $\hat\ell_0$ & 0.089 & [0.036, 0.143] & -- \\
Brazil & FED & $\hat\gamma$ & 0.188 & [0.085, 0.291] & -- \\
Italy & MUN & $\hat\beta$ & 1.189 & [1.167, 1.212] & [1.126, 1.253] \\
Italy & MUN & $\hat\ell_0$ & 1.182 & [0.983, 1.381] & [0.331, 2.034] \\
Italy & MUN & $\hat\gamma$ & $-$0.033 & [$-$0.045, $-$0.020] & [$-$0.068, 0.002] \\
Italy & LLA & $\hat\beta$ & 1.312 & [1.204, 1.419] & -- \\
Italy & LLA & $\hat\ell_0$ & 2.867 & [1.611, 4.124] & -- \\
Italy & LLA & $\hat\gamma$ & 0.061 & [0.004, 0.118] & -- \\
Italy & NUTS3 & $\hat\beta$ & 1.538 & [1.225, 1.851] & -- \\
Italy & NUTS3 & $\hat\ell_0$ & 7.202 & [1.131, 13.273] & -- \\
Italy & NUTS3 & $\hat\gamma$ & $-$0.031 & [$-$0.133, 0.072] & -- \\
Italy & NUTS2 & $\hat\beta$ & 3.789 & [$-$6.466, 14.044] & -- \\
Italy & NUTS2 & $\hat\ell_0$ & 36.889 & [$-$118.032, 191.809] & -- \\
Italy & NUTS2 & $\hat\gamma$ & $-$0.052 & [$-$0.355, 0.251] & -- \\
USA & COUNTY & $\hat\beta$ & 1.104 & [1.079, 1.129] & [1.048, 1.160] \\
USA & COUNTY & $\hat\ell_0$ & 0.137 & [0.106, 0.167] & [0.068, 0.205] \\
USA & COUNTY & $\hat\gamma$ & $-$0.113 & [$-$0.180, $-$0.045] & [$-$0.332, 0.107] \\
USA & CZ & $\hat\beta$ & 1.066 & [1.019, 1.113] & -- \\
USA & CZ & $\hat\ell_0$ & 0.049 & [0.019, 0.078] & -- \\
USA & CZ & $\hat\gamma$ & $-$0.181 & [$-$0.310, $-$0.052] & -- \\
USA & STATE & $\hat\beta$ & 0.896 & [0.675, 1.118] & -- \\
USA & STATE & $\hat\ell_0$ & $-$0.101 & [$-$0.338, 0.136] & -- \\
USA & STATE & $\hat\gamma$ & $-$0.313 & [$-$0.551, $-$0.076] & -- \\
Kenya & COUNTY & $\hat\beta$ & 0.466 & [0.366, 0.566] & -- \\
Kenya & COUNTY & $\hat\ell_0$ & 0.001 & [$-$0.002, 0.003] & -- \\
Kenya & COUNTY & $\hat\gamma$ & $-$0.780 & [$-$0.911, $-$0.648] & -- \\
Indonesia & KAB/KOTA & $\hat\beta$ & 0.664 & [0.612, 0.717] & [0.577, 0.752] \\
Indonesia & KAB/KOTA & $\hat\ell_0$ & 0.008 & [$-$0.000, 0.016] & [$-$0.006, 0.022] \\
Indonesia & KAB/KOTA & $\hat\gamma$ & $-$0.367 & [$-$0.431, $-$0.304] & [$-$0.467, $-$0.267] \\
Indonesia & PROV & $\hat\beta$ & 0.988 & [0.837, 1.139] & -- \\
Indonesia & PROV & $\hat\ell_0$ & 0.101 & [0.007, 0.195] & -- \\
Indonesia & PROV & $\hat\gamma$ & $-$0.193 & [$-$0.323, $-$0.063] & -- \\
\hline\hline
\multicolumn{6}{l}{\textit{Note}: NLM parameters of Eq.~(\ref{eq:lightOnIncomeQuadratic}): $\beta$ shifted-log slope, $\ell_0$ shift, $\gamma$ area coefficient.}\\
\multicolumn{6}{l}{$95\%$ CIs from cluster-robust standard errors; ``unit'' clusters by spatial unit, ``higher geo.'' at the coarser level.}\\
\end{tabular}
\end{table}

%% file: tables/table_beta_tost.tex
\begin{table}[!htbp]\centering
\small
\caption{Equivalence of the NTL elasticity to unity. Two one-sided tests (TOST) of $H_0:\beta=1$ against the equivalence range $(1-\delta,1+\delta)$, at tolerance $\delta=10\%$ and $\delta=20\%$. ``Yes'' means the $90\%$ confidence interval lies entirely inside the equivalence range, so $\beta$ is statistically indistinguishable from one up to the stated tolerance at the $5\%$ level; ``No'' means equivalence cannot be concluded. The test is reported under both unit clustering and the more conservative higher-geographic clustering (defined only at the finest level of each country).}
\label{tab:beta_tost}
\begin{tabular}{llc cc cc}
\hline\hline
 & & & \multicolumn{2}{c}{Unit cluster} & \multicolumn{2}{c}{Higher-geo.\ cluster}\\
\cline{4-5}\cline{6-7}
Country & Level & $\hat\beta$ & $\delta=10\%$ & $\delta=20\%$ & $\delta=10\%$ & $\delta=20\%$ \\
\hline
Brazil & MUN & 0.872 & No & Yes & No & Yes \\
Brazil & MICRO & 1.056 & Yes & Yes & -- & -- \\
Brazil & MESO & 1.103 & No & Yes & -- & -- \\
Brazil & FED & 1.129 & No & No & -- & -- \\
Italy & MUN & 1.010 & Yes & Yes & Yes & Yes \\
Italy & LLA & 0.990 & Yes & Yes & -- & -- \\
Italy & NUTS3 & 1.062 & No & Yes & -- & -- \\
Italy & NUTS2 & 0.906 & No & No & -- & -- \\
USA & COUNTY & 0.956 & Yes & Yes & Yes & Yes \\
USA & CZ & 0.983 & Yes & Yes & -- & -- \\
USA & STATE & 0.935 & No & Yes & -- & -- \\
Kenya & COUNTY & 0.459 & No & No & -- & -- \\
Indonesia & KAB/KOTA & 0.611 & No & No & No & No \\
Indonesia & PROV & 0.817 & No & No & -- & -- \\
\hline\hline
\multicolumn{7}{l}{\textit{Note}: TOST at the $5\%$ level. ``--'' = higher-geographic cluster not defined at that level.}\\
\end{tabular}
\end{table}